\renewcommand\footnotetextcopyrightpermission[1]{} 
\tikzset{>=stealth}
\crefname{figure}{Figure}{Figure}
\theoremstyle{plain}
\newtheorem{remark}{Remark}
\DeclareMathAlphabet{\mathpzc}{OT1}{pzc}{m}{it}
\newcommand{\Nats}{\mathbb{N}}
\newcommand{\ov}{\overline}
\newcommand{\wh}{\widehat}
\newcommand{\Project}{\downharpoonright}
\newcommand{\True}{\mathsf{True}}
\newcommand{\False}{\mathsf{False}}
\newcommand{\Path}{\rightsquigarrow}
\newcommand{\DPath}[1]{\mathrel{\stackon[1pt]{$\rightsquigarrow$}{$\scriptscriptstyle#1$}}}
\newcommand{\DTo}[1]{\xrightarrow{#1}}
\newcommand{\DODOPath}{\DPath{\Dyck_1\odot \Dyck_1}}
\newcommand{\DKDOPath}{\DPath{\Dyck_k\odot \Dyck_1}}
\newcommand{\DKDKPath}{\DPath{\Dyck_k\odot \Dyck_k}}
\newcommand{\DKPath}{\DPath{\Dyck_k}}
\newcommand{\DOPath}{\DPath{\Dyck_1}}
\newcommand{\AlgoDKDOne}{\operatorname{Algo-\Dyck_k\odot \Dyck_1}}
\newcommand{\AlgoDODOne}{\operatorname{Algo-\Dyck_1\odot \Dyck_1}}
\newcommand{\NP}{\operatorname{NP}}
\newcommand{\NL}{\operatorname{NL}}
\newcommand{\NPH}{\operatorname{NP-hard}}
\newcommand{\Label}{\lambda}
\newcommand{\Paragraph}[1]{\smallskip\noindent{\bf #1}}
\newcommand{\SubParagraph}[1]{\smallskip\noindent{\em #1}}
\newcommand{\Alphabet}{\Sigma}
\newcommand{\Open}{\operatorname{Open}}
\newcommand{\Close}{\operatorname{Close}}
\newcommand{\OpenNonTerminal}{\mathcal{A}}
\newcommand{\CloseNonTerminal}{\ov{\mathcal{A}}}
\newcommand{\StartNonTerminal}{\mathcal{S}}
\newcommand{\OV}{\operatorname{OV}}
\newcommand{\StackHeight}{\operatorname{SH}}
\newcommand{\MaxStackHeight}{\operatorname{MaxSH}}
\newcommand{\MaxCounter}{\operatorname{MaxCnt}}
\newcommand{\CounterIndex}{\operatorname{CntInd}}
\newcommand{\Prefixes}{\operatorname{Pref}}
\newcommand{\Dyck}{\mathcal{D}}
\newcommand{\Language}{\mathcal{L}}
\newcommand{\Grammar}{\mathcal{G}}
\newcommand{\Produces}{\vdash}
\newcommand{\PTime}{\operatorname{PTime}}
\newcommand{\DUnion}{\uplus}
\newcommand{\Clause}{\mathcal{C}}
\newcommand{\Counter}{\operatorname{Cnt}}
\newcommand{\Stack}{\operatorname{Stk}}
\newcommand\numberthis{\addtocounter{equation}{1}\tag{\theequation}}
\preto\tabular{\setcounter{magicrownumbers}{0}}
\newcounter{magicrownumbers}
\definecolor{mybluecolor}{rgb}{0.2549019607843137, 0.2117647058823529, 0.9823529411764706}
\def \darkred {black!20!red}
\def \darkgreen {black!30!green}
\def \darkorange{orange!95!black!100}
\def \lightorange{orange!95!black!40}
\def\ColorOne{mybluecolor}
\def\ColorTwo{\darkorange}
\def\ColorTwoLight{\lightorange}
\newcommand{\MyColorOne}[1]{\textcolor{\ColorOne}{#1}}
\newcommand{\MyColorTwo}[1]{\textcolor{\ColorTwo}{#1}}
\newcommand*{\centerfloat}{%
\parindent \z@
\leftskip \z@ \@plus 1fil \@minus \textwidth
\rightskip\leftskip
\parfillskip \z@skip}
\definecolor{mGreen}{rgb}{0,0.6,0}
\definecolor{mGray}{rgb}{0.5,0.5,0.5}
\definecolor{mPurple}{rgb}{0.58,0,0.82}
\definecolor{backgroundColour}{rgb}{0.95,0.95,0.92}
\lstdefinestyle{CStyle}{
tabsize = 2, 
showstringspaces = false, 
backgroundcolor=\color{backgroundColour},   
numbers = left, 
commentstyle = \color{green}, 
keywordstyle = \color{blue}, 
stringstyle = \color{red}, 
rulecolor = \color{black}, 
basicstyle = \small \ttfamily , 
breaklines = true, 
numberstyle = \tiny,
language=Java
}
\newcounter{listtotal}\newcounter{listcntr}%
\NewDocumentCommand{\names}{o}{%
  \setcounter{listtotal}{0}\setcounter{listcntr}{0}%
  \renewcommand*{\do}[1]{\stepcounter{listtotal}}%
  \expandafter\docsvlist\expandafter{\namesarray}%
  \IfNoValueTF{#1}
    {\namesarray}
    {
     \renewcommand*{\do}[1]{\stepcounter{listcntr}\ifnum\value{listcntr}=#1\relax##1\fi}%
     \expandafter\docsvlist\expandafter{\namesarray}}%
}
\g@addto@macro\bfseries{\boldmath}
\begin{document}

\title{The Decidability and Complexity of Interleaved Bidirected Dyck Reachability}

\author{Adam Husted Kjelstrøm}
\affiliation{
\institution{Aarhus University}            
\streetaddress{Aabogade 34}
\city{Aarhus}
\postcode{8200}
\country{Denmark}                    
}
\email{au640702@post.au.dk}          

\author{Andreas Pavlogiannis}
\affiliation{
\institution{Aarhus University}            
\streetaddress{Aabogade 34}
\city{Aarhus}
\postcode{8200}
\country{Denmark}                    
}
\email{pavlogiannis@cs.au.dk}          

 \begin{abstract}
Dyck reachability is the standard formulation of a large domain of static analyses, as it achieves the sweet spot between precision and efficiency, and has thus been studied extensively.
\emph{Interleaved} Dyck reachability (denoted $\Dyck_k\odot \Dyck_k$) uses two Dyck languages for increased precision (e.g., context and field sensitivity) but is well-known to be undecidable.
As many static analyses yield a certain type of \emph{bidirected} graphs,
they give rise to \emph{interleaved bidirected Dyck reachability} problems.
Although these problems have seen numerous applications, their decidability and complexity has largely remained open.
In a recent work, Li \textit{et al}. made the first steps in this direction, showing that
(i)~$\Dyck_1\odot \Dyck_1$ reachability (i.e., when both Dyck languages are over a single parenthesis and act as counters) is computable in $O(n^7)$ time, while
(ii)~$\Dyck_k\odot \Dyck_k$ reachability is $\NPH$.
However, despite this recent progress, most natural questions about this intricate problem are open.

In this work we address the decidability and complexity of all variants of interleaved bidirected Dyck reachability.
First, we show that $\Dyck_1\odot \Dyck_1$ reachability can be computed in $O(n^3\cdot \alpha(n))$ time,
significantly improving over the existing $O(n^7)$ bound.
Second, we show that $\Dyck_k\odot \Dyck_1$ reachability (i.e., when one language acts as a counter) is decidable, in contrast to the non-bidirected case where decidability is open.
We further consider $\Dyck_k\odot \Dyck_1$ reachability where the counter remains linearly bounded.
Our third result shows that this bounded variant can be solved in $O(n^2\cdot \alpha(n))$ time,
while our fourth result shows that the problem has a (conditional) quadratic lower bound, and thus our upper bound is essentially optimal.
Fifth, we show that full $\Dyck_k\odot \Dyck_k$ reachability is undecidable.
This improves the recent $\NP$-hardness lower-bound, and shows that the problem is equivalent to the non-bidirected case.
Our experiments on standard benchmarks show that the new algorithms are very fast in practice, offering many orders-of-magnitude speedups over previous methods.
\end{abstract}

\begin{CCSXML}
<ccs2012>
<concept>
<concept_id>10011007.10011074.10011099</concept_id>
<concept_desc>Software and its engineering~Software verification and validation</concept_desc>
<concept_significance>500</concept_significance>
</concept>
<concept>
<concept_id>10003752.10010070</concept_id>
<concept_desc>Theory of computation~Theory and algorithms for application domains</concept_desc>
<concept_significance>300</concept_significance>
</concept>
<concept>
<concept_id>10003752.10010124.10010138.10010143</concept_id>
<concept_desc>Theory of computation~Program analysis</concept_desc>
<concept_significance>300</concept_significance>
</concept>
</ccs2012>
\end{CCSXML}

\ccsdesc[500]{Software and its engineering~Software verification and validation}
\ccsdesc[300]{Theory of computation~Theory and algorithms for application domains}
\ccsdesc[300]{Theory of computation~Program analysis}

\keywords{static analysis, CFL/Dyck reachability, bidirected graphs, complexity}  

\maketitle

\section{Introduction}\label{sec:intro}

Static analyses are one of the most common approaches to program analysis.
They offer tunable approximations to program behavior, by representing the set of possible program executions in the semantics of a simpler model (e.g., a graph, or a set of constraints) that is amenable to efficient analysis.
Although the model typically over-approximates program behavior, and may thus contain unrealizable executions,
it can be used to make sound conclusions of correctness, and even raise warnings of potential erroneous behavior.
In essence, this approach casts the semantic question of program correctness to an algorithmic question about the model (e.g., ``determine the existence of a path in a graph'', or ``find the least fixpoint to a system of constraints'').

\Paragraph{Dyck reachability.}
The standard formalism of a plethora of static analyses is via language graph reachability~\cite{Reps97}.
Informally the nodes of a graph represent various program segments (e.g., variables), 
while edges capture relationships between those segments, such as program flow or data dependencies.
In order to refine such relationships, the edges are annotated with letters of some alphabet.
This formalism reduces the analysis question to a reachability question between nodes in the graph,
as witnessed by paths whose labels along the edges produce a string that belongs to a language $\Language$.
Although $\Language$ varies per application, it is almost always a form a visibly pushdown language~\cite{Alur2004},
yielding the problem commonly known as \emph{CFL/Dyck reachability}.

The Dyck reachability problem (denoted $\Dyck_k$) has applications to a very wide range of static analyses,
such as interprocedural data-flow analysis~\cite{Reps95}, slicing~\cite{Reps94}, 
shape analysis~\cite{Reps1995b}, impact analysis~\cite{Arnold96}, type-based flow analysis~\cite{Rehof01}, taint analysis~\cite{Huang2015},
data-dependence analysis~\cite{Tang2017},
alias/points-to analysis~\cite{Lhotak06,Zheng2008,Xu09}, and many others.
In practice, widely-used large-scale analysis tools, such as Wala~\cite{Wala} and Soot~\cite{Bodden12}, 
equip Dyck-reachability techniques to perform the analysis.
In all such cases, the balanced-parenthesis property of Dyck languages is ideal for expressing sensitivity of the analysis
in matching calling contexts, matching field accesses of composite objects, matching pointer references and dereferences etc.
This sensitivity improves precision in the obtained results and drastically reduces false positives.

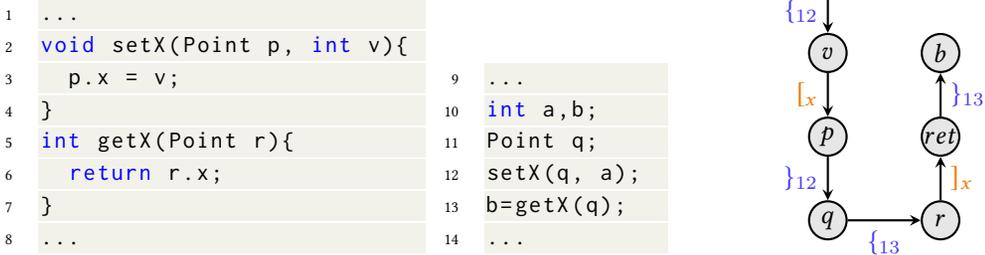
\begin{figure}
\begin{subfigure}[b]{0.37\textwidth}
\centering
\begin{lstlisting}[style=CStyle]
...
void setX(Point p, int v){
  p.x = v;    
}
int getX(Point r){
  return r.x;    
}
...
\end{lstlisting}
\end{subfigure}
\qquad
\begin{subfigure}[b]{0.175\textwidth}
\centering
\begin{lstlisting}[style=CStyle,firstnumber=9]
...
int a,b;
Point q;
setX(q, a);
b=getX(q);
...
\end{lstlisting}
\end{subfigure}
\qquad
\begin{subfigure}[b]{0.3\textwidth}
\newcommand{\xtstep}{0.9}
\newcommand{\ytstep}{0.5}
\newcommand{\xstep}{1.5}
\newcommand{\ystep}{1.1}
\centering
\begin{tikzpicture}[thick,
pre/.style={<-, thick, shorten >=0cm,shorten <=0cm},
post/.style={->, thick, shorten >=-0cm,shorten <=-0cm},
seqtrace/.style={->, line width=2},
postpath/.style={->, thick, decorate, decoration={zigzag,amplitude=1pt,segment length=2mm,pre=lineto,pre length=2pt, post=lineto, post length=4pt}},
node/.style={thick, draw=black, circle, inner sep = 0, minimum size=5mm, fill=gray!20},
]

\begin{scope}[shift={(0,0)}]

\node[node] (a) at (0*\xstep, 0*\ystep) {$a$};
\node[node] (v) at (0*\xstep, -1*\ystep) {$v$};
\node[node] (p) at (0*\xstep, -2*\ystep) {$p$};
\node[node] (q) at (0*\xstep, -3*\ystep) {$q$};
\node[node] (r) at (1*\xstep, -3*\ystep) {$r$};
\node[node] (ret) at (1*\xstep, -2*\ystep) {$ret$};
\node[node] (b) at (1*\xstep, -1*\ystep) {$b$};

\draw[post] (a) to node[left]{\MyColorOne{$\{_{12}$}} (v);
\draw[post] (v) to node[left]{\MyColorTwo{$[_{x}$}} (p);
\draw[post] (p) to node[left]{\MyColorOne{$\}_{12}$}} (q);
\draw[post] (q) to node[below]{\MyColorOne{$\{_{13}$}} (r);
\draw[post] (r) to node[right]{\MyColorTwo{$]_{x}$}} (ret);
\draw[post] (ret) to node[right]{\MyColorOne{$\}_{13}$}} (b);

\end{scope}

\end{tikzpicture}
\end{subfigure}
\caption{
\emph{(Left):} A program on which to perform context-sensitive and field-sensitive alias analysis.
\emph{(Right):} An interleaved Dyck graph where the curly braces (blue) model context sensitivity and the square brackets (orange) model field sensitivity.
The path $a\Path b$ produces two interleaved strings, \MyColorOne{$\{_{12}\}_{12}\{_{13}\}_{13}$} and \MyColorTwo{$[_{x}]_{x}$}.
As both strings are well-balanced, the path is a valid witness and thus $b$ may alias $a$.
}
\label{fig:example}
\end{figure}

\Paragraph{Interleaved Dyck reachability and variants.}
When two types of sensitivity are in place simultaneously,
the underlying graph problem is lifted to \emph{interleaved} Dyck reachability over two Dyck languages (denoted $\Dyck_k\odot \Dyck_k$).
Intuitively, each language acts independently over its own alphabet to ensure sensitivity in one aspect (e.g., context vs field sensitivity).
This time, a path is accepted as a reachability witness iff the produced string wrt to each alphabet belongs to the respective language (see \cref{fig:example} for an illustration).
Unfortunately, interleaved Dyck reachability is well-known to be undecidable~\cite{Reps00}.
Nevertheless, the importance of the problem has lead to the development of various approximations that often work well in practice.
We refer to \cref{sec:related_work} for relevant literature in this area.

When a Dyck language is over a single parenthesis, the underlying stack acts as a counter.
This gives rise to variants of interleaved Dyck reachability, depending on whether one ($\Dyck_k\odot \Dyck_1$) or both ($\Dyck_1\odot \Dyck_1$) languages are over a single parenthesis.
Although these are less expressive models, they have rich modeling power and computational appeal.
For a $\Dyck_k\odot \Dyck_1$  example, $\Dyck_1$ can express that a queue is dequeued as many times as it is enqueued, and in the right order,
while $\Dyck_k$ is used as before to make the analysis context sensitive.
Similarly, $\Dyck_1\odot \Dyck_1$ can be used to identify an execution in which the program properly uses two interleaved but independent reentrant locks.

\Paragraph{Bidirected graphs.}
A Dyck graph is bidirected if every edge is present in both directions, and the two mirrored edges have complementary labels.
That is, $a\DTo{}b$ opens a parenthesis iff $b\DTo{}a$ closes the same parenthesis.
Bidirectedness turns reachability into an equivalence relation, much like the case of plain undirected graphs.
This happens because every path produces the same string when traversed backwards, and thus either both directions witness reachability or none does.
From a semantics perspective, bidirectedness is a standard approach to handle mutable heap data~\cite{Sridharan2006,Xu09,Lu2019,Zhang2017}
-- though it can sometimes be relaxed for read-only accesses~\cite{Milanova2020},
and the de-facto formulation of demand-driven points-to analyses~\cite{Sridharan2005,Zheng2008,Shang2012,Yan11,Vedurada2019}.
Bidirectedness is also used for CFL-reachability formulations of pointer analysis~\cite{Reps97},
and has also been used for simplifying the input graph~\cite{Li2020}.
Thus in all these cases, the analysis yields the problem of \emph{interleaved bidirected Dyck reachability}.

\Paragraph{Open questions.}
Given the many applications of interleaved bidirected Dyck reachability, 
it is surprising that very little has been known about its decidability and complexity.
This gap is even more pronounced if contrasted to the rich literature on the complexity of Dyck/CFL reachability (see \cref{sec:related_work}).
The recent work of~\cite{Li2021} makes an important step in this direction, by proving an upper bound of $O(n^7)$ for $\Dyck_1\odot \Dyck_1$, and also showing  that the $\Dyck_k\odot \Dyck_k$ case is at least $\NP$-hard.
However, most of the fundamental questions remain unanswered, such as the following.
\begin{compactenum}
\item Although the $O(n^7)$ bound is polynomial, it remains prohibitively large.
Is there a faster algorithm, e.g., one that operates in quadratic/cubic time that is common in static analyses?
\item What is the decidability and complexity for $\Dyck_k\odot \Dyck_1$?
As $\Dyck_k\odot \Dyck_1$ is more expressive than $\Dyck_1\odot \Dyck_1$,
it leads to better precision in static analyses, and thus it is very well motivated.
Moreover, the non-bidirected case is also known as pushdown vector addition systems, 
where the decidability of reachability has been open (see \cref{sec:related_work}).
Is bidirectedness sufficient to show decidability?
\item Given the undecidability of $\Dyck_k\odot \Dyck_k$ on general graphs,
does bidirectedness make the problem decidable?

\end{compactenum}

This work delivers several new results on the decidability and complexity of all variants of interleaved bidirected Dyck reachability, and paints the rich landscape of the problem.

\subsection{Our contributions}\label{subsec:contributions}

In this section we state the main results of this paper, and put them in context with regards to existing literature (see \cref{tab:results} for a summary).
We consider as input a bidirected interleaved Dyck graph $G$ with $n$ nodes.

\Paragraph{Bidirected $\Dyck_1\odot \Dyck_1$ reachability.}
We start with the case of bidirected $\Dyck_1\odot \Dyck_1$ reachability.
The non-bidirected case falls into the class of vector addition systems with states.
In two dimensions, the problem is known to be in $\NL$ (and thus in $\PTime$)~\cite{Englert2016}.
The recent work of~\cite{Li2021} established an $O(n^7)$ bound for the bidirected case.
As our first theorem shows, the problem admits a must faster solution, namely, in essentially cubic time,
which is a common complexity bound in static analyses.

\begin{table}
\centering
\renewcommand{\arraystretch}{1.5}
\caption{
Summary of our results for interleaved bidirected Dyck reachability.
}
\label{tab:results}
\begin{tabular}{|c||c|c|c|c|}
\hline
& $\Dyck_1\odot \Dyck_1$ & $\Dyck_k\odot \Dyck_1$ & $\Dyck_k\odot \Dyck_1$ (bounded counter) & $\Dyck_k\odot \Dyck_k$\\
\hline
\hline
Upper Bound & $O(n^3\cdot \alpha(n))$ & Decidable & $O(n^2\cdot \alpha(n))$ & - \\
\hline
Lower Bound & - & - & $\OV$-hard & Undecidable \\
\hline
\end{tabular}
\end{table}

\smallskip
\begin{restatable}{theorem}{thmd1d1cubicupperbound}\label{thm:d1d1_upper_bounds}
Bidirected $\Dyck_1\odot \Dyck_1$ reachability can be computed in $O(n^3\cdot \alpha(n))$ time, where $\alpha(n)$ is the inverse Ackermann function.
\end{restatable}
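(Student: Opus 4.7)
The plan is to obtain the $O(n^3 \cdot \alpha(n))$ bound through a structural characterization of bidirected $\Dyck_1 \odot \Dyck_1$ paths that allows us to bypass the $O(n^7)$ chart of~\cite{Li2021}. The starting observation is that in a bidirected graph, each edge $u \to v$ labeled with an opener has a mirror $v \to u$ labeled with the matching closer, so any path has a canonical reverse, and any two paths meeting at a common node can be spliced with their mirrors to cancel counter imbalances. I would first work out a normal form for $\Dyck_1 \odot \Dyck_1$ witnesses: every valid path $u \Path v$ can be decomposed into a bounded number of \emph{atomic} segments, where each atom either (i) keeps both counters at zero throughout, or (ii) monotonically raises one counter while the other is balanced, before being cancelled by a symmetric atom later in the path. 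The bidirectedness of the graph is what makes this decomposition lossless, because any local counter imbalance in one language can be neutralized by traversing mirror edges without disturbing the other counter's balance.

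Next, I would precompute the \emph{balanced-closure} equivalence classes for each language in isolation using a union-find based algorithm: two nodes are in the same class of language $i$ iff they are connected by a bidirected path that is $\Dyck_1$-balanced in coordinate $i$. On a bidirected graph this reduces to a single-Dyck bidirected reachability instance, which is known to be solvable in $O(m \cdot \alpha(n))$ time by iteratively detecting balanced cycles and unifying them. This gives two union-find structures, $\mathsf{UF}_1$ and $\mathsf{UF}_2$, which serve as the combinatorial skeleton over which the two counters are reasoned about simultaneously.

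The third step is a Reps-style saturation that infers a set $R$ of \emph{summary pairs} $(u,v)$ certifying a valid $\Dyck_1\odot\Dyck_1$ path from $u$ to $v$. Seed $R$ with the trivial diagonal and with single-edge inferences, and then apply a small constant number of composition rules that splice two summary pairs sharing an endpoint, where the splice is legal whenever the counter increments at the join point either cancel directly or can be cancelled using the equivalence information from $\mathsf{UF}_1$ and $\mathsf{UF}_2$. Because each newly added pair depends on only the endpoints and the current union-find state (not path history), there are at most $O(n^2)$ such pairs, and each triggers $O(n)$ composition attempts, each of which performs a constant number of $\mathsf{UF}$ queries at amortized cost $O(\alpha(n))$; this delivers the target $O(n^3 \cdot \alpha(n))$ complexity.

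The main obstacle will be the \emph{completeness} direction of the characterization, i.e., proving that every valid $\Dyck_1\odot\Dyck_1$ path is derivable by the saturation. A direct induction on path length would fail because the interaction of the two counters can force arbitrarily long detours before a balance is reached. The plan is instead to induct on the sum of the two maximum counter heights along the path: using bidirectedness, I would locate a position where one counter reaches its global maximum, rewrite a suffix of the opening segment using its mirror to cancel the peak, and argue that the rewritten path is a splice of two strictly shorter summary pairs plus a balanced cycle captured by $\mathsf{UF}_i$. Making this rewriting argument tight, in particular ensuring that cancelling one counter's peak never forces the other counter below zero, is the subtle combinatorial core of the proof.
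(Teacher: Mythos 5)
Your proposal and the paper's proof diverge fundamentally, and the proposal has a genuine gap in its central mechanism. The paper's route is: (i) strengthen the shallow-paths lemma of Li et al.\ to a \emph{bounded-paths} lemma showing that, without loss of generality, a witness path keeps \emph{both} counters at most $18n^2+6n$; this is proven by a ``path deflation'' rearrangement that uses matching pairs and the bidirected reverse $\ov{P}_3$ of a sub-path to reduce the number of times the first counter attains its global maximum; (ii) build a product graph $G'$ with node set $V \times \{0,\ldots,c\}$ for $c=\Theta(n^2)$, flattening the second counter into an explicit component of the state, and (iii) run the nearly-linear-time bidirected $\Dyck_1$ reachability algorithm on $G'$, whose size is $\Theta(n^3)$ after sparsification. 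Your plan instead attempts a saturation over $O(n^2)$ summary pairs $(u,v)$ aided by per-coordinate union-find structures.

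The core problem with your plan is that a summary pair $(u,v)$ records only that some valid $\Dyck_1\odot\Dyck_1$ path exists between $u$ and $v$, but discovering a \emph{new} such pair requires reasoning about paths whose intermediate counter configurations are non-zero. The paper's Figure~3 shows a family of instances where every witness path forces both counters up to $\Theta(n^2)$ simultaneously; there is no decomposition of such a path into atoms that ``monotonically raise one counter while the other is balanced,'' because both counters are large and changing together in the middle of the path. So the normal-form claim in your first paragraph is false, and the saturation has no way to synthesize these paths from already-certified pairs. The union-find structures $\mathsf{UF}_1,\mathsf{UF}_2$ do not repair this: a path that is $\Dyck_1$-balanced in coordinate $1$ can be arbitrarily imbalanced in coordinate $2$, so querying $\mathsf{UF}_1$ at a splice point tells you nothing about whether the second counter stays non-negative. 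In effect you need to track the counter value explicitly, and once you do, the state space is $V\times\{0,\ldots,\Theta(n^2)\}$ --- which is exactly the paper's product construction, not a set of $O(n^2)$ pairs. You correctly flag the completeness direction as the ``subtle combinatorial core,'' but that core is precisely where the approach breaks: your intended induction on the sum of maximum counter heights would need to show that the maximum heights are bounded in the first place, which is the paper's Lemma~3.2, and that lemma is proven by an entirely different rearrangement argument (matching pairs found via the pigeonhole principle over $\Theta(n^2)$ counter values, then the deflation $Q=P_1\circ P_3\circ P_4\circ \ov{P}_3\circ P_2\circ P_3\circ P_5$) rather than by a saturation invariant.

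The cheapest fix is to abandon the summary-pair saturation entirely and adopt the paper's two-step structure: first prove a polynomial bound on both counters along some witness (you do already have the right intuition that bidirectedness lets you reverse a sub-path to cancel a peak --- that is exactly the $\ov{P}_3$ move in the paper --- but you need to find matching pairs that repeat both a node and a value of the \emph{other} counter, and the pigeonhole that produces them relies on the existing linear shallow-paths bound on the other counter), and then solve a single-$\Dyck_1$ bidirected reachability instance on the counter-flattened product graph. The $\alpha(n)$ factor comes solely from the known union-find-based single-$\Dyck_1$ algorithm applied to that $O(n^3)$-sized graph, not from interleaving union-find queries into a custom saturation.
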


We obtain \cref{thm:d1d1_upper_bounds} by proving a boundedness property on paths witnessing reachability.
In particular, we show that wlog, along every such path \emph{both counters} remain quadratically bounded.
This strengthens the shallow-path property of~\cite{Li2021} which states that at any time \emph{one of the two counters} is bounded (though the bound is linear).

We next turn our attention to more expressive variants of interleaved Dyck reachability.

\Paragraph{Bidirected $\Dyck_k\odot \Dyck_1$ reachability.}
Here we address bidirected $\Dyck_k\odot \Dyck_1$ reachability.
The non-bidirected case is also known as pushdown vector addition systems in one dimension~\cite{Leroux2015},
for which the decidability of reachability is open.
Here we first show that bidirectedness suffices to make the problem decidable.

\smallskip
\begin{restatable}{theorem}{}\label{thm:dkd1_decidable}
Bidirected $\Dyck_k\odot \Dyck_1$ reachability is decidable.
\end{restatable}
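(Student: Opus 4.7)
The plan is to reduce bidirected $\Dyck_k\odot\Dyck_1$-reachability to bidirected $\Dyck_k$-reachability on a finite product graph, which admits a polynomial-time algorithm. The reduction hinges on a \emph{counter boundedness} lemma: every $\Dyck_k\odot\Dyck_1$-reachability witness in $G$ can be replaced by one whose $\Dyck_1$-counter stays within a computable interval $[-f(n), f(n)]$. Once this is established, the problem becomes $\Dyck_k$-reachability in the product graph $G' = G \times \{-f(n), \ldots, f(n)\}$, where an edge $(u, c) \to (w, c')$ exists iff the corresponding edge in $G$ is either a $\Dyck_k$-edge (with $c = c'$) or a $\Dyck_1$-edge that updates $c$ into $c'$. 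The construction preserves bidirectedness, so the decidability of bidirected $\Dyck_k$-reachability implies decidability of the original problem.

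To prove the boundedness lemma, I would exploit bidirectedness to cancel long counter excursions. Given a witness $\pi$ along which the counter grows beyond $f(n)$, the goal is to locate a pair of positions $i < j$ whose abstracted states agree, and then insert between them a reverse copy of $\pi[i..j]$; bidirectedness guarantees that concatenating any sub-path with its reverse yields a $\Dyck_k$-reducible string, so global $\Dyck_k$-balance is preserved, while the counter offset can be made to cancel. Iterating this shortening yields a normal-form witness respecting the bound.

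The main obstacle is that the $\Dyck_k$-stack is unbounded, so the abstract state at a position does not lie in a finite set, and naive pigeonhole on stack contents fails. I would address this via a well-quasi-ordering on stack contents (for instance, Higman's lemma applied to the finite stack alphabet), which guarantees that any sufficiently long run contains two comparable configurations suitable for the splicing step above. A secondary challenge is preserving $\Dyck_1$-prefix validity (the counter never going negative) along with $\Dyck_k$-prefix validity, which I would handle by choosing the splicing positions to lie on an appropriately monotone segment of the counter profile and below the minimal stack height of the surrounding path, so the resulting witness remains well-formed in both alphabets.
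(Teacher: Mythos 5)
Your plan takes a genuinely different route from the paper, but it has gaps that I don't see how to close.

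The paper's proof goes through \emph{coverability} for pushdown VAS (the decidability of which is a nontrivial result of Leroux et al.): it first decides whether $u$ covers $(v,1)$ and $v$ covers $(u,1)$; the witness paths $P_u, P_v$ it extracts are used to build cyclic paths $F_u, F_v$ that recharge the counter with an empty stack. Crucially, the paper then bounds the \emph{stack height}, not the counter, by $\max(\zeta, 2n^2)$ where $\zeta = \max(\MaxStackHeight(P_u), \MaxStackHeight(P_v))$ depends on those coverability witnesses (and thus on the graph, not just $n$). Only after the stack height is bounded is the counter bounded \emph{in terms of that stack-height bound} ($n^2 k^{2\delta}$, with $\delta$ itself not a fixed function of $n$), and the algorithm brute-forces over paths of bounded length rather than building a product graph. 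In contrast, you assert an unconditional counter bound $f(n)$ depending only on $n$. That is a strictly stronger claim than anything the paper establishes, and it is far from clear it holds: if such a computable $f$ existed, the paper's Theorem~\ref{thm:dkd1_quadratic_upper_bound} would essentially yield a single $O((n f(n))^2 \alpha(n))$ algorithm for the full problem, yet the authors only claim decidability and go through the coverability oracle precisely because they could not bound the counter a priori.

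Your argument for the boundedness lemma also has a technical hole. You invoke a well-quasi-order on stack contents via Higman's lemma to find comparable configurations and then splice. But the subword ordering is not compatible with $\Dyck_k$ balance: if position $i$ has stack $s_i$ and position $j$ has stack $s_j$ with $s_i$ a (proper) subword of $s_j$, the suffix $\pi[j..]$ is designed to pop exactly $s_j$; there is no way to continue it after $s_i$, and inserting $\overline{\pi[i..j]}$ does not repair this because that insertion would simultaneously cancel the stack effect and the counter effect, not just one of them (and it is not even clear how to insert it so the endpoints line up). What the paper does instead is a careful three-way rearrangement ($P_1 \circ P_2 \circ K \circ P_4 \circ \ov{K} \circ P_3 \circ P_5$) using matching pairs at the \emph{same node and stack height}, plus a shortest $\Dyck_k$-path $K$ with stack height $\le n^2$, and then uses the coverability-derived cyclic paths $F_u, F_v$ to make the counter non-negative throughout. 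Without an analogue of $F_u, F_v$ (i.e., without something like coverability), you have no way to recharge the counter when your splicing makes it go negative, and you explicitly need it to stay $\ge 0$ since $\Dyck_1$ has no $-f(n)$ values. To salvage your approach you would need (a) a genuinely proved, $n$-computable counter bound, and (b) a splicing operation that preserves validity in both alphabets; neither is supplied by the Higman-based sketch.
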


%

We next focus on the problem under a form of witness bounding,
which is a standard technique for efficient static analyses (e.g., in terms of context bounding~\cite{Shivers1991,ESOP17}, or field-limiting~\cite{Deutsch1994}).
In all these cases, the respective bound is not guaranteed a-priori, but the analysis guarantees soundness up to that bound, which is a very useful property.
Here, we consider the case where reachability is witnessed by paths on which the counter stays linearly bounded, while the stack has no restrictions.
We establish the following theorem.

\smallskip
\begin{restatable}{theorem}{thmdkd1quadraticupperbound}\label{thm:dkd1_quadratic_upper_bound}
Bidirected $\Dyck_k\odot \Dyck_1$ reachability with $O(n)$-bounded counters can be computed in $O(n^2\cdot \alpha(n))$ time, where $\alpha(n)$ is the inverse Ackermann function.
\end{restatable}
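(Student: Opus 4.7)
The plan is to reduce bounded-counter bidirected $\Dyck_k \odot \Dyck_1$ reachability to bidirected $\Dyck_k$ reachability on an auxiliary product graph that materializes the counter value as part of the state, and then to invoke a near-linear-time algorithm for the latter.

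Product construction. Let $Cn$ denote the linear counter bound and write $V$ for the node set of $G$. I would build a product graph $G'$ with node set $V \times \{0, 1, \ldots, Cn\}$, of total size $O(n^2)$. Every edge $u \DTo{\sigma, \tau} v$ of $G$, with $\Dyck_k$-symbol $\sigma$ and counter action $\tau \in \{+1, -1, 0\}$, is lifted to an edge $(u, c) \DTo{\sigma} (v, c + \tau)$ in $G'$ for each valid counter level $c$ (those with both $c$ and $c+\tau$ in $\{0, \ldots, Cn\}$). Bidirectedness is inherited: the reverse of each lifted edge corresponds to the bidirected partner of the original edge with the counter action reversed. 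A straightforward induction on path length then shows that a path in $G$ from $u$ to $v$ is a valid witness whose counter stays in $[0, Cn]$ if and only if its lift in $G'$ is a $\Dyck_k$-balanced path from $(u, 0)$ to $(v, 0)$. Thus the problem reduces to all-pairs bidirected $\Dyck_k$ reachability on $G'$, restricted to counter-zero nodes.

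Algorithm. I would invoke the known near-linear-time bidirected $\Dyck_k$ reachability algorithm on $G'$; such algorithms rely on a union-find saturation that merges nodes whenever a matched open/close parenthesis pair can be connected via already-established $\Dyck_k$-equivalences, and their complexity is essentially linear in the input size up to an inverse Ackermann factor. Since $|V(G')| = O(n^2)$, this yields the target $O(n^2 \cdot \alpha(n))$ bound, provided the effective number of transitions of $G'$ that the algorithm inspects is also $O(n^2)$.

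The main obstacle is controlling the effective edge count processed on $G'$, since a naive lift of the $\Theta(n^2)$ edges of $G$ could produce $\Theta(n^3)$ transitions spread across counter layers. I would address this by exploiting the translational structure of the product: all edges lifted from a single edge of $G$ share the same $\Dyck_k$-label and differ only in the counter coordinate, so a single edge of $G$ can be represented once and equivalence updates propagated uniformly across counter levels. Working out this layered propagation carefully, together with arguing that each edge of $G$ contributes only $O(n)$ amortized union/find operations, is the delicate part of the proof and is what keeps the overall running time at $O(n^2 \cdot \alpha(n))$.
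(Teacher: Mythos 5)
Your reduction is exactly the paper's: flatten the counter into the node set to get a product graph $G'$ on $O(n^2)$ nodes, then invoke the near-linear bidirected $\Dyck_k$ reachability algorithm of Chatterjee et al. The correctness argument you sketch is fine. However, the complexity step has a genuine gap that you yourself flag but then handwave.

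You correctly observe that if $G$ has $\Theta(n^2)$ edges, the lifted graph $G'$ has $\Theta(n^3)$ edges, which would give $O(n^3 \cdot \alpha(n))$ time, not $O(n^2 \cdot \alpha(n))$. Your proposed fix — a ``layered propagation'' scheme exploiting the translational structure of the product, together with an amortized union/find bound per base edge — is not worked out and it is far from clear that it would succeed; the union-find structure in the $\Dyck_k$ algorithm does not factor cleanly across counter layers because merges at one layer change which parenthesis-matched moves become available at \emph{other} layers. The paper avoids the issue entirely with a much simpler observation, which you are missing: in a bidirected Dyck graph one may assume without loss of generality that $G$ is sparse, i.e.\ $|E| = O(|V|)$. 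The reason is purely structural: if a node $x$ has two outgoing edges $x \DTo{\ov{\alpha}} y$ and $x \DTo{\ov{\alpha}} z$ with the same closing label, then $z \DTo{\alpha} x \DTo{\ov{\alpha}} y$ is a balanced witness, so $y$ and $z$ are mutually reachable and can be merged. Repeating this merging process (which itself runs in near-linear time, again by Chatterjee et al.) yields a graph where every node has bounded out-degree. With this preprocessing, $G$ has $O(n)$ edges, hence $G'$ has $O(n^2)$ edges, and the target bound follows directly from the cited algorithm with no extra machinery. You should replace the layered-propagation idea with this sparsification step; as written, your proof is incomplete at precisely the point where the paper's argument closes.
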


Hence, restricting to a linear bound on the counter makes the problem efficiently solvable.
The next natural question is whether this restrictive setting admits even faster algorithms, e.g., can the problem be solved in $O(n)$ time?
We answer this question in negative.

\smallskip
\begin{restatable}{theorem}{thmdkd1quadraticupperbound}\label{thm:dkd1_quadratic_lower_bound}
For any fixed $\epsilon>0$,
bidirected $\Dyck_k\odot \Dyck_1$ reachability with $O(n)$-bounded counters cannot be solved in $O(n^{2-\epsilon})$ time under $\OV$.
\end{restatable}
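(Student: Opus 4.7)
The plan is to exhibit a fine-grained reduction from the Orthogonal Vectors ($\OV$) problem. Given an $\OV$ instance with sets $A = \{a_1, \ldots, a_n\}$ and $B = \{b_1, \ldots, b_n\}$ of $\{0,1\}^d$-vectors, where $d = \omega(\log n)$ and the hypothesis rules out any $O(n^{2-\epsilon})$-time solution, I will construct a bidirected $\Dyck_k \odot \Dyck_1$ graph $G$ of size $N = O(n \cdot d) = \Otilde(n)$ together with designated source and target nodes $s, t$, such that $s$ reaches $t$ via an interleaved Dyck path with $O(N)$-bounded counter iff some $a_i \in A$ is orthogonal to some $b_j \in B$. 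An $O(N^{2-\epsilon})$ algorithm for the graph problem would then decide $\OV$ in $O(n^{2-\epsilon'})$ time for any $\epsilon'<\epsilon$, after absorbing the polylog factor coming from $d = \operatorname{polylog}(n)$, contradicting the hypothesis.

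The gadgets are built as follows. For each $i \in [n]$, the ``push'' gadget $G_i$ is a directed path of length $d$; its $\ell$-th edge pushes the symbol $L_\ell$ if $a_i[\ell] = 1$ and $R_\ell$ if $a_i[\ell] = 0$, so a forward traversal encodes $a_i$ on the $\Dyck_k$ stack. Symmetrically, for each $j \in [n]$, the ``check'' gadget $H_j$ is a path of length $d$ that, in reverse coordinate order, pops one symbol per step: if $b_j[\ell] = 0$ two parallel edges allow popping either $L_\ell$ or $R_\ell$, while if $b_j[\ell] = 1$ only the $R_\ell$-popping edge is present. The key invariant is that consecutive forward traversals of $G_i$ and $H_j$ empty the stack iff no coordinate has $a_i[\ell] = b_j[\ell] = 1$, i.e., iff $\langle a_i, b_j \rangle = 0$. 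Globally, $s$ fans out via selection edges to the entries $u_i$ of the $G_i$'s, a hub $m$ collects their exits and dispatches to the entries $v_j$ of the $H_j$'s, and $t$ collects the exits of the $H_j$'s; the $\Dyck_1$ counter is used only to mark the ``phase'' of the traversal, with one increment on the $s$-side selection edges and one decrement on the $t$-side exit edges, so the counter stays in $\{0,1\}$ along the intended path and is trivially $O(N)$-bounded.

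The main obstacle will be the bidirectedness analysis, since every edge carries a reverse with complementary labels and paths may zigzag through the gadgets. I will argue that such zigzags cannot manufacture spurious reachability: a forward-then-backward traversal within a single gadget produces a matched push-pop sequence on the $\Dyck_k$ stack and cancels out, so the net stack effect of any $u_i$-to-$u_i'$ subpath is exactly the $a_i$-encoding (mutatis mutandis for $H_j$). Combined with the fact that, by construction, a gadget can be entered only at its designated endpoints, any valid $s$-to-$t$ path can be canonicalized into one that commits to a single pair $(i,j)$, whose validity then witnesses $\langle a_i, b_j\rangle = 0$. The technically delicate step will be ruling out ``stack-smuggling'' paths that mix pushes from $G_{i_1}$ with pops intended for $H_{j_2}$ across multiple hub traversals; I plan to address this by a potential argument that bounds, as a function of the counter value and the number of hub crossings, how much of a partial $a$-encoding can persist on the $\Dyck_k$ stack between hub passes, forcing any valid path to commit to a single vector pair.

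Finally, the complexity transfer is direct: $N = O(n d) = \Otilde(n)$, so an $O(N^{2-\epsilon})$ algorithm translates to an $O(n^{2-\epsilon'})$ $\OV$ solver under the polylog absorption, completing the conditional lower bound.
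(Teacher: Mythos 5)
Your high-level structure (reducing from $\OV$ via push-gadgets encoding the $a_i$'s onto the $\Dyck_k$ stack and check-gadgets popping against the $b_j$'s) matches the paper's, but the proposal misses the single technical idea that makes the reduction survive bidirectedness, and as written the construction is unsound.

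The paper's crucial device is to attach, to the pop-transition at coordinate $\ell$ of the check side, a \emph{binary-weight counter gadget} $H_{2^\ell}$ that increments the $\Dyck_1$ counter by $2^\ell$. Because the counter value when the path sits at coordinate $\ell$ is at most $\sum_{l<\ell}2^l = 2^\ell-1$, traversing that transition backwards (which would decrement by $2^\ell$) is impossible; the exponentially spaced weights therefore act as one-way streets. This is what prevents the path from ``rewriting'' the stack via forward-then-backward moves across parallel pop edges. Your proposal relegates the counter to a trivial phase marker in $\{0,1\}$ (increment leaving $s$, decrement entering $t$), so it contributes nothing to soundness. Your claimed invariant that ``a forward-then-backward traversal within a single gadget produces a matched push-pop sequence \ldots and cancels out (mutatis mutandis for $H_j$)'' is simply false for $H_j$: at a coordinate $\ell$ with $b_j[\ell]=0$ there are two parallel edges, and going forward on one and backward on the other replaces the top-of-stack symbol rather than cancelling.

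Here is a concrete false positive. Take $X=\{(1,1)\}$, $Y=\{(1,0),(0,1)\}$; no pair is orthogonal. In your graph, go $s\to u_1$ (counter $1$), traverse $G_1$ pushing $L_1 L_2$, reach the hub $m$, enter $H_1$ (for $b_1=(1,0)$). At coordinate $2$ we have $b_1[2]=0$, so pop $L_2$ forward, then take the parallel edge backward to push $R_2$; the stack is now $L_1 R_2$ and we are back at the hub. Enter $H_2$ (for $b_2=(0,1)$): coordinate $2$ has $b_2[2]=1$, pop $R_2$; coordinate $1$ has $b_2[1]=0$, pop $L_1$; exit to $t$ (counter $0$). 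This is a valid interleaved-Dyck witness with counter never exceeding $1$, yet no orthogonal pair exists. The ``potential argument'' you defer to cannot repair this: since the counter stays in $\{0,1\}$ it distinguishes nothing about the adversarial zigzags, and the number of hub crossings is unbounded. To fix the proof you need precisely the paper's mechanism: give each coordinate's alternate pop a weighted counter gadget so that the cumulative counter bound forbids the backward crossing, and then discharge the counter with a self-loop at $t$. (You should also re-examine the stated counter bound against the dimension $D$, since $\sum_\ell 2^\ell$ must still fit the promised $O(n)$ bound relative to the graph size $N=O(nD^2)$.)
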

Orthogonal Vectors ($\OV$) is a well-studied problem with a long-standing quadratic worst-case upper bound.
The corresponding hypothesis states that there is no sub-quadratic algorithm for the problem~\cite{Williams19}.
It is also known that the strong exponential time hypothesis (SETH) implies the Orthogonal Vectors hypothesis~\cite{Williams05}.
Thus, under \cref{thm:dkd1_quadratic_lower_bound}, the upper-bound of \cref{thm:dkd1_quadratic_upper_bound} is  (nearly) optimal.

\Paragraph{Bidirected $\Dyck_k\odot \Dyck_k$ reachability.}
Finally, we turn our attention to the general case of bidirected $\Dyck_k\odot \Dyck_k$ reachability.
The non-bidirected case is well-known to be undecidable~\cite{Reps00}, 
while it was recently shown~\cite{Li2021} that the bidirected case is $\NPH$, 
leaving its decidability open.
The following theorem resolves this open question.

\smallskip
\begin{restatable}{theorem}{thmdkdkquadraticupperbound}\label{thm:dkdk_undecidable}
Bidirected $\Dyck_k\odot \Dyck_k$ reachability is undecidable.
\end{restatable}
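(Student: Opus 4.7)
The plan is to reduce from the halting problem of reversible Turing machines, which is undecidable via Bennett's simulation of arbitrary TMs by reversible ones. Given a reversible TM $T$ and input $x$, the goal is to construct a bidirected $\Dyck_k\odot\Dyck_k$ graph $G$ with designated nodes $s$ and $t$ such that $T$ halts on $x$ iff $s\Path t$ in $G$.

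The encoding represents a TM configuration as a two-stack state: the tape to the left of the head is stored as the unmatched opening brackets in the first Dyck language, and the tape from the head to the right is stored analogously in the second language. Each TM control state becomes a graph node, and each TM transition is realised as a chain of single-operation \emph{sub-gadgets}, each performing exactly one push or one pop on one of the two stacks. The central design idea is a locking mechanism: a sub-gadget whose operational bracket lives in language $i$ is wrapped between a fresh opening bracket $[_\tau^{3-i}$ and its matching closer $]_\tau^{3-i}$ drawn from the \emph{other} Dyck alphabet. Since these lock brackets (i) reside in a language disjoint from the operational brackets of that sub-gadget and (ii) are unique to that sub-gadget, they create a matching obligation that can be discharged only by traversing the corresponding lock-closing edge of the same sub-gadget.

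Correctness then proceeds in two parts. If $T$ halts on $x$, the halting computation translates step by step into a sequence of full forward sub-gadget traversals, yielding a Dyck-balanced label that witnesses $s\Path t$. Conversely, if such a path exists in the bidirected graph, the lock mechanism forces the path to decompose into complete forward or complete backward sub-gadget traversals, modulo cancelling zigzags that contribute no net effect. Since $T$ is reversible, every configuration has at most one forward and one backward successor under its move relation, so the set of configurations forms a disjoint union of chains. The nodes $s$ and $t$ encode the initial and halting configurations respectively, and these lie on the same chain precisely when $T$ halts on $x$. Thus bidirected $\Dyck_k\odot\Dyck_k$ reachability decides the undecidable halting problem.

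The main obstacle I expect is formalising the atomicity lemma: arguing that no interleaving of forward and reverse edges can produce a Dyck-balanced label that does not decompose into complete sub-gadget traversals. This requires carefully tracking nested open locks from different sub-gadgets active simultaneously and invoking the LIFO matching discipline of Dyck languages together with the freshness of each lock bracket type, so as to rule out ``phantom'' reachability witnesses arising from partial executions that do not correspond to legitimate forward or backward computations of $T$.
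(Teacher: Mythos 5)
Your proposal takes a genuinely different route from the paper's. You reduce directly from the halting problem for reversible Turing machines, encoding a two-sided tape on the two Dyck stacks and enforcing step-atomicity with per-sub-gadget lock brackets drawn from the opposite alphabet. The paper instead reduces from the already-known undecidability of $\Dyck_k\odot\Dyck_k$ reachability on \emph{non-bidirected} graphs~\cite{Reps00}: given a non-bidirected instance $G$ with source $u$ and target $v$, it builds a bidirected $G'$ in which the witness path first ``guesses'' a candidate $u\Path v$ path of $G$ edge-by-edge (pushing each guessed edge on the first stack while simulating $G$'s second stack on its own second stack) and then ``verifies'' the guess by replaying it (popping edges from the first stack while simulating $G$'s first stack on its second stack); a single fresh separator letter $\nu$ between the two phases kills the spurious reverse traversals. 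That construction needs neither reversibility nor a TM-level atomicity invariant, and it inherits undecidability at essentially no cost; by contrast your approach re-derives undecidability from scratch and rests on a technically delicate lemma.

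That atomicity lemma is also the genuine gap. You correctly flag it, but as stated the locking mechanism does not obviously deliver what you want. First, for a three-edge sub-gadget whose internal nodes have degree two, a partial traversal can only ever zigzag and return to a boundary node, so atomicity already holds without the lock, which calls into question what obligation the lock is actually creating; conversely, if your construction \emph{does} rely on the lock for some sharing between sub-gadgets, the claim that the lock can be ``discharged only by traversing the corresponding lock-closing edge'' is false in a bidirected graph -- reversing the lock-opening edge also discharges it. Second, the lock brackets of a sub-gadget that operates on language $i$ live on the \emph{same} stack as the tape content for the other half of the tape, so during a zigzag these locks can interleave with tape symbols from unrelated TM steps; you would need to argue explicitly that the LIFO discipline plus freshness prevents such interleavings from producing a balanced phantom witness, and this argument is not sketched. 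Finally, several smaller but necessary pieces are missing: how the input $x$ is loaded onto the stacks from a single node $s$ with empty initial stacks, how the halting configuration's residual tape is erased so that $t$ can be reached with both stacks empty, and how the per-sub-gadget fresh bracket types are collapsed to a constant-size alphabet (the paper faces the same last issue and cites a standard encoding~\cite{Chistikov2021}). Until the atomicity lemma is proved and these loose ends are tied off, the argument is incomplete; the paper's guess-and-verify reduction sidesteps all of them.
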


In terms of practical implications, \cref{thm:dkdk_undecidable} establishes the undecidability of popular pointer and alias analyses as those in~\cite{Sridharan2005,Zheng2008,Shang2012,Yan11,Vedurada2019}.

\Paragraph{Experimental results.}
We have implemented our algorithms for $\Dyck_1\odot \Dyck_1$ and $\Dyck_k\odot \Dyck_1$ reachability (\cref{thm:d1d1_upper_bounds} and \cref{thm:dkd1_quadratic_upper_bound}, respectively), 
and have evaluated them on standard benchmarks on a conventional laptop.
Each algorithm handles the whole benchmark set in less than 5 minutes.
On the other hand, the previous algorithm of~\cite{Li2021} for $\Dyck_1\odot \Dyck_1$ reachability was reported to handle the same benchmark set in more than 3 days when run on a bigger machine.
Thus, besides the theoretical improvements, our new algorithms confer a significant practical speedup, and are the first ones ready to be used in practice.

In the following sections we develop relevant notation and present details of the above theorems.
To improve readability, in the main paper we present algorithms, examples, and proofs of all theorems.
To highlight the main steps of the proofs, we also present all intermediate lemmas;  many lemma proofs, however, are relegated to the appendix.

\section{Preliminaries}\label{sec:prelim}

\Paragraph{General notation.}
Given a natural number $k\in \Nats$, 
we denote by $[k]$ the set $\{1,\dots, k \}$.
Given a sequence of elements $\sigma=e_1,\dots, e_{\ell}$,
we denote by $|\sigma|=\ell$ the length of $\sigma$,
and denote by $\Prefixes(\sigma)$ the set of prefixes of $\sigma$, i.e.,
$\Prefixes(\sigma)=\{e_1,\dots, e_{\ell'}\colon \ell'\in [\ell] \}$.
Given two sequences $\sigma_1$, $\sigma_2$, we denote by $\sigma_1\circ \sigma_2$ their concatenation.
Finally, we lift concatenation to sets of sequences, i.e., for two sets of sequences $L_1$, $L_2$,
we have $L_1\circ L_2=\{ \sigma_1\circ \sigma_2\colon \sigma_i\in L_i \text{ for each } i\in[2] \}$.

\Paragraph{Dyck Languages.}
Given a natural number $k\in \Nats$, a Dyck alphabet $\Alphabet=\{\alpha_i, \ov{\alpha}_i\}_{i\in [k]}$
is a finite \emph{alphabet} of $k$ parenthesis symbols, where
$\Open(\Alphabet)=\{ \alpha_i \}_{i\in k}$ and $\Close(\Alphabet)=\{ \ov{\alpha}_i \}_{i\in k}$
are the sets of open parenthesis symbols and close parenthesis symbols of $\Alphabet$, respectively\footnote{For more clarity in our presentation, we use Greek letters such as $\alpha, \beta$ to represent opening parenthesis symbols, and $\ov{\alpha}, \ov{\beta}$ to represent their matching closing parentheses.}.
We denote by $\Dyck(\Alphabet)$ the Dyck language over $\Alphabet$,
defined as the language of strings generated by the following context-free grammar $\Grammar$:
\[
\StartNonTerminal \to \StartNonTerminal ~ \StartNonTerminal ~ | ~ \OpenNonTerminal_1 ~ \CloseNonTerminal_1 ~ | ~ \dots ~ | ~ \OpenNonTerminal_k ~ \CloseNonTerminal_k ~ | ~\epsilon \ ;
\qquad 
\OpenNonTerminal_i \to \alpha_i ~ \StartNonTerminal \ ;\qquad
\CloseNonTerminal_i \to  \ov{\alpha}_i
\]
Given a string $\sigma$ and a non-terminal symbol $X$ of $\Grammar$, we write $X\Produces s$
to denote that $X$ produces $s$.
The Dyck language over $\Alphabet$ is then defined as  $\Dyck(\Alphabet)=\{\sigma\in \Alphabet^{*}\colon \StartNonTerminal\Produces \sigma  \}$.
For example, $\alpha_1\alpha_2\ov{\alpha}_2 \alpha_3\ov{\alpha}_3\ov{\alpha}_1\in \Dyck(\Alphabet)$, but
$\alpha_1\alpha_2 \alpha_3\ov{\alpha}_3\ov{\alpha}_1 \ov{\alpha}_2 \not \in \Dyck(\Alphabet)$.
We typically ignore the alphabet $\Alphabet$ and write $\Dyck_k$ for the Dyck language over some implicit alphabet with $k$ different parenthesis symbols.
Finally, given a string $\sigma=\gamma_1\dots \gamma_m\in \Alphabet^*$, we let 
$\ov{\sigma}=\ov{\gamma}_1\dots \ov{\gamma}_m\in \Alphabet^*$,  
where $\ov{\gamma}_i$ is a closing parenthesis symbol if $\gamma_i$ is an opening parenthesis symbol, and vice versa.
For example, if $\sigma=\alpha_1\alpha_2\ov{\alpha}_2 \alpha_3\ov{\alpha}_3\ov{\alpha}_1$, then
$\ov{\sigma}=\ov{\alpha}_1\ov{\alpha}_2\alpha_2 \ov{\alpha}_3\alpha_3\alpha_1$.

\Paragraph{Interleaved Dyck languages.}
Given natural numbers $k_{1}, k_{2} \in \Nats$,
consider two alphabets $\Alphabet_{1}=\{\alpha_i, \ov{\alpha}_i\}_{i=1}^{k_{1}}$ and
$\Alphabet_{2}=\{\beta_i, \ov{\beta}_i\}_{i=1}^{k_{2}}$ with $\Alphabet_{1}\cap \Alphabet_{2}=\emptyset$, i.e., the two alphabets are disjoint. 
Given some word $w\in (\Alphabet_{1}\cup \Alphabet_{2})^*$, we denote by $w\Project \Alphabet$ the projection of $w$ on the alphabet $\Alphabet\in\{\Alphabet_{1}, \Alphabet_{2} \}$.
The interleaved Dyck language over the alphabet pair $(\Alphabet_{1}, \Alphabet_{2})$ is defined as 
\[
\Dyck(\Alphabet_{1})\odot \Dyck(\Alphabet_{2})=\left\{ \sigma\in (\Alphabet_{1}\cup \Alphabet_{2})^*\colon \sigma\Project\Alphabet_{1}\in \Dyck(\Alphabet_{1}) \text{ and }  \sigma\Project\Alphabet_{2}\in \Dyck(\Alphabet_{2})\right\}
\]
For example, we have $\alpha_1 \beta_1 \ov{\alpha}_1 \ov{\beta}_1\in \Dyck(\Alphabet_{1})\odot \Dyck(\Alphabet_{2})$.
Similarly as above, we typically ignore the alphabets $\Alphabet_{1}$ and $\Alphabet_{2}$, and write $\Dyck_{k_1}\odot \Dyck_{k_2}$ for the interleaved Dyck language over two implicit alphabets of sizes $k_1$ and $k_2$, with the understanding that the alphabets are disjoint.

\Paragraph{Graphs and language reachability.}
We consider labeled directed graphs $G=(V,E, \Alphabet)$ where $V$ is the set of nodes, $\Alphabet$ is an alphabet, and
$E\subseteq V\times V\times (\Alphabet\cup \{\epsilon \})$ is a set of edges (partially) labeled with letters from $\Alphabet$.
Given an edge $(u,v,\alpha)\in E$, we denote by $\Label(u,v,\alpha)=\alpha$ the label of the edge,
and often write $u\DTo{\alpha}v$ to denote the existence of such an edge.
A path $P=e_1,\dots, e_{\ell}$ is a sequence of edges. 
The label of $P$ is $\Label(P)=\Label(e_1)\dots \Label(e_{\ell})$, i.e., it is the concatenation of the labels of the edges along $P$.
We often write $P\colon u\Path v$ to denote a path from node $u$ to node $v$.
Naturally, we say that $v$ is reachable from $u$ if such a path exists.
Given some language $\Language\subseteq \Alphabet^*$, we say that $v$ is $\Language$-reachable from $u$ if there exists a path $P\colon u\Path v$ such that $\Label(P)\in \Language$, in which case we write $P\colon u\DPath{\Language}v$.

\SubParagraph{Notation on paths and cycles.}
A (simple) cycle is a path $C=e_{i_1}, e_{i_{i+1}},\dots,e_{i_j}$ such that the left endpoint of $e_{i_1}$ coincides with the right endpoint of  $e_{i_j}$ and no other node repeats in $C$.
A cyclic path is a path $P\colon u\Path u$ (though $P$ may also repeat nodes other than $u$, and thus not be a cycle).
Given a cyclic sub-path $P'$ of $P$, we denote by $P\setminus P'$ the path we obtain after removing $P'$ from $P$. 


\Paragraph{Dyck graphs.}
A labeled graph $G=(V,E,\Alphabet)$ is a Dyck graph if $\Alphabet$ is a Dyck alphabet.
Dyck reachability in $G$ is defined with respect to the corresponding Dyck language $\Dyck(\Alphabet)$.
A Dyck path $P\colon u\Path v$ is a path such that $\Label(P)\in \Dyck(\Alphabet)$, i.e., $P$ witnesses the reachability of $v$ from $u$ wrt the Dyck language $\Dyck_k$.

\Paragraph{Interleaved Dyck graphs.}
An interleaved Dyck graph is a Dyck graph $G=(V,E,\Alphabet)$ where $\Alphabet$ is implicitly partitioned into two disjoint Dyck alphabets $\Alphabet=\Alphabet_1\DUnion\Alphabet_2$.
Interleaved Dyck reachability in $G$ is defined wrt the respective interleaved Dyck language $\Dyck(\Alphabet_1)\odot \Dyck(\Alphabet_2)$.
That is, a path $P\colon u\Path v$ witnesses the reachability of $v$ from $u$ in $G$ iff $\Label(P)\Project \Alphabet_i\in \Dyck(\Alphabet_i)$ for each $i\in[2]$.
Given some $i\in[2]$, we will write $\Stack_i(P)$ to denote content of the $i$-th stack at the end of $P$,
with the natural interpretation that open parenthesis symbols push on the stack and close parenthesis symbols pop from the stack.
Moreover we use $\StackHeight_i(P)$ and $\MaxStackHeight_i(P)$ to denote the stack height and maximum stack height of $P$ wrt alphabet $\Alphabet_i$.
Formally,
\[
\StackHeight_i(P)= |\Label(P)\Project \Open(\Alphabet_i)| -  |\Label(P)\Project \Close(\Alphabet_i)|\qquad\text{and}\qquad
\MaxStackHeight_i(P)=\max_{P'\in \Prefixes(P)} \StackHeight_i(P')
\]
\cref{fig:graphs} shows an interleaved Dyck graph.
We will often project $G$ to some alphabet $\Alphabet_i$, for some $i\in[2]$, in which case we obtain a (non-interleaved) Dyck graph.
That is, the projection $G\Project \Alphabet_i$ is identical to $G$ where every edge label of $\Alphabet_{3-i}$ is replaced by $\epsilon$.

\SubParagraph{Special cases.}
We obtain two special cases of interleaved Dyck graphs when one, or both Dyck alphabets are binary (i.e., consisting of one opening-parenthesis symbol and the corresponding closing-parenthesis symbol).
When a Dyck alphabet is binary, the corresponding stack behaves as a counter.
To make this case explicit, if $|Alphabet_i|=2$ for some $i\in[2]$, we refer to the stack as a counter, and 
given a path $P$, we write $\Counter_i=\StackHeight_i(P)$ and $\MaxCounter_i(P)=\MaxStackHeight_i(P)$.

\begin{figure}
\newcommand{\xtstep}{0.9}
\newcommand{\ytstep}{0.5}
\newcommand{\xstep}{1.5}
\newcommand{\ystep}{1.1}

\centering
\begin{tikzpicture}[thick,
pre/.style={<-, thick, shorten >=0cm,shorten <=0cm},
post/.style={->, thick, shorten >=-0cm,shorten <=-0cm},
seqtrace/.style={->, line width=2},
postpath/.style={->, thick, decorate, decoration={zigzag,amplitude=1pt,segment length=2mm,pre=lineto,pre length=2pt, post=lineto, post length=4pt}},
node/.style={thick, draw=black, circle, inner sep = 0, minimum size=5mm, fill=gray!20},
]

\begin{scope}[shift={(0,0)}]

\node[node] (u) at (0*\xstep, 0*\ystep) {$u$};
\node[node] (v) at (2.5*\xstep, 1*\ystep) {$v$};
\node[node] (x) at (1*\xstep, 0.75*\ystep) {$x$};
\node[node] (y) at (2.1*\xstep, -0.7*\ystep) {$y$};

\draw[post, loop above, out=110, in=70, looseness=10] (u) to node[above]{\MyColorOne{$\alpha_1$}} (u);
\draw[post] (u) to node[above]{\MyColorTwo{$\beta_1$}} (x);
\draw[post, loop above, out=110, in=70, looseness=10] (x) to node[above]{\MyColorTwo{$\beta_2$}} (x);
\draw[post] (x) to node[above]{\MyColorOne{$\ov{\alpha}_2$}} (v);
\draw[post, bend left=30] (v) to node[right]{\MyColorOne{$\ov{\alpha}_1$}} (y);
\draw[post, bend left=0] (y) to node[left]{\MyColorTwo{$\ov{\beta}_1$}} (v);

\draw[post, bend left=0] (x) to node[right]{\MyColorOne{$\alpha_2$}} (y);
\draw[post, bend left=30] (y) to node[left]{\MyColorTwo{$\ov{\beta}_2$}} (x);

\end{scope}

\begin{scope}[shift={(10,1)}]

\node[node] (u) at (0*\xstep, 0*\ystep) {$u$};
\node[node] (y) at (-1*\xstep, -1*\ystep) {$x$};
\node[node] (v) at (1*\xstep, 0*\ystep) {$v$};
\node[node] (x) at (2*\xstep, -1*\ystep) {$y$};

\draw[post, bend left=20] (v) to node[above]{\MyColorOne{$\ov{\alpha}_2$}} (x);
\draw[post, bend left=20] (x) to node[below]{\MyColorOne{$\alpha_2$}} (v);

\draw[post, bend left=20] (u) to node[above]{} (v);
\draw[post, bend left=20] (v) to node[below]{} (u);

\draw[post, bend left=20] (u) to node[below]{\MyColorOne{$\ov{\alpha}_1$}} (y);
\draw[post, bend left=20] (y) to node[above]{\MyColorOne{$\alpha_1$}} (u);

\end{scope}

\end{tikzpicture}
\caption{
\emph{(Left):~}An interleaved Dyck graph.
We have $u\DODOPath v$, via the path $u\DTo{\alpha_1}u \DTo{\beta_1} x\DTo{\beta_2} x \DTo{\alpha_2} y \DTo{\ov{\beta}_2} x \DTo{\ov{\alpha}_2} v \DTo{\ov{\alpha}_1} y \DTo{\ov{\beta}_1}v$.
\emph{(Right):~} A bidirected Dyck graph.
We omit the $\epsilon$ label from the edges.
}
\label{fig:graphs}
\end{figure}

\Paragraph{Bidirected graphs.}
Consider a labeled graph $G=(V,E,\Alphabet)$ where $\Alphabet$ is a Dyck alphabet.
We call $G$ bidirected if it satisfies the following condition, where we take $\ov{\epsilon}=\epsilon$.
\[
\forall u,v\in V, \alpha\in \{\epsilon\}\cup \Alphabet\colon (u,v,\alpha)\in E \text{ iff } (v,u,\ov{\alpha})\in E
\]
Bidirected graphs can be seen as a natural lifting of undirected graphs to the labeled setting where reachability is expressed wrt a Dyck language.
Indeed, for every path $P\colon u\Path v$, the reverse path $\ov{P}$ satisfies $\Label(\ov{P})=\Label(P)$.
Thus, Dyck reachability is an equivalence relation on bidirected graphs, much like plain reachability on undirected graphs.
Observe that the same holds when $\Alphabet$ is the disjoint union of two Dyck alphabets, and $G$ is an interleaved Dyck graph.
\cref{fig:graphs} shows a bidirected Dyck graph.
As we mostly deal with bidirected graphs in this paper, we will define them and depict them with every edge appearing in only one direction, with the inverse direction and label taken implicitly.

\Paragraph{Irreducible paths.}
We now introduce the notion of irreducible paths, which is helpful in analyzing bidirected graphs.
A path $P\colon u\Path v$ is \emph{reducible}, if there exist $j_1<j_2$ such that 
(i)~$P[j_1\colon j_2]$ is a cyclic sub-path of $P$, and 
(ii) $\Stack_i(P[:j_1])=\Stack_i(P[:j_2])$ for each $i\in[2]$.
In this case, we can simplify $P$ by the path $P'=P\setminus P[j_1\colon j_2]$, as $P'\colon u\Path v$ is a valid path and each stack has the same contents at the end of $P'$ as at the end of $P$.
Finally, we call $P$ \emph{irreducible} if it is not reducible.
Without loss of generality, throughout this work we consider that paths witnessing reachability are irreducible paths.


\Paragraph{Interleaved Dyck reachability problems.}
In this work we study interleaved Dyck reachability problems on bidirected Dyck graphs $G$, over a corresponding interleaved Dyck language $\Dyck_{k_1}\odot \Dyck_{k_2}$.
As standard in the literature, we take that $k_1,k_2$ are constant and independent of $G$.
We will distinguish cases when $k_1$ and $k_2$ are unrestricted, as well as special cases when $k_1=1$ and $k_2$ is unrestricted, and when $k_1=k_2=1$.
In the unrestricted case, (i.e., when $k_1, k_2>1$), for ease of presentation we assume that $k_1=k_2=k$, and write
$\Dyck_{k}\odot \Dyck_{k}$.

\begin{remark}[Sparsity of $G$]\label{rem:bounded_degree}
Without loss of generality, we assume that $G$ is sparse. i.e.,  $|E|=O(|V|)$.
Indeed, if any node $x$ has two  outgoing edges with the same closing label $x\DTo{\ov{\alpha}}y$ and $x\DTo{\ov{\alpha}}z$, then $z\DTo{\alpha}x\DTo{\ov{\alpha}}y$ and hence $y$ and $z$ are reachable from each other and thus can be merged.
By applying this merging process repeatedly, we arrive at a graph that is sparse, as every node has a bounded number of outgoing edges.
The total time of this process is nearly linear~\cite{Chatterjee18}.
\end{remark}
\section{A Fast Algorithm for $\Dyck_1\odot \Dyck_1$ Reachability}\label{sec:d1d1}

In this section we deal with $\Dyck_1\odot \Dyck_1$, i.e., when both Dyck languages are over a unary alphabet.

\Paragraph{Shallow paths.}
The key insight behind the algorithm of~\cite{Li2021} is reachable nodes exhibit a ``shallow-paths'' property, which states that wlog, along every witness path one of the two counters is linearly bounded.
We restate the property here.
\begin{restatable}[Shallow Paths~\cite{Li2021}]{lemma}{lemd1d1shallowpaths}\label{lem:d1d1shallow_paths}
For any nodes $u,v\in V$ such that $v$ is $\Dyck_1\odot \Dyck_1$-reachable from $u$, 
there exists a path $P\colon u\DODOPath v$ such that
for every prefix $P'\in \Prefixes(P)$, we have $\Counter_i(P')\leq 6\cdot n$ for some $i\in [2]$.
\end{restatable}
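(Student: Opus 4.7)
My plan is to establish the lemma by a path-compression argument on an irreducible witness. Suppose, for contradiction, that $P\colon u\DODOPath v$ is irreducible and some prefix $P'\in\Prefixes(P)$ satisfies $\Counter_1(P')>6n$ and $\Counter_2(P')>6n$ simultaneously. The goal is to exhibit a strictly shorter witness $u\DODOPath v$, contradicting the minimality that may be assumed of $P$.

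\textbf{Step 1 (Pigeonhole on nodes).} Each non-$\epsilon$ edge changes exactly one counter by exactly one, so $|P'|\ge \Counter_1(P')+\Counter_2(P')>12n$. Hence some node $w\in V$ is visited at least $13$ times along $P'$. Let $p_1,p_2,\dots,p_{13}\in\mathbb{Z}^2$ be the counter pairs $(\Counter_1,\Counter_2)$ recorded at these visits, listed in temporal order. Irreducibility of $P$ forces the $p_i$ to be pairwise distinct, since coincidence of two pairs at the same node would yield a removable cyclic sub-path with zero stack displacement.

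\textbf{Step 2 (Reversible cycles at $w$).} For every $i<j$, the sub-path $P[i\colon j]$ is cyclic at $w$ with counter displacement $\delta_{ij}=p_j-p_i\in\mathbb{Z}^2\setminus\{0\}$. By bidirectedness of $G$, the reverse path $\overline{P[i\colon j]}$ is also a valid path at $w$, with displacement $-\delta_{ij}$. Thus the set of $w$-anchored cycle displacements is closed under negation, and---by concatenating disjoint temporal intervals---also under addition.

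\textbf{Step 3 (Cancellation).} I would apply an Erd\H{o}s--Szekeres / planar Dilworth style argument on the $13$ distinct points $p_1,\dots,p_{13}\in\mathbb{Z}^2$ to extract two temporally disjoint pairs of visits $(i<j)$ and $(k<\ell)$ whose displacements $\delta_{ij}$ and $\delta_{k\ell}$ are collinear (proportional over $\mathbb{Q}$) and oppositely oriented. Then, using Step 2 to reverse one of the two sub-paths and re-splice, the two sub-paths can be replaced by a single shorter cyclic insertion at $w$ that preserves both counters' terminal values, yielding a strictly shorter path $u\DODOPath v$ with the same final stack contents $(0,0)$.

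\textbf{Main obstacle.} The crux is Step 3: the combinatorial selection must also ensure that, after surgery, both counters stay non-negative along the entire new path (a prerequisite for a valid $\Dyck_1\odot\Dyck_1$ witness), and that the spliced path is genuinely shorter rather than merely rearranged. Calibrating the constant to $6$ in $6n$---rather than some smaller constant---is likely to come from a case split on the four sign-quadrants of the $\delta_{ij}$'s around the apex of $P'$, guaranteeing at least one ``downward'' pair whose reversal lowers the peak. As a sanity check, I would first verify the single-counter analogue, where the argument reduces to a classical monotone-walk compression on bidirected $\mathbb{Z}$-labeled graphs, and then lift it to two counters by coupling the reversals via Step 2.
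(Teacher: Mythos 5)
A structural note first: the paper does not prove \cref{lem:d1d1shallow_paths}. It states the lemma with an explicit attribution to~\cite{Li2021} and uses it as a black box (no proof appears in the appendix either), so there is no in-paper argument to compare yours against.

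Evaluated on its own terms, your Steps~1 and~2 are fine: the length bound $|P'|>12n$, the pigeonhole yielding $13$ visits of some node $w$ with pairwise distinct counter pairs (distinctness forced by irreducibility), and the observation that bidirectedness lets you traverse any cyclic sub-path backwards are all correct. The gap is Step~3, and it is not a matter of routine detail. The claim that among $13$ distinct points $p_1,\dots,p_{13}\in\mathbb{Z}^2$ one can always extract two temporally disjoint index pairs $(i<j)$ and $(k<\ell)$ with $p_j-p_i$ and $p_\ell-p_k$ collinear is false: Erd\H{o}s--Szekeres and Dilworth give monotone subsequences and chain/antichain decompositions, not parallelism of difference vectors, and a generic configuration of thirteen integer points has no parallel displacements across disjoint index pairs at all (already four points such as $(0,0),(1,2),(2,5),(4,13)$ avoid every such coincidence, and this persists under generic extension). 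Fixing this would require exploiting structure specific to how counter values evolve along a $\Dyck_1\odot\Dyck_1$ witness, which Step~3 never invokes. Even granting such a pair, the surgery is underspecified: when $\delta_{ij}$ is a rational but not $\pm1$ multiple of $\delta_{k\ell}$, replacing the two sub-paths by ``a single shorter cyclic insertion'' is neither defined nor obviously length-decreasing. And the non-negativity of both counters along the spliced path, which you yourself flag as the main obstacle, is exactly where the proof has to do its real work: Step~2's closure under negation is closure of an abstract subgroup of $\mathbb{Z}^2$, not of the set of cycles that remain non-negative when inserted at a given position in the new path. As written, the essential step of the plan is unexecuted, and as literally stated it is incorrect.
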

We call any path $P$ that satisfies the conditions of \cref{lem:d1d1shallow_paths} a \emph{shallow path}.
The shallow-path property was exploited in~\cite{Li2021} to compute $\Dyck_1\odot \Dyck_1$-reachability in $O(n^7)$ time.
The key insight behind our faster algorithm is a stronger property that we call \emph{bounded paths}.

\Paragraph{Bounded paths.}
The bounded-paths property states that wlog, each counter is \emph{always} bounded by $O(n^2)$ along any witness path.
Formally, we have the following lemma.
\begin{restatable}[Bounded Paths]{lemma}{lemd1d1boundedpaths}\label{lem:d1d1bounded_paths}
For any nodes $u,v\in V$ such that $v$ is $\Dyck_1\odot \Dyck_1$-reachable from $u$, 
there exists a witness path $P\colon u\DODOPath v$ such that $\MaxCounter_i(P)\leq 18\cdot n^2 + 6\cdot n$ for each $i\in[2]$, where $n$ is the number of nodes in $G$.
\end{restatable}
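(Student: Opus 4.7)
The plan is to strengthen the shallow path guaranteed by \cref{lem:d1d1shallow_paths} via a graph-expansion argument that reduces a ``peak'' of one counter to a bidirected $\Dyck_1$-reachability instance on a larger but well-structured graph. Concretely, fix a shallow witness $P\colon u \DODOPath v$ and call any maximal sub-path on which $\Counter_2 \geq 6n + 1$ a \emph{counter-$2$ peak} (and symmetrically for counter $1$). By the shallow-paths property, throughout any counter-$2$ peak we must have $\Counter_1 \leq 6n$, so the trajectory of $P$ on the peak is confined to a state space of size $O(n^2)$, and we only need to cap the excursion of $\Counter_2$ above $6n$.

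I would formalize this by constructing an \emph{expanded} bidirected graph $G' = (V', E', \Alphabet_2)$ whose nodes are pairs $(v, c_1)$ with $v \in V$ and $c_1 \in \{0, 1, \dots, 6n\}$. Each $\Alphabet_1$-labeled edge of $G$ lifts to an $\eps$-labeled edge of $G'$ that changes the $c_1$-coordinate by $\pm 1$ (respecting the range), and each $\Alphabet_2$-labeled edge of $G$ lifts to an $\Alphabet_2$-labeled edge of $G'$ that leaves $c_1$ unchanged. Then $G'$ is bidirected, has $|V'| = O(n^2)$ nodes, and any counter-$2$ peak of $P$, say $P[t_0\colon t_1]$, corresponds exactly to a $\Dyck_1$-reachability witness in $G'$ over $\Alphabet_2$, from $(v(t_0), c_1(t_0))$ to $(v(t_1), c_1(t_1))$. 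Indeed, the $\Alphabet_2$-projection of the peak is a Dyck word, since $\Counter_2$ takes the same value at both endpoints of the peak and is strictly larger throughout.

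Now I would invoke the standard fact that in a bidirected $\Dyck_1$ graph with $N$ nodes, whenever $x \DPath{\Dyck_1} y$ there is a witness path whose stack height stays $O(N)$; this can be extracted from the union-find-based characterization of bidirected $\Dyck_1$-reachability underlying its nearly-linear-time algorithms~\cite{Chatterjee18}. Applied to $G'$, which has $O(n^2)$ nodes, each counter-$2$ peak of $P$ is replaced by an alternative sub-path whose \emph{relative} counter $2$ stays $O(n^2)$-bounded; adding back the baseline of $6n$ yields an absolute bound of $18n^2 + 6n$ on $\Counter_2$ after tuning the hidden constants. Since inside the replacement $\Counter_1 \leq 6n$ is built into the node set of $G'$, the replacement respects the shallow-paths property; outside the peak the path is untouched. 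Repeating this argument symmetrically on counter-$1$ peaks yields the desired witness path on which both counters stay within $18n^2 + 6n$ at every prefix.

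The main technical challenge will be verifying that a $G'$-witness lifts back to a valid $\Dyck_1 \odot \Dyck_1$ sub-path of $G$ that stitches correctly with the rest of $P$: one has to track the entry and exit $\Counter_1$-values of the peak exactly so that the global $\Alphabet_1$-projection of the modified path remains a Dyck word, and similarly check that no $\epsilon$-transitions of $G'$ cause the lift to leave the allowed range of $c_1$. A secondary challenge is pinning down the precise constant in the bidirected $\Dyck_1$ stack-height bound so that the composition with the shallow-paths bound yields exactly the stated $18n^2 + 6n$; this is a routine but somewhat delicate accounting.
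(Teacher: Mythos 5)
Your proposal takes a genuinely different route from the paper. The paper's proof works directly on the path: it identifies \emph{matching pairs} (indices on the ``way up'' and ``way down'' of counter~$1$ that revisit the same node with the same counter-$2$ value, guaranteed by a pigeonhole argument using shallowness) and then performs a \emph{path-deflation} rearrangement $Q=P_1\circ P_3 \circ P_4 \circ \ov{P}_3 \circ P_2 \circ P_3 \circ P_5$ that strictly reduces the number of times counter~$1$ attains its maximum, without disturbing counter~$2$; iterating this pushes each counter below the quadratic threshold. You instead isolate the excursions (``peaks'') of one counter, linearize the other counter into the state space, and reduce each peak to a single-language bidirected $\Dyck_1$-reachability question on an expanded graph $G'$ with $O(n^2)$ nodes. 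Both approaches exploit shallowness in essentially the same place (to keep one counter in $\{0,\dots,6n\}$ while the other is large), but yours modularizes the argument into a black-box structural fact about bidirected $\Dyck_1$, whereas the paper does everything by hand with explicit rearrangement.

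The crux of your argument, however, is the asserted ``standard fact'' that in a bidirected $\Dyck_1$ graph on $N$ nodes every reachable pair admits a witness with stack height $O(N)$. This is \emph{not} the bound you get from what the paper actually cites: \cite{Pierre1992} gives $\MaxStackHeight\le N^2$ for shortest $\Dyck_k$ witnesses on directed graphs, and even the paper's own bidirected rearrangement (\cref{lem:bounded_stack_height}) only gives $2N^2$. If the correct bound for your $G'$ were $\Theta(N^2)=\Theta(n^4)$, your scheme would produce a bound of order $n^4$, which is far weaker than the claimed $18n^2+6n$. You therefore need the stronger \emph{linear} bound, and you need it specifically for bidirected $\Dyck_1$. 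That bound does appear to be true --- one can argue it from the union-find merge characterization: each merge $\{y,z\}$ justified by $x\DTo{\ov\alpha}y$, $x'\DTo{\ov\alpha}z$ with $x\sim x'$ yields a witness $z\DTo{\alpha}x'\rightsquigarrow x\DTo{\ov\alpha}y$ of height $1+h$, where $h$ is the height of the witness for $x\sim x'$, and since $T\mapsto 1+h(T)$ only recurses through classes with strictly earlier formation index, the total height is at most the number of merges, which is $\le n-1$. But this is a genuine lemma that must be stated and proved, not invoked as standard; it is not written down in \cite{Chatterjee18}, and without it your proof does not close. I would regard this as the main gap in the proposal: the remaining concerns you yourself flag (exact stitching of $c_1$ at the peak boundaries, and tuning constants to hit $18n^2+6n$) are indeed routine once the linear bound is in hand.
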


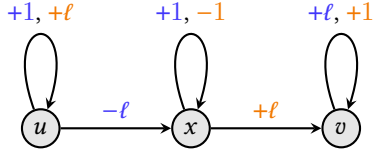
\begin{figure}
\newcommand{\xtstep}{0.9}
\newcommand{\ytstep}{0.5}
\newcommand{\xstep}{2}
\newcommand{\ystep}{0.9}

\centering
\begin{tikzpicture}[thick,
pre/.style={<-, thick, shorten >=0cm,shorten <=0cm},
post/.style={->, thick, shorten >=-0cm,shorten <=-0cm},
seqtrace/.style={->, line width=2},
postpath/.style={->, thick, decorate, decoration={zigzag,amplitude=1pt,segment length=2mm,pre=lineto,pre length=2pt, post=lineto, post length=4pt}},
node/.style={thick, draw=black, circle, inner sep = 0, minimum size=5mm, fill=gray!20},
]

\begin{scope}[shift={(0,0)}]

\node[node] (u) at (0*\xstep, 0*\ystep) {$u$};
\node[node] (x) at (1*\xstep, 0*\ystep) {$x$};
\node[node] (v) at (2*\xstep, 0*\ystep) {$v$};

\draw[post] (u) to node[above]{\MyColorOne{$-\ell$}} (x);
\draw[post] (x) to node[above]{\MyColorTwo{$+\ell$}} (v);
\draw[post, loop above, out=110, in=70, looseness=20] (u) to node[above]{\MyColorOne{$+1$}, \MyColorTwo{$+\ell$}} (u);
\draw[post, loop above, out=110, in=70, looseness=20] (v) to node[above]{\MyColorOne{$+\ell$}, \MyColorTwo{$+1$}} (v);
\draw[post, loop above, out=110, in=70, looseness=20] (x) to node[above]{\MyColorOne{$+1$}, \MyColorTwo{$-1$}} (x);

\end{scope}

\end{tikzpicture}
\caption{
A simple graph where every $P\colon u\DODOPath v$ path has both counters reaching $\Omega(n^2)$.
Edges indicate paths with the respective effect on the counters.
In every such path each counter reaches value $\ell^2$, where we take $\ell=(n-3)/4=\Theta(n)$.
}
\label{fig:counter_bound}
\end{figure}

We remark that the $O(n^2)$ bound of \cref{lem:d1d1bounded_paths} is tight:
\cref{fig:counter_bound} shows a simple example where in every $P\colon u\DODOPath v$ path, each counter reaches $\Theta(n^2)$.
In the following, we first present an efficient algorithm for $D_1\odot D_1$ based on \cref{lem:d1d1bounded_paths}, and then prove the lemma.

\subsection{Algorithm for $\Dyck_1\odot \Dyck_1$ Reachability}\label{subsec:d1d1_algo}

The bounded-paths property implies a straightforward algorithm to solve $\Dyck_1\odot \Dyck_1$ reachability.
Let $c$ be the counter bound of the bounded-paths lemma.
We  flatten the graph on one counter, by associating each node with all possible values of that counter up to this bound, while replacing all edge labels of that counter with $0$ (i.e., no effect).
This transformation reduces bidirected $\Dyck_1\odot \Dyck_1$ reachability to bidirected $\Dyck_1$ reachability,
which is solved in almost linear-time on bidirected graphs~\cite{Chatterjee18}.
See \cref{algo:d1d1} for details.


\smallskip
\begin{algorithm}
\small
\DontPrintSemicolon
\SetInd{0.3em}{0.3em}
\caption{$\AlgoDODOne$}\label{algo:d1d1}
\SetKwProg{Fn}{Function}{:}{}
\KwIn{
A bidirected $\Dyck_{1}\odot \Dyck_{1}$ graph $G=(V,E, \Alphabet=\Alphabet_1\DUnion\Alphabet_2)$ of $n$ nodes
}
\KwOut{
All-pairs $\Dyck_{1}\odot \Dyck_{1}$ reachability in $G$
}
\BlankLine
Let $c\gets 18\cdot n^2 + 6\cdot n$\\
Construct the bidirected $\Dyck_1$ graph $G'=(V', E', \Alphabet')$ as follows\\
\begin{minipage}{.92\linewidth}
\begin{compactenum}
\item The node set is $V'=V\times (\{0\}\cup [c])$
\item The edge set is such that, for every edge $(u,v,\gamma)\in E$ and $j\in \{0\}\cup [c]$, we have an edge $((u,j_1), (u,j_2), \delta)\in E'$, as follows
\begin{compactenum}
\item If $\gamma\in \Alphabet_1\cup \{\epsilon\}$, then $j_2=j_1$ and $\delta=\gamma$
\item If $\gamma\in \Open(\Alphabet_{2})$ and $j_1<c$, then $j_2=j_1+1$ and $\delta=\epsilon$
\item If $\gamma\in \Close(\Alphabet_{2})$ and $j_1>0$, then $j_2=j_1-1$ and $\delta=\epsilon$\\
\end{compactenum}
\end{compactenum}
\end{minipage}

Solve all-pairs $\Dyck_k$ reachability on $G'$ using~\cite{Chatterjee18}\\
\Return{that $u\DODOPath v$ in $G$ iff $(u,0) \DOPath (v,0)$ in $G'$} 
\end{algorithm}


\begin{proof}[Proof of \cref{thm:d1d1_upper_bounds}]
We start with the correctness.
For any path $P\colon u\Path v$ in $G$ with $\MaxCounter_1(P)\leq c$, there exists a path $P'\colon (u,0)\Path (v, \ell)$ in $G'$, such that $\ell=\Counter_1(P)$ and $\Counter_2(P)=\Counter_2(P')$.
By \cref{lem:d1d1bounded_paths}, if a node $v$ is $\Dyck_1\odot\Dyck_1$-reachable from a node $u$ in $G$, then there exists a witness path $P$ with $\MaxCounter_j(P)\leq c$ for each $j\in[2]$.
Thus there exists a corresponding path $P'\colon (u,0)\Path (v,0)$ in $G'$, which means that $(u,0)$ and $(v,0)$ are in the same reachability class of $G'$.
The correctness hence follows from the correctness of the algorithm of~\cite{Chatterjee18} for $\Dyck_1$-reachability.

Regarding the complexity,
by \cref{lem:d1d1bounded_paths}, the graph $G'$ has $O(n\cdot c)$ nodes.
By \cref{rem:bounded_degree}, the graph $G$ is sparse, which implies that $G'$ is also sparse, i.e., $G'$ has $O(n\cdot c)$ edges.
By~\cite{Chatterjee18}, solving $\Dyck_{1}$-reachability on $G'$ requires $O(n\cdot c\cdot \alpha (n))=O(n^3\cdot \alpha(n))$ time.
The desired result follows.
\end{proof}

\subsection{Bounded Paths in $\Dyck_1\odot \Dyck_1$ Reachability}\label{subsec:d1d1_bounded_paths}
We now turn our attention to the proof of the bounded paths lemma.
This section is somewhat technical and can be be skipped at first, as later sections do not depend on it.
We remark that our main goal is to show a quadratic upper-bound on each counter along a path that witnesses
 $\Dyck_1\odot \Dyck_1$ reachability.
As such, our proof aims at readability, without necessarily establishing the smallest constant factor in this bound.

\Paragraph{Counter indexes.}
Given a path $P$, an integer $i\in[2]$ and a natural number $c\in \Nats$, we define the \emph{counter indexes} of counter $i$ of $P$ on counter value $c$ as the set of indexes of $P$ in which $P$ attains value $c$ on counter $i$.
Formally, we have
\[
\CounterIndex_i^c(P)=\{j\in \{0\} \cup [|P|]\colon \Counter_i(P[\colon j])=c \}
\]
To make our exposition simpler, we focus on the boundedness property of \cref{lem:d1d1bounded_paths} on the first counter.
The case of the second counter is completely symmetric,
while it will become clear in the final step of the proof that both properties can be guaranteed simultaneously (i.e., both counters satisfying the bound of \cref{lem:d1d1bounded_paths}).

\Paragraph{Matching pairs.}
Consider a path $P\colon u\DODOPath v$ in $G$, and let $i_{\max}$ be the first point where $P$ attains its maximum value on the first counter.
That is, let
$i_{max}=\min(\CounterIndex_1^a(P))$, where $a=\MaxCounter_1(P)$.
For every $c\in \{0\}\cup[\MaxCounter_1(P)]$, we define the indexes
\[
l_{c}=\max(\{i\colon i\leq i_{\max} \text{ and } \Counter_1(P[\colon i])=c \})  \text{ and }
r_{c}=\min(\{i\colon i\geq i_{\max} \text{ and }  \Counter_1(P[\colon i])=c \})
\]
i.e., $l_c$ (resp., $r_c$) is the last index before $i_{\max}$ (resp., the first index after $i_{\max}$) in which the first counter of $P$ has value $c$.
Let $x_c$ (resp., $y_c$) be the last node  of $P[\colon l_c]$ (resp., $P[\colon r_c])$, and we call $(x_c, y_c)$ a \emph{matching pair}.
The following lemma states that if a shallow path $P$ reaches a large enough value on the first counter,
then ``on its way up'' it must go through the same pair $(p,q)$ twice, where $p$ is a node and $q$ is the value of the second counter, while the same holds ``on its way down''.
It is a straightforward application of the pigeonhole principle (see \cref{fig:path_deflation} for an illustration).

\begin{restatable}{lemma}{lemmatchingpaircounter}\label{lem:matching_pair_counter}
Let $\gamma=12\cdot n^2 + 6\cdot n$ and $\delta=18\cdot n^2 + 6\cdot n + 1$.
Consider any shallow path $P\colon u\DODOPath v$ such that $\MaxCounter_1(P)\geq \delta$.
Then there exist matching pairs $(x_{c_j}, y_{c_j})_{1\leq j \leq 4}$ such that the following hold.
\begin{compactenum}
\item\label{item:mathcing_pair_counter1} $\gamma \leq c_1 < c_2\leq \delta$ and $\gamma \leq c_4 < c_3\leq  \delta$.
\item\label{item:mathcing_pair_counter2} $x_{c_1}=x_{c_2}$ and $y_{c_3}=y_{c_4}$.
\item\label{item:mathcing_pair_counter3} $\Counter_{2}(P[\colon l_{c_1}])=\Counter_{2}(P[\colon l_{c_2}])$ and 
$\Counter_{2}(P[\colon r_{c_3}])=\Counter_{2}(P[\colon r_{c_4}])$.
\end{compactenum}
\end{restatable}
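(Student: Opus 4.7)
The plan is to combine the shallow-paths property (Lemma~\ref{lem:d1d1shallow_paths}) with the pigeonhole principle, applied separately to the ascending prefix $P[\colon i_{\max}]$ and the descending suffix $P[i_{\max}\colon]$. First, I would observe that at any prefix index $j$ with $\Counter_1(P[\colon j]) > 6n$, shallow-paths forces $\Counter_2(P[\colon j]) \leq 6n$. Since every $c \in [\gamma, \delta]$ satisfies $c \geq \gamma > 6n$, at each index $l_c$ and $r_c$ with $c$ in this range the value of counter~2 lies in $\{0,1,\dots,6n\}$, so the pair (node, counter-2 value) ranges over a set of size at most $n \cdot (6n+1)$.

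For the ascending side I would examine the sequence of pairs $(x_c, \Counter_2(P[\colon l_c]))$ as $c$ ranges over $[\gamma, \delta]$. Each integer in that range is attained by counter~1 before $i_{\max}$, because $\MaxCounter_1(P) \geq \delta$ and the counter changes by $\pm 1$ per step, so $l_c$ is well-defined throughout. A short argument shows that $l_c$ is strictly increasing in $c$: once counter~1 hits $c'$ for the last time before $i_{\max}$ it must stay strictly above $c'$ until $i_{\max}$, since otherwise it would revisit $c'$, and hence cannot pass through any $c < c'$ after $l_{c'}$; this gives $l_c < l_{c'}$ whenever $c < c'$. Pigeonhole on the $\delta - \gamma + 1$ distinct pairs then forces a collision, yielding $\gamma \leq c_1 < c_2 \leq \delta$ with $x_{c_1} = x_{c_2}$ and matching counter-2 values. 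The descending side is treated symmetrically on the sequence $(y_c, \Counter_2(P[\colon r_c]))$, where an analogous monotonicity argument on $P[i_{\max}\colon]$ shows that $r_c$ is strictly \emph{decreasing} in $c$, and the same pigeonhole yields $\gamma \leq c_4 < c_3 \leq \delta$ with $y_{c_3} = y_{c_4}$ and the matching counter-2 condition.

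The main obstacle I anticipate is the quantitative bookkeeping linking the thresholds $\gamma = 12n^2 + 6n$ and $\delta = 18n^2 + 6n + 1$ to the bucket count $n(6n+1)$. The ``$+6n$'' slack inside $\gamma$ is present precisely to activate the shallow-paths bound on counter~2, while the gap $\delta - \gamma$ of order $6n^2$ must exceed the number of admissible (node, counter-2 value) combinations so that pigeonhole fires. Getting the constants tight may require trimming the bucket space further, e.g.\ by exploiting the sparsity guaranteed by Remark~\ref{rem:bounded_degree} or by restricting attention to the outgoing-edge label at $x_c$ (which is forced to be an opening parenthesis since counter~1 increases immediately after $l_c$). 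Once the counting is settled, the monotonicity above guarantees that $l_{c_1} < l_{c_2}$ and $r_{c_4} < r_{c_3}$ are genuinely distinct indices, so that the matching pairs bracket non-trivial cyclic sub-paths that will be used downstream in the proof of Lemma~\ref{lem:d1d1bounded_paths} to deflate $P$ whenever a counter grows beyond the quadratic threshold.
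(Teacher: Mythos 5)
Your proposal follows essentially the same approach as the paper's own proof: fix counter~2 below $6n$ via the shallow-paths property whenever counter~1 exceeds $6n$, then pigeonhole on the pairs (node, counter-2 value) independently on the ascending and descending sides of $i_{\max}$; the monotonicity of $l_c$ (increasing) and $r_c$ (decreasing) is a correct observation not spelled out in the paper, though it is implicit in the definitions of those indices. Your flagged concern about the constants is also well-founded: with $\delta - \gamma + 1 = 6n^2 + 2$ pigeons and up to $n\cdot(6n+1) = 6n^2 + n$ holes the collision is not forced for $n \geq 2$, and the paper's proof asserts it without verifying the count; a slight widening of the interval (e.g.\ taking $\delta \approx 18n^2 + 7n$) is needed to make the pigeonhole fire, but this is a bookkeeping adjustment shared by both arguments and does not affect the $O(n^2)$ bound that the lemma ultimately supports.
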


\begin{figure}
\newcommand{\xtstep}{0.5}
\newcommand{\ytstep}{0.2}
\newcommand{\xstep}{1.15}
\newcommand{\ystep}{0.25}
\def\ybias{0.1}
\def\xbias{0.2}
\def\opac{0.75}
\def\counterthickness{1}
\def\opac{0.75}

\newcommand{\MySubPath}[5]{
\node[] (1#5) at (#1) {};
\node[] (2#5) at (#2) {};
\draw[-{Latex[width=2mm]}, ultra thick, draw=#4, shorten >=-4pt,shorten <=-2pt] (1#5) to node[below]{\textcolor{#4}{#3}} (2#5);
}

\centering
\begin{tikzpicture}[thick,
pre/.style={<-, thick, shorten >=0cm,shorten <=0cm},
post/.style={-{Latex[width=3mm]}, thick, shorten >=-0cm,shorten <=-0cm},
seqtrace/.style={->, line width=2},
postpath/.style={->, thick, decorate, decoration={zigzag,amplitude=1pt,segment length=2mm,pre=lineto,pre length=2pt, post=lineto, post length=4pt}},
node/.style={thick, draw=black, circle, inner sep = 0, minimum size=4.75mm, fill=gray!20},
marked/.style={thick, draw=black, fill=black, rectangle, inner sep = 0, minimum width=1pt, minimum height=6pt},
]

\begin{scope}[shift={(0,0)}]

\node[] (1) at (0*\xstep, 0*\ystep) {};
\node[marked] (u) at (0*\xstep, 0*\ystep) {};
\node[marked] (v) at (9.5*\xstep, 0*\ystep) {};
\node[] (xc) at (1.5*\xstep, 0*\ystep) {};
\node[] (xcp) at (3.5*\xstep, 0*\ystep) {};
\node[] (ycp) at (4.5*\xstep, 0*\ystep) {};
\node[] (yc) at (7.5*\xstep, 0*\ystep) {};

\node[marked] (xc1) at (1.2*\xstep, 0*\ystep) {};
\node[marked] (xc2) at (3.3*\xstep, 0*\ystep) {};
\node[marked] (yc3) at (4.5*\xstep, 0*\ystep) {};
\node[marked] (yc4) at (7.5*\xstep, 0*\ystep) {};

\draw[post, -, very thick, shorten <=-1mm] (1) to (v);

\node[below=\ybias of u] {$(u,0)$};
\node[below=\ybias of v] {$(v,0)$};
\node[below=\ybias of xc1] {$(x,\alpha)$};
\node[below=\ybias of xc2] {$(x, \alpha)$};
\node[below=\ybias of yc3] {$(y, \beta)$};
\node[below=\ybias of yc4] {$(y, \beta)$};

\MySubPath{u}{xc1}{$P_1$}{black}{P1}
\MySubPath{xc1}{xc2}{$P_2$}{black}{P2}
\MySubPath{xc2}{yc3}{$P_3$}{black}{P3}
\MySubPath{yc3}{yc4}{$P_4$}{black}{P4}
\MySubPath{yc4}{v}{$P_5$}{black}{P5}

\begin{pgfonlayer}{bg}

\def\xshift{0.1}
\def\yshift{0.1}
\def\pairscolor{\darkred}

\begin{scope}[shift={(-1.2,0)}]
\draw[-, line width=\counterthickness, draw=\ColorOne] (10*\xstep, 9.5*\ystep) to (10.4*\xstep, 9.5*\ystep);
\node[ align=left] at (11.25*\xstep + -0.1, 9.5*\ystep) {Counter~1};
\draw[-, line width=\counterthickness, draw=\pairscolor] (10*\xstep, 8*\ystep) to (10.4*\xstep, 8*\ystep);
\node[] at (11.15*\xstep+0.4, 8*\ystep) {Matching Pairs};
\end{scope}

\draw[-, line width=\counterthickness, rounded corners=10, draw=\ColorOne] (0*\xstep, 0*\ystep) to ($ (xc) + (-0.75*\xstep, 1*\ystep) $) to ($ (xc) + (0, 4*\ystep) $) to ($ (xc) + (0.5*\xstep, 7.5*\ystep) $) to ($ (xc) + (1.25*\xstep, 4.5*\ystep) $) to ($ (xcp) + (0*\xstep, 9*\ystep) $) to ($ (xcp) + (0.6*\xstep, 12.8*\ystep) $) to ($ (ycp) + (0*\xstep, 9*\ystep) $) to ($ (ycp) + (0.4*\xstep, 5*\ystep) $) to ($ (ycp) + (0.9*\xstep, 11*\ystep) $) to ($ (ycp) + (1.4*\xstep, 7.1*\ystep) $) to ($ (ycp) + (2.1*\xstep, 12.8*\ystep) $) to ($ (yc) + (0*\xstep, 4*\ystep) $) to ($ (yc) + (0.75*\xstep, 1.5*\ystep) $) to ($ (yc) + (1.3*\xstep, 7*\ystep) $) to ($ (v) + (0*\xstep, 0*\ystep) $);

\draw[-, line width=\counterthickness, rounded corners=10, draw=\pairscolor] (0*\xstep+\xshift, 0*\ystep) to ($ (xc) + (-0.75*\xstep, 1*\ystep-\yshift) $) to ($ (xc) + (0, 4*\ystep-\yshift) $) to ($ (xc) + (0.2*\xstep, 4.5*\ystep + \yshift) $);

\draw[-, line width=\counterthickness, rounded corners=10, draw=\pairscolor, dashed] ($ (xc) + (0.2*\xstep, 4.5*\ystep + \yshift) $) to  ($ (xc) + (1.25*\xstep+2*\xshift, 4.5*\ystep + \yshift) $);

\draw[-, line width=\counterthickness, rounded corners=10, draw=\pairscolor] ($ (xc) + (1.25*\xstep+\xshift, 4.5*\ystep + \yshift) $) to ($ (xcp) + (0*\xstep, 9*\ystep-2*\yshift) $) to ($ (xcp) + (0.6*\xstep, 12.8*\ystep-\yshift) $) to ($ (ycp) + (0*\xstep, 9*\ystep-2*\yshift) $) to ($ (ycp) + (0.4*\xstep-\xshift, 5*\ystep+1.5*\yshift) $);

\draw[-, line width=\counterthickness, rounded corners=10, draw=\pairscolor, dashed] ($ (ycp) + (0.4*\xstep-\xshift, 5*\ystep+1.5*\yshift) $) to ($ (yc) + (0*\xstep-3*\xshift, 5*\ystep+1.5*\yshift) $);



\draw[-, line width=\counterthickness, rounded corners=10, draw=\pairscolor]  ($ (yc) + (0*\xstep-3*\xshift, 5*\ystep+1.5*\yshift) $) to ($ (yc) + (0*\xstep, 4*\ystep - \yshift) $) to ($ (yc) + (0.75*\xstep-2*\xshift, 1.5*\ystep + \yshift) $);

\draw[-, line width=\counterthickness, rounded corners=10, draw=\pairscolor, dashed]   ($ (yc) + (0.75*\xstep-2*\xshift, 1.5*\ystep + \yshift) $) to ($ (yc) + (1.75*\xstep-0*\xshift, 1.5*\ystep + \yshift) $);

\draw[-, line width=\counterthickness, rounded corners=10, draw=\pairscolor]   ($ (yc) + (1.75*\xstep-0*\xshift, 1.5*\ystep + \yshift) $) to ($ (v) + (-0\xshift, 0*\ystep + 0*\yshift) $);

\node[] (gamma) at (0*\xstep-\xtstep, 1.2*\ystep) {$\gamma$};
\node[] (delta) at (0*\xstep-\xtstep, 11*\ystep) {$\delta$};
\node[] (c1) at (0*\xstep-\xtstep, 2.8*\ystep) {$c_1$};
\node[] (c2) at (0*\xstep-\xtstep, 7.5*\ystep) {$c_2$};
\node[] (c3) at (0*\xstep-\xtstep, 9.3*\ystep) {$c_3$};
\node[] (c4) at (0*\xstep-\xtstep, 4.4*\ystep) {$c_4$};

\draw[->, very thick] (0*\xstep, 0*\ystep) to (0*\xstep, 14*\ystep);

\draw[-, dashed, gray, thick, opacity=\opac] (xc1|- 0,0) to (xc1|- c1);
\draw[-, dashed, gray, thick, opacity=\opac] (c1 -| 0,0) to (xc1|- c1);

\draw[-, dashed, gray, thick, opacity=\opac] (xc2|- 0,0) to (xc2|- c2);
\draw[-, dashed, gray, thick, opacity=\opac] (c2 -| 0,0) to (xc2|- c2);

\draw[-, dashed, gray, thick, opacity=\opac] (yc3|- 0,0) to (yc3|- c3);
\draw[-, dashed, gray, thick, opacity=\opac] (c3 -| 0,0) to (yc3|- c3);

\draw[-, dashed, gray, thick, opacity=\opac] (yc4|- 0,0) to (yc4|- c4);
\draw[-, dashed, gray, thick, opacity=\opac] (c4 -| 0,0) to (yc4|- c4);

\draw[-, gray , opacity=\opac] (gamma -| 0,0) to (v|- gamma);

\draw[-, gray, opacity=\opac] (delta -| 0,0) to (v|- delta);


\end{pgfonlayer}

\end{scope}

\end{tikzpicture}
\caption{
Illustration of \cref{lem:matching_pair_counter}.
The x-axis shows the decomposition of the path $P\colon u\DODOPath v$ into paths $P=P_1\circ P_2 \circ P_3 \circ P_4 \circ P_5$.
The pairs $(p,q)$ show the node $p$ and the value $q$ of the second counter along various segments of $P$.
The y-axis shows the value of the first counter along $P$ (blue) as well as its value on the matching pairs of $P$ (red).
}
\label{fig:path_deflation}
\end{figure}

\Paragraph{Path deflation.}
Using \cref{lem:matching_pair_counter} we describe a process we call \emph{path deflation}.
Informally, it states that if the first counter exceeds $\delta$ in $P$,
we can construct a path $Q$ in which the second counter stays the same as in $P$,
but the first counter reaches the global maximum of $P$ one less time in $Q$ than in $P$ (hence the path has been deflated).
In particular, we construct $Q$ as
\[
Q=P_1\circ P_3 \circ P_4 \circ \ov{P}_3 \circ P_2 \circ P_3 \circ P_5
\]
where the various sub-paths $P_i$ are defined based on \cref{lem:matching_pair_counter} and are also illustrated in \cref{fig:path_deflation}.
In more detail, the first counter reaches its global maximum in $P$ for the first time while traversing the sub-path $P_3$, while
$P_2$ starts and ends in the same node $x$ with the second counter having the same value $\alpha$, and
$P_4$ starts and ends in the same node $y$ with the second counter having the same value $\beta$.
Thus, we can skip $P_2$ and instead reach and traverse $P_4$, without affecting the second counter.
Skipping $P_2$ avoids increasing the first counter, while traversing $P_4$ decreases the first counter.
Traversing $\ov{P}_3$ (i.e., traversing $P_3$ backwards) brings us back to node $x$ with the second counter having the same value $\alpha$.
At this point we can traverse $P_2$ and $P_3$ and then proceed with $P_5$, i.e., proceed as in $P$ but skip the already traversed sub-path $P_4$.
This rearrangement has the effect that none of the traversals of $P_3$ in $Q$ reaches the global maximum of $P$.
Indeed, in the first traversal the counter has decreased by  $c_2-c_1>0$;
in the second traversal it has decreased by $c_2-c_1 + c_3-c_4>0$; and
in the third traversal it has decreased by a $c_3-c_4>0$.
Moreover, as $c_1, c_4\geq \gamma$ while $c_2-c_1<\gamma/2$ and $c_3-c_4<\gamma/2$, while the first counter has decreased, it remains non-negative in these traversals, and thus $Q$ is a valid path.
The following lemma states this property formally.

\begin{restatable}{lemma}{lempathdeflation}\label{lem:path_deflation}
Let $\delta=18\cdot n^2 + 6\cdot n + 1$.
Assume that there exists a shallow path $P\colon u\DODOPath v$ such that $\MaxCounter_1(P)\geq \delta$.
Then there exists a shallow path $Q\colon u\DODOPath v$ such that the following hold.
\begin{compactenum}
\item\label{item:path_deflation1} 
$\MaxCounter_2(Q)=\MaxCounter_2(P)$.
\item\label{item:path_deflation2}
$|\CounterIndex_1^c(Q)|<|\CounterIndex_1^c(P)|$, where $c=\MaxCounter_1(P)$.
\end{compactenum}
\end{restatable}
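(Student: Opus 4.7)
The plan is to formalize the construction sketched just before the lemma. First, invoke Lemma~\ref{lem:matching_pair_counter} on the shallow path $P$ to extract the indices $l_{c_1} < l_{c_2} \le i_{\max} \le r_{c_3} < r_{c_4}$ and the matching pairs, and write $P = P_1 \circ P_2 \circ P_3 \circ P_4 \circ P_5$ with $P_1 = P[0\colon l_{c_1}]$, $P_2 = P[l_{c_1}\colon l_{c_2}]$, $P_3 = P[l_{c_2}\colon r_{c_3}]$, $P_4 = P[r_{c_3}\colon r_{c_4}]$, and $P_5 = P[r_{c_4}\colon |P|]$. Then set
\[
Q = P_1 \circ P_3 \circ P_4 \circ \ov{P}_3 \circ P_2 \circ P_3 \circ P_5.
\]
Bidirectedness of $G$ makes $\ov{P}_3$ a legal path, and the node equalities $x_{c_1} = x_{c_2}$ and $y_{c_3} = y_{c_4}$ from Lemma~\ref{lem:matching_pair_counter}(\ref{item:mathcing_pair_counter2}) ensure every concatenation in $Q$ is a valid edge transition.

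Second, I would dispose of counter~2, which is the lighter half. Since $P_2$ and $P_4$ have zero net effect on counter~2 by Lemma~\ref{lem:matching_pair_counter}(\ref{item:mathcing_pair_counter3}), a block-by-block inspection of $Q$ shows that at every prefix of $Q$ the counter-2 value equals the counter-2 value at a corresponding prefix of $P$ (for the $\ov{P}_3$ block, this uses the mirrored prefix of $P_3$). This simultaneously yields non-negativity of counter~2 along $Q$, the equality $\MaxCounter_2(Q) = \MaxCounter_2(P)$, which is item~(\ref{item:path_deflation1}), and the counter-2 part of shallowness.

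Third, I would handle counter~1, which is the heart of the proof. Accumulating boundary values, on each reused block of $Q$ the counter-1 value equals the $P$-value at the corresponding position plus one of three offsets: $c_1 - c_2$ on the first $P_3$ and on $P_4$, the sum $(c_1 - c_2) + (c_4 - c_3)$ on $\ov{P}_3$, and $c_4 - c_3$ on $P_2$ and on the third $P_3$, while $P_1$ and $P_5$ carry offset $0$. By Lemma~\ref{lem:matching_pair_counter}(\ref{item:mathcing_pair_counter1}) each nonzero offset is strictly negative and has absolute value at most $2(\delta - \gamma)$, while the minimum counter-1 value of $P$ inside each affected block is at least $\min(c_1, c_2, c_3, c_4) \geq \gamma$. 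The choice $\gamma = 12 n^2 + 6n$ and $\delta - \gamma = 6 n^2 + 1$ leaves a strictly positive slack in every block, so counter~1 stays non-negative throughout $Q$ and hence $Q$ is a valid witness path. The strict negativity of the offsets further shows that counter~1 never attains $a = \MaxCounter_1(P)$ outside $P_5$, so $\CounterIndex_1^a(Q)$ consists exactly of the occurrences of $a$ inside $P_5$, whereas $\CounterIndex_1^a(P)$ additionally contains the index $i_{\max} \in P_3$; this gives item~(\ref{item:path_deflation2}). The same non-positive offsets, together with the preservation of counter~2, imply that at every prefix $Q'$ of $Q$ there is a prefix $P'$ of $P$ with $\Counter_2(Q') = \Counter_2(P')$ and $\Counter_1(Q') \leq \Counter_1(P')$, so $Q$ inherits the shallow-path property from $P$.

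The main obstacle will be the counter-1 bookkeeping: enumerating all seven blocks of $Q$, fixing the position-to-position correspondence with $P$ (in particular the reversed correspondence inside $\ov{P}_3$), and verifying that the bounds of Lemma~\ref{lem:matching_pair_counter}(\ref{item:mathcing_pair_counter1}) suffice to keep counter~1 both non-negative and strictly below $a$ in every block simultaneously. The arithmetic itself is elementary once this correspondence is in place.
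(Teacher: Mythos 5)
Your proof is correct and follows essentially the same route as the paper: the same decomposition $P = P_1 \circ \cdots \circ P_5$ via Lemma~\ref{lem:matching_pair_counter}, the same rearranged path $Q$, the same block-by-block correspondence for counter~2 (giving item~1 and half of shallowness), and the same offset analysis for counter~1 (giving validity, the rest of shallowness, and item~2). The only cosmetic difference is that you bound the in-block counter-1 values of $P$ uniformly by $\min(c_1,c_2,c_3,c_4)\ge\gamma$ and the offsets uniformly by $2(\delta-\gamma)$, whereas the paper applies tighter per-block bounds such as $\min(c_2,c_3)-(c_2-c_1)$; both choices comfortably yield non-negativity given $\gamma = 12n^2+6n$.
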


Note that if the path $Q$ of \cref{lem:path_deflation} also has $\MaxCounter_1(Q)\geq \delta$,
we can apply the lemma recursively on $Q$.
This recursive process is guaranteed to terminate as the new path that comes out of the lemma either has a smaller global maximum on the first counter, or has the same global maximum but this maximum is attained one time less.
Hence we have a finite number of applications of the lemma.
Given this observation, we are now ready to conclude the proof of the bounded-paths lemma.

\begin{proof}[Proof of \cref{lem:d1d1bounded_paths}]
Consider that there is a path $P\colon u\DODOPath v$.
Due to \cref{lem:d1d1shallow_paths}, we may assume wlog that $P$ is a shallow path.
If $\MaxCounter_1(P)\leq 18\cdot n^2 + 6\cdot n$, we proceed with the second counter.
Otherwise, we apply repeatedly \cref{lem:path_deflation} until we arrive at a path $Q'$ with 
$\MaxCounter_2(Q')=\MaxCounter_2(P)$ and 
$\MaxCounter_1(Q')\leq 18\cdot n^2+6\cdot n$.
Note that $Q'$ remains a shallow path throughout this process.
Finally, we follow the same process for $Q'$ instead of $P$, and with the two counters swapped.
In the end, we arrive at a path $Q$ with $\MaxCounter_i(Q)\leq 18\cdot n^2 + 6\cdot n$.
The desired result follows.
\end{proof}

\section{Upper and Lower Bounds for $\Dyck_k\odot \Dyck_1$ Reachability}\label{sec:dkd1}

In this section we address $\Dyck_k\odot \Dyck_1$ reachability, i.e., where one of the two stacks behaves as a counter.
In \cref{subsec:dkd1_upper_bound} we prove the decidability of the problem (\cref{thm:dkd1_decidable})
In \cref{subsec:dkd1_bounded} we turn our attention to the counter-bounded version of the problem, and prove
 \cref{thm:dkd1_quadratic_upper_bound} and \cref{thm:dkd1_quadratic_lower_bound}.

\subsection{Decidability of $\Dyck_k\odot \Dyck_1$ Reachability}\label{subsec:dkd1_upper_bound}

Consider an interleaved bidirected Dyck graph $G=(V,E,\Alphabet=\Alphabet_k\DUnion \Alphabet_1)$ and two nodes $u,v\in V$.
We show that, deciding whether there is a path $P\colon u\DKDKPath v$ is decidable.
Our decidability proof makes use of a relaxed notion of reachability in vector addition systems, known as coverability.

\Paragraph{Coverability.}
Consider two nodes $u,v\in V$ and a non-negative integer $c\in \Nats$.
The node $u$ \emph{covers} $(v,c)$ if there is a valid path $P\colon u\Path v$ with $\Stack(P)=\epsilon$ and $\Counter(P)\geq c$.
In other words, we can reach $v$ from $u$ via a path that is balanced wrt the stack, but the counter might be positive (instead of zero).
The coverability problem is to decide whether $u$ covers $(v,c)$, and is known to be decidable even on non-bidirected interleaved Dyck graphs (phrased in the language of pushdown vector addition systems \cite{Leroux2015}).
We start with a straightforward lemma.

\begin{restatable}{lemma}{lemreachimpliescover}\label{lem:reach_implies_cover}
If $u\DKDOPath v$ then $u$ covers $(v,0)$ and $v$ covers $(u,0)$.
\end{restatable}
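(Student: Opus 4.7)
The plan is to observe that both coverability claims follow almost immediately from the definitions, with the second one relying on bidirectedness. The proof is essentially unpacking definitions, so I would structure it as two short observations rather than a technical argument.

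First I would argue that $u$ covers $(v,0)$. Let $P\colon u\DKDOPath v$ be a witness path for interleaved Dyck reachability. By definition, $\Label(P)\Project \Alphabet_k \in \Dyck(\Alphabet_k)$, which means the stack returns to empty after traversing $P$, i.e., $\Stack(P)=\epsilon$. Moreover, $\Label(P)\Project \Alphabet_1 \in \Dyck(\Alphabet_1)$ implies $\Counter(P)=0 \geq 0$. These two facts together meet the defining requirements of coverability, so $P$ itself witnesses that $u$ covers $(v,0)$.

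Next I would argue that $v$ covers $(u,0)$ using bidirectedness. Consider the reverse path $\ov{P}\colon v\Path u$, whose existence is guaranteed by the bidirected condition on $G$. Each symbol along $\ov{P}$ is the complement of the corresponding symbol of $P$, and because reversing a Dyck word and flipping each parenthesis yields another Dyck word over the same alphabet, we have $\Label(\ov{P})\Project \Alphabet_i \in \Dyck(\Alphabet_i)$ for both $i\in\{k,1\}$. Hence $\ov{P}$ is itself a valid $\Dyck_k\odot \Dyck_1$ witness from $v$ to $u$, and by the same argument as above it certifies that $v$ covers $(u,0)$.

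There is no real obstacle here; the only subtlety is confirming that bidirectedness preserves membership in the interleaved Dyck language when the path is reversed, which is exactly the property already invoked in the preliminaries to conclude that Dyck reachability is an equivalence relation on bidirected graphs. So the proof amounts to two lines plus a pointer to this observation.
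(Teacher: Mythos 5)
Your proof is correct and matches the intended (and essentially unique) reasoning: the paper calls this a ``straightforward lemma'' and gives no explicit proof, because the argument is exactly what you wrote — the witness path $P$ itself has $\Stack(P)=\epsilon$ and $\Counter(P)=0\geq 0$, so it certifies $u$ covers $(v,0)$, and by bidirectedness the reverse path $\ov{P}$ (whose label is the parenthesis-complemented reversal of $\Label(P)$, and hence still has both projections in the respective Dyck languages) certifies $v$ covers $(u,0)$.
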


\Paragraph{Intuitive description.}
\cref{lem:reach_implies_cover} states a necessary condition for the reachability of $v$ from $u$.
Hence, as a first step, in order to decide reachability, we may verify that $u$ covers $(v,0)$ and $v$ covers $(u,0)$ using the procedure of~\cite{Leroux2015}.
Next, we can similarly check if $u$ covers $(v,1)$ and $v$ covers $(u,1)$.
Clearly, if $u$ does not cover $(v,1)$, since $u$ covers $(v,0)$, we have that $v$ is reachable from $u$ and we are done.
We arrive at a similar conclusion if $v$ does not cover $(u,1)$.
But what if $u$ covers $(v,1)$ and $v$ covers $(u,1)$?
The key insight is that, using the paths $P_u\colon u\Path v$ and $P_v\colon v\Path u$ that witness the corresponding coverability relationships, we can derive a bound on the length of the shortest path $L\colon u\DKDOPath v$ that may witness reachability.
\cref{algo:dkd1} presents this algorithm.

\smallskip
\begin{algorithm}
\small
\DontPrintSemicolon
\SetInd{0.3em}{0.3em}
\caption{$\AlgoDKDOne$}\label{algo:dkd1}
\SetKwProg{Fn}{Function}{:}{}
\KwIn{
An interleaved bidirected Dyck graph $G=(V,E, \Alphabet=\Alphabet_2\DUnion\Alphabet_1)$, two nodes $u,v\in V$
}
\KwOut{True iff $u\DKDKPath v$}
\BlankLine
\uIf{$u$ does not cover $(v,0)$ or $v$ does not cover ($u, 0)$}{
\Return{$\False$}\label{line:algo_dkd1_return_false}
}
\uIf{$u$ does not cover $(v,1)$ or $v$ does not cover ($u, 1)$}{
\Return{$\True$}\label{line:algo_dkd1_return_true}
}
Let $P_u\gets$ a witness path for the coverability of $(v,1)$ from $u$\\
Let $P_v\gets$ a witness path for the coverability of $(u,1)$ from $v$\\
Let $\zeta\gets \max(\MaxStackHeight(P_u), \MaxStackHeight(P_v))$\\
Let $\delta\gets \max(\zeta, 2\cdot n^2)$\\
\ForEach{path $L\colon u\Path v$ with $|P|\leq n^3\cdot k^{3\cdot \delta}$}{\label{line:algo_dkd1_loop}
\uIf{$L$ witnesses $u\DKDKPath v$}{
\Return{$\True$}
}
}
\Return{$\False$}
\end{algorithm}

The key idea behind the bounded length of $L$ stems from the fact that, by traversing the cyclic path $P_u\circ P_v$ repeatedly, we can increase the value of the counter arbitrarily without increasing the maximum stack height.
Now, starting from $u$ with a large enough counter and an empty stack, we can reach $v$ with the same counter value and an empty stack, while the maximum stack height of this $u\Path v$ path is bounded.
Now repeating the process symmetrically from $v$, we end up with a path $T\colon u\DKDOPath v$ in which the stack height remains bounded.
The final step is to show that wlog, any path that has bounded stack height also has bounded length, thus arriving at the path $L$.
As there are finitely many paths of bounded length, the decidability follows.
We use the remaining of this section to develop this intuition precisely.

\Paragraph{Matching pairs.}
We revisit the notion of matching pairs from \cref{subsec:d1d1_bounded_paths}, and adapt it to our current setting where we have a stack instead of a counter.
In particular, consider a path $P\colon u\DKPath v$ in $G\Project \Alphabet_k$, and let $i_{\max}$ be the first point where $P$ attains its maximum stack height.
For every $h\in [\MaxStackHeight(P)]$, let
\[
l_{h}=\max(\{i\colon i\leq i_{\max} \text{ and } \StackHeight(P[\colon i])=h \}) \quad \text{and}\quad
r_{h}=\min(\{i\colon i\geq i_{\max} \text{ and }  \StackHeight(P[\colon i])=h \})
\]
i.e., $l_h$ (resp., $r_h$) is the last index before $i_{\max}$ (resp., the first index after $i_{\max}$) in which the stack of $P$ has size $h$.
Let $x_h$ (resp., $y_h$) be the last node  of $P[\colon l_h]$ (resp., $P[\colon r_h])$, and we call $(x_h, y_h)$ a \emph{matching pair}.
We have the following straightforward lemma.

\begin{restatable}{lemma}{lemmatchingpair}\label{lem:matching_pair}
Consider a path $P\colon u\DKPath v$ in $G\Project \Alphabet_k$.
If $\MaxStackHeight(P)\geq n^2$, then $P$ has two matching pairs $(x_h, y_h)$ and $(x_{h'}, y_{h'})$ such that
(i)~$h< h'\leq n^2$, (ii)~$x_h=x_{h'}$, and 
(iii)~$y_h=y_{h'}$.
\end{restatable}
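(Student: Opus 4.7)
The plan is a direct pigeonhole argument on node pairs indexed by stack height.

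First, I would observe that the matching pair $(x_h, y_h)$ is well-defined for every integer $h$ in the range $\{0,1,\dots,\MaxStackHeight(P)\}$, not just for $h \in [\MaxStackHeight(P)]$. Since $P$ is a Dyck path on $G\Project\Alphabet_k$, its stack height starts at $0$ (at node $u$) and ends at $0$ (at node $v$), and changes by at most one per edge. Hence, on the prefix $P[\colon i_{\max}]$ the stack height takes every integer value from $0$ up to $\MaxStackHeight(P)$, so $l_h$ (and thus $x_h$) is well-defined for each such $h$; symmetrically, on the suffix $P[i_{\max}\colon]$, the stack height drops from the maximum back to $0$, so $r_h$ (and thus $y_h$) is well-defined. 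In particular, $x_0 = u$ and $y_0 = v$ make sense.

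Next, I would count: under the hypothesis $\MaxStackHeight(P) \geq n^2$, the matching pairs $(x_h, y_h)$ are defined for each $h \in \{0,1,\dots,n^2\}$, giving $n^2+1$ indexed pairs. Each such pair lies in $V \times V$, a set of cardinality $n^2$. By the pigeonhole principle, there must exist two distinct indices $h,h' \in \{0,1,\dots,n^2\}$, which we may take to satisfy $h < h'$, such that $(x_h,y_h) = (x_{h'},y_{h'})$. Both indices automatically satisfy $h' \leq n^2$, yielding property (i); the coordinate-wise equality yields (ii) and (iii).

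There is no real obstacle here beyond the bookkeeping: the only subtlety is ensuring the pigeonhole has one extra slot, which is handled by extending the definition of matching pair to height $h=0$ using the fact that $P$ starts and ends at stack height zero. The rest is a standard counting argument, entirely analogous in spirit to \cref{lem:matching_pair_counter}, but with nodes replacing counter values and node-pairs replacing node/counter-value pairs.
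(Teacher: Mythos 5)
Your proof is correct and follows essentially the same pigeonhole argument the paper intends, but you carry out the count more carefully than the paper's own one-line proof does. The paper asserts "we have at least $(n+1)^2$ matching pairs," which does not follow from $\MaxStackHeight(P)\geq n^2$ under the paper's own definition (matching pairs are indexed by $h\in\{1,\dots,\MaxStackHeight(P)\}$): that convention yields only $n^2$ pairs at heights $h\leq n^2$, which is exactly the number of slots in $V\times V$, so no collision is forced at the stated threshold. Your observation that $h=0$ also yields a well-defined pair---a Dyck path in $G\Project\Alphabet_k$ starts and ends at stack height zero and, changing by at most one per edge, passes through every intermediate height both on the way up to and on the way down from the maximum---supplies the $(n^2+1)$-st pigeon the argument actually needs. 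The downstream use of this lemma in \cref{lem:bounded_stack_height} is indifferent to whether $h=0$, so the extension is harmless; your proof therefore closes a genuine (if small) gap in the paper's own reasoning rather than merely reproducing it.

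One small slip: you write "$x_0=u$ and $y_0=v$ make sense," but $x_0$ need not equal $u$ (nor $y_0$ equal $v$). By definition $l_0$ is the \emph{last} index at or before $i_{\max}$ where the stack height is zero, and the path may return to height zero several times before reaching the global maximum, in which case $x_0\neq u$. What actually matters for your pigeonhole---that $l_0$ and $r_0$ exist, so $(x_0,y_0)$ is a well-defined element of $V\times V$---is correct and suffices, so this slip does not affect the argument.
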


Based on \cref{lem:matching_pair} we prove the following lemma.

\begin{restatable}{lemma}{lemboundedstackheight}\label{lem:bounded_stack_height}
Consider a path $P\colon u\DKPath v$ in $G\Project \Alphabet_k$.
There is a path $Q\colon u\DKPath v$ in $G\Project \Alphabet_k$ such that
(i)~$\MaxStackHeight(Q)\leq 2\cdot n^2$, and 
(ii)~$\Counter(Q)=\Counter(P)$.
\end{restatable}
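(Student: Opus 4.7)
The plan is to prove the lemma by induction on $\MaxStackHeight(P)$. If $\MaxStackHeight(P) \leq 2n^2$ we take $Q = P$, and otherwise I construct a strictly deflated path $Q'$ with $\Counter(Q') = \Counter(P)$ and $\MaxStackHeight(Q') < \MaxStackHeight(P)$, and apply the inductive hypothesis to $Q'$.

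For the deflation step, I invoke \cref{lem:matching_pair} (applicable since $\MaxStackHeight(P) > 2n^2 \geq n^2$) to obtain heights $h < h' \leq n^2$ and nodes $x := x_h = x_{h'}$, $y := y_h = y_{h'}$. This yields the decomposition $P = P_1 \circ P_2 \circ P_3 \circ P_4 \circ P_5$, where $P_2 \colon x \to x$ raises the stack from $h$ to $h'$, $P_3 \colon x \to y$ stays at stack height $\geq h'$ throughout and contains the global maximum, and $P_4 \colon y \to y$ lowers the stack back to $h$. A crucial observation is that $\Label(P_3) \Project \Alphabet_k$ is itself a Dyck word: within $P_3$ the bottom $h'$ stack symbols are never popped, so every pop in $P_3$ is matched by a push within $P_3$. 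Consequently $P_3$ can be re-executed from any initial stack, and in particular from stack height $h$, producing a path $\hat P_3 \colon x \to y$ whose maximum stack height is $\MaxStackHeight(P) - (h' - h)$, strictly less than $\MaxStackHeight(P)$.

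I then construct $Q'$ as a rearrangement of $P$'s segments in the spirit of \cref{lem:path_deflation}: the high-stack occurrences of $P_3$ are replaced by the deflated $\hat P_3$, while $P_2$ and $P_4$ each contribute their counter exactly once in net. A candidate shape is
\[
Q' \;=\; P_1 \circ \hat P_3 \circ P_4 \circ \ov{\hat P_3} \circ P_2 \circ \hat P_3 \circ P_5,
\]
where $\ov{\hat P_3} \colon y \to x$ is the bidirected reverse of $\hat P_3$ (stack-balanced, with counter $-\Counter(P_3)$). Telescoping the counter yields $\Counter(Q') = \Counter(P_1) + \Counter(P_3) + \Counter(P_4) - \Counter(P_3) + \Counter(P_2) + \Counter(P_3) + \Counter(P_5) = \Counter(P)$, while every occurrence of $P_3$ in $Q'$ is the deflated version, so $\MaxStackHeight(Q') \leq \MaxStackHeight(P) - (h' - h) < \MaxStackHeight(P)$. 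Iterating terminates in finitely many steps with a final $Q$ satisfying $\MaxStackHeight(Q) \leq 2n^2$.

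The principal obstacle is verifying that each re-executed piece is a valid walk in $G$: most notably, executing $P_4$ at node $y$ when the stack has content only of height $h$ (rather than the original $h'$) may fail, because $P_4$'s operations are designed to pop symbols from the $T$-layer that is no longer present on top of the stack. Unlike the analogous step of \cref{lem:path_deflation}, where \cref{lem:matching_pair_counter} guarantees a safety margin $\gamma$ precluding underflow, \cref{lem:matching_pair} offers no such margin on stack heights. The remedy I plan to adopt is a careful case analysis that, when necessary, first restores the required top-of-stack content $T$ by inserting a preliminary bidirected traversal such as $\ov{P_4}$ (a push-first path, therefore always valid), then executes $P_4$ from the correct stack, and uses additional copies of $\hat P_3$ and $\ov{\hat P_3}$ to reconcile endpoints; bidirectedness of $G$ guarantees that this repair leaves the net counter equal to $\Counter(P)$. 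This technical verification, being the most delicate piece of the argument, is relegated to the appendix.
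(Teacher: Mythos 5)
Your rearrangement
\[
Q' \;=\; P_1 \circ \hat P_3 \circ P_4 \circ \ov{\hat P_3} \circ P_2 \circ \hat P_3 \circ P_5
\]
is not a valid walk, and the repair you sketch does not close the gap. After $P_1 \circ \hat P_3$ the stack is at height $h$ (with content $s_h$), but $P_4 = P[r_{h'}\colon r_h]$ is a sequence of net pops taking the stack from height $h'$ down to $h$; executing $P_4$ from height $h$ underflows (or pops symbols that are not on top). You noticed this, but the proposed fix — prepending $\ov{P_4}$ — gives the subword $\ov{P_4}\circ P_4$, which is stack-neutral and has net counter $0$. The net counter of the repaired path then equals $\Counter(P) - \Counter(P_4)$, not $\Counter(P)$. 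The difficulty is structural, not merely technical: to execute $P_4$ once with nonzero net effect you must reach $y$ with the stack genuinely at content $s_{h'}$, which requires having executed $P_2$ first; but if $P_2$ is executed first, then re-running $P_3$ from $x$ at height $h'$ is just $P_3$ itself and is not deflated. So the scheme of transporting $x\to y$ via a deflated copy of $P_3$ before $P_4$ cannot work as stated.

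The paper resolves this by decoupling the two roles you are asking $\hat P_3$ to play. It keeps $P_2$ in its original position, so the stack is at $s_{h'}$ when arriving at $x_{h'}$, and then uses a \emph{fresh} shortest Dyck path $K\colon x\DKPath y$ (with $\MaxStackHeight(K)\le n^2$, by \cite{Pierre1992}) to cross to $y$ while adding at most $n^2$ on top of $h'\le n^2$. The rearranged path is
\[
R \;=\; P_1 \circ P_2 \circ K \circ P_4 \circ \ov{K} \circ P_3 \circ P_5 ,
\]
in which $P_4$ is executed at the correct stack $s_{h'}$, $\ov{K}$ returns to $x$ balanced at height $h$, and only then is $P_3$ re-run — now starting from height $h$, hence deflated by $h'-h$. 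The new material $K,\ov K$ cancel on the counter, so $\Counter(R)=\Counter(P)$. Your observation that $\Label(P_3)\Project\Alphabet_k$ is a Dyck word, hence re-executable from any stack, is correct and is also what the paper relies on for the re-run of $P_3$; the missing ingredient is the auxiliary short Dyck path $K$, without which $P_4$ cannot be run with the right stack while still deflating $P_3$.

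A further, smaller gap: you claim $\MaxStackHeight(Q') < \MaxStackHeight(P)$ outright, but $i_{\max}$ is only the \emph{first} point of $P$ attaining the global maximum; $P_4$ or $P_5$ may attain the same maximum again, and they are reproduced verbatim. The paper therefore argues the weaker but sufficient progress measure that either the maximum strictly decreases, or the number of indices attaining it strictly decreases, and iterates on that.
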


\begin{figure}
\newcommand{\xtstep}{0.5}
\newcommand{\ytstep}{0.2}
\newcommand{\xstep}{1.15}
\newcommand{\ystep}{0.25}
\def\ybias{0.1}
\def\xbias{0.2}
\def\opac{0.75}
\def\counterthickness{1}
\def\opac{0.75}

\newcommand{\MySubPath}[5]{
\node[] (1#5) at (#1) {};
\node[] (2#5) at (#2) {};
\draw[-{Latex[width=2mm]}, ultra thick, draw=#4, shorten >=-4pt,shorten <=-2pt] (1#5) to node[below]{\textcolor{#4}{#3}} (2#5);
}

\centering
\begin{tikzpicture}[thick,
pre/.style={<-, thick, shorten >=0cm,shorten <=0cm},
post/.style={-{Latex[width=3mm]}, thick, shorten >=-0cm,shorten <=-0cm},
seqtrace/.style={->, line width=2},
postpath/.style={->, thick, decorate, decoration={zigzag,amplitude=1pt,segment length=2mm,pre=lineto,pre length=2pt, post=lineto, post length=4pt}},
node/.style={thick, draw=black, circle, inner sep = 0, minimum size=4.75mm, fill=gray!20},
marked/.style={thick, draw=black, fill=black, rectangle, inner sep = 0, minimum width=1pt, minimum height=6pt},
]

\begin{scope}[shift={(0,0)}]

\node[] (1) at (0*\xstep, 0*\ystep) {};
\node[marked] (u) at (0*\xstep, 0*\ystep) {};
\node[marked] (v) at (9.5*\xstep, 0*\ystep) {};
\node[marked] (xc) at (1.5*\xstep, 0*\ystep) {};
\node[marked] (xcp) at (3.5*\xstep, 0*\ystep) {};
\node[marked] (ycp) at (4.5*\xstep, 0*\ystep) {};
\node[marked] (yc) at (7.5*\xstep, 0*\ystep) {};

\draw[post, -, very thick, shorten <=-1mm] (1) to (v);

\node[below=\ybias of u] {$u$};
\node[below=\ybias of v] {$v$};
\node[below=\ybias of xc] {$x_h$};
\node[below=\ybias of xcp] {$x_{h'}$};
\node[below=\ybias of yc] {$y_h$};
\node[below=\ybias of ycp] {$y_{h'}$};

%
%
%
%
\MySubPath{u}{xc}{$P_1$}{black}{P1}
\MySubPath{xc}{xcp}{$P_2$}{black}{P2}
\MySubPath{xcp}{ycp}{$P_3$}{black}{P3}
\MySubPath{ycp}{yc}{$P_4$}{black}{P4}
\MySubPath{yc}{v}{$P_5$}{black}{P5}

\begin{pgfonlayer}{bg}

\def\xshift{0.1}
\def\yshift{0.1}
\def\pairscolor{\darkred}

\begin{scope}[shift={(-1.2,0)}]
\draw[-, line width=\counterthickness, draw=\ColorOne] (10*\xstep, 9.5*\ystep) to (10.4*\xstep, 9.5*\ystep);
\node[ align=left] at (11.25*\xstep+0.14, 9.5*\ystep) {Stack Height};
\draw[-, line width=\counterthickness, draw=\pairscolor] (10*\xstep, 8*\ystep) to (10.4*\xstep, 8*\ystep);
\node[] at (11.15*\xstep+0.4, 8*\ystep) {Matching Pairs};
\end{scope}

\draw[-, line width=\counterthickness, rounded corners=10, draw=\ColorOne] (0*\xstep, 0*\ystep) to ($ (xc) + (-0.75*\xstep, 1*\ystep) $) to ($ (xc) + (0, 4*\ystep) $) to ($ (xc) + (0.5*\xstep, 7.5*\ystep) $) to ($ (xc) + (1.25*\xstep, 4.5*\ystep) $) to ($ (xcp) + (0*\xstep, 9*\ystep) $) to ($ (xcp) + (0.6*\xstep, 12.8*\ystep) $) to ($ (ycp) + (0*\xstep, 9*\ystep) $) to ($ (ycp) + (0.4*\xstep, 5*\ystep) $) to ($ (ycp) + (0.9*\xstep, 11*\ystep) $) to ($ (ycp) + (1.4*\xstep, 7.1*\ystep) $) to ($ (ycp) + (2.1*\xstep, 12.8*\ystep) $) to ($ (yc) + (0*\xstep, 4*\ystep) $) to ($ (yc) + (0.75*\xstep, 1.5*\ystep) $) to ($ (yc) + (1.3*\xstep, 7*\ystep) $) to ($ (v) + (0*\xstep, 0*\ystep) $);

\draw[-, line width=\counterthickness, rounded corners=10, draw=\pairscolor] (0*\xstep+\xshift, 0*\ystep) to ($ (xc) + (-0.75*\xstep, 1*\ystep-\yshift) $) to ($ (xc) + (0, 4*\ystep-\yshift) $) to ($ (xc) + (0.2*\xstep, 4.5*\ystep + \yshift) $);

\draw[-, line width=\counterthickness, rounded corners=10, draw=\pairscolor, dashed] ($ (xc) + (0.2*\xstep, 4.5*\ystep + \yshift) $) to  ($ (xc) + (1.25*\xstep+2*\xshift, 4.5*\ystep + \yshift) $);

\draw[-, line width=\counterthickness, rounded corners=10, draw=\pairscolor] ($ (xc) + (1.25*\xstep+\xshift, 4.5*\ystep + \yshift) $) to ($ (xcp) + (0*\xstep, 9*\ystep-2*\yshift) $) to ($ (xcp) + (0.6*\xstep, 12.8*\ystep-\yshift) $) to ($ (ycp) + (0*\xstep, 9*\ystep-2*\yshift) $) to ($ (ycp) + (0.4*\xstep-\xshift, 5*\ystep+1.5*\yshift) $);

\draw[-, line width=\counterthickness, rounded corners=10, draw=\pairscolor, dashed] ($ (ycp) + (0.4*\xstep-\xshift, 5*\ystep+1.5*\yshift) $) to ($ (yc) + (0*\xstep-3*\xshift, 5*\ystep+1.5*\yshift) $);



\draw[-, line width=\counterthickness, rounded corners=10, draw=\pairscolor]  ($ (yc) + (0*\xstep-3*\xshift, 5*\ystep+1.5*\yshift) $) to ($ (yc) + (0*\xstep, 4*\ystep - \yshift) $) to ($ (yc) + (0.75*\xstep-2*\xshift, 1.5*\ystep + \yshift) $);

\draw[-, line width=\counterthickness, rounded corners=10, draw=\pairscolor, dashed]   ($ (yc) + (0.75*\xstep-2*\xshift, 1.5*\ystep + \yshift) $) to ($ (yc) + (1.75*\xstep-0*\xshift, 1.5*\ystep + \yshift) $);

\draw[-, line width=\counterthickness, rounded corners=10, draw=\pairscolor]   ($ (yc) + (1.75*\xstep-0*\xshift, 1.5*\ystep + \yshift) $) to ($ (v) + (-0\xshift, 0*\ystep + 0*\yshift) $);

\node[] at (0*\xstep-\xtstep, 4*\ystep) {$h$};
\node[] at (0*\xstep-\xtstep, 9*\ystep) {$h'$};

\draw[->, very thick] (0*\xstep, 0*\ystep) to (0*\xstep, 14*\ystep);

\draw[-, dashed, gray, thick, opacity=\opac] (xc|- 0,0) to (xc|- 0,4*\ystep);
\draw[-, dashed, gray, thick, opacity=\opac] (yc|- 0,0) to (yc|- 0,4*\ystep);
\draw[-, dashed, gray, thick, opacity=\opac] (xcp|- 0,0) to (xcp|- 0,9*\ystep);
\draw[-, dashed, gray, thick, opacity=\opac] (ycp|- 0,0) to (ycp|- 0,9*\ystep);

\draw[-, dashed, gray, thick, opacity=\opac] (yc|- 0,4*\ystep) to (0,4*\ystep);
\draw[-, dashed, gray, thick, opacity=\opac] (ycp|- 0,9*\ystep) to (0,9*\ystep);

\draw[->, bend left=50, thick, dashed, shorten >=1mm,shorten <=1mm] (xcp) to node[above]{$K$} (ycp);
\draw[<-, bend right=10, thick, dashed, rounded corners, shorten >=1mm,shorten <=1mm] ($ (xcp) + (0, 0*\ystep) $) to ($ (xcp) + (3*\xbias, -3*\ystep) $) to node[below]{$\ov{K}$} ($ (yc) + (-3*\xbias, -3*\ystep) $) to (yc);


\end{pgfonlayer}

\end{scope}

\end{tikzpicture}
\caption{
Illustration of the path rearrangement behind \cref{lem:bounded_stack_height}.
}
\label{fig:bounded_stack_height}
\end{figure}

The key idea behind \cref{lem:bounded_stack_height} is as follows.
Assume that $\MaxStackHeight(P)\geq 2\cdot n^2$.
We apply \cref{lem:matching_pair} to obtain two matching pairs $(x_h, y_h)$ and $(x_{h'}, y_{h'})$.
We decompose $P$ as $P=P_1\circ P_2\circ P_3 \circ P_4 \circ P_5$ as follows (see \cref{fig:bounded_stack_height}).
\begin{compactenum}
\item $P_1$ is the prefix of $P$ up to $x_h$.
\item $P_2$ is the sub-path $x_h\Path x_{h'}$ of $P$.
\item $P_3$ is the sub-path $x_{h'}\Path y_{h'}$ of $P$.
\item $P_4$ is the sub-path $y_{h'}\Path y_{h}$ of $P$.
\item $P_5$ is the suffix of $P$ from $x_h$ on.
\end{compactenum}
Moreover, we let $K$ be a shortest path $K\colon x_h\DKPath y_h$, and it is known that $\MaxStackHeight(K)\leq n^2$~\cite{Pierre1992}.
We rearrange $P$ to obtain the path
$R=P_1\circ P_2\circ K \circ P_4 \circ \ov{K} \circ P_3\circ P_5$.
Now $R$ does not reach the maximum stack height that $P$ reaches while traversing $P_3$,
as, in the beginning of $P_3$, the stack height has decreased from $h'$ to $h$.
Moreover, all the way up to $P_3$, the stack height of $R$ is (strictly) below $h'+n^2\leq 2\cdot n^2$.
Finally, since $R$ traverses every edge of $P$ exactly once, the two paths executed in $G$ have the same counter value at the end.
\cref{lem:bounded_stack_height} is obtained by repeated applications this process until we end up with a path $Q$ as stated in the lemma.

Note that the above process only concerns the stack height. Indeed, if we consider the counter along $Q$, then $Q$ is not necessarily a valid path as the counter may become negative because of the repeated rearrangements.
Assume, however, that we also have a cyclic path $F_u\colon u\Path u$ that ends with an empty stack and increases the counter by a positive amount.
Then we may prefix $Q$ by iterating $F_u$ a number of $\ell_u$ times until the counter becomes large enough to stay non-negative along $Q$.
Now the corresponding path $F_u^{\ell_u}\circ Q$ is a valid path, but no longer witnesses the $\Dyck_k\odot \Dyck_1$ reachability of $v$ from $u$, as the counter at the end of the path equals $\Counter(F_u^{\ell_u})>0$.
However, if we have a similar path $F_v\colon v\Path v$ from $v$, we can follow the process symmetrically on the side of $v$.
In the end, we construct a path $T$ that is a reachability witness as
\[
T=F_u^{\ell_u}\circ Q \circ F_v^{\ell_v}\circ \ov{Q} \circ \ov{F}^{\ell_u}_u \circ Q\circ \ov{F}^{\ell_v}_v
\]
The crucial observation is that each of the above sub-paths starts and ends with an empty stack.
Hence, the maximum stack height of $T$ is the maximum among the maximum stack heights of these sub-paths.
We thus arrive at the following lemma.

\begin{restatable}{lemma}{lemboundedstackheightwitness}\label{lem:bounded_stack_height_witness}
Assume that there are cyclic paths $F_u\colon u\Path u$ and $F_v\colon v\Path v$ such that $\Stack(F_u)=\Stack(F_v)=\epsilon$, and $\Counter(F_u), \Counter(F_v)>0$.
Let $\zeta=\max(\MaxStackHeight(F_u), \MaxStackHeight(F_v))$.
If $u\DKDOPath v$, then there is a path $T\colon u\DKDOPath v$ such that
$\MaxStackHeight(T)\leq \max(\zeta, 2\cdot n^2)$.
\end{restatable}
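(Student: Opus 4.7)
The plan is to begin with an arbitrary witness $P\colon u\DKDOPath v$, flatten it via \cref{lem:bounded_stack_height} so the $\Alphabet_k$-stack stays bounded, and then sandwich the result with iterates of the pumps $F_u, F_v$ and their reverses so that the $\Alphabet_1$-counter stays non-negative throughout while still ending back at zero.

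Concretely, fix a witness $P$, so $\Stack(P)=\epsilon$ and $\Counter(P)=0$, and apply \cref{lem:bounded_stack_height} to $P$ viewed as a path in $G\Project \Alphabet_k$ to obtain a path $Q\colon u\Path v$ with $\MaxStackHeight(Q)\leq 2\cdot n^2$, $\Stack(Q)=\epsilon$, and $\Counter(Q)=0$. The path $Q$ need not be a valid path of $G$ since its counter can drop below zero; let $m=\max\bigl(0,\, -\min_{P'\in\Prefixes(Q)}\Counter(P')\bigr)$ be this deficit. Write $a=\Counter(F_u)$, $b=\Counter(F_v)$ (both positive), pick integers $\ell_u,\ell_v$ with $\ell_u\cdot a\geq m$ and $\ell_v\cdot b\geq m$, and set
\[
T \;=\; F_u^{\ell_u}\circ Q\circ F_v^{\ell_v}\circ \ov{Q}\circ \ov{F}_u^{\ell_u}\circ Q\circ \ov{F}_v^{\ell_v}.
\]
A direct telescoping shows $T\colon u\Path v$, $\Stack(T)=\epsilon$ (using that reversal in a bidirected graph preserves balancedness), and $\Counter(T)=\ell_u a+\ell_v b-\ell_u a-\ell_v b=0$. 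By the choice of $\ell_u,\ell_v$, every $Q$- and $\ov{Q}$-segment of $T$ begins with counter value at least $m$, absorbing its worst-case $-m$ excursion, while each pump segment (and its reverse) is itself a valid path whose intermediate counter values are only shifted upward by the surrounding iterates. Hence $T$ is a valid $\Dyck_k\odot \Dyck_1$ reachability witness.

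For the stack-height bound, the crucial observation is that every one of the seven concatenated subpaths starts and ends with an empty $\Alphabet_k$-stack, so stack heights do not accumulate across subpaths and $\MaxStackHeight(T)$ equals the maximum of $\MaxStackHeight$ taken over the seven subpaths individually. By bidirectedness each reversed subpath traverses the same sequence of stack heights in reverse order, so the maximum is attained among $\MaxStackHeight(F_u)$, $\MaxStackHeight(F_v)$, and $\MaxStackHeight(Q)$, which is at most $\max(\zeta,\, 2\cdot n^2)$, as required. The main obstacle is the bookkeeping for counter non-negativity along $T$: the rearrangement of \cref{lem:bounded_stack_height} can introduce arbitrarily deep negative excursions inside $Q$, and the construction must ensure that all three occurrences of $Q$ (two forward, one reversed) are entered with enough slack from the preceding pump iterates; once this is arranged, the empty-stack invariant at every subpath boundary makes the stack bound immediate.
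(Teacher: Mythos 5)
Your proposal is correct and matches the paper's proof essentially verbatim: you flatten $P$ to $Q$ via \cref{lem:bounded_stack_height}, define the same pumping multiplicities (your $m$ is the paper's $\ell$), and build the identical sandwich path $T=F_u^{\ell_u}\circ Q\circ F_v^{\ell_v}\circ \ov{Q}\circ \ov{F}_u^{\ell_u}\circ Q\circ \ov{F}_v^{\ell_v}$, concluding the stack-height bound from the empty-stack invariant at each subpath boundary. The only difference is that you spell out the counter non-negativity bookkeeping a bit more explicitly, which the paper leaves as ``easy to verify.''
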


In particular, the bound $\zeta$ comes from the stack height of the circular paths $F_u$ and $F_v$,
while the bound $2\cdot n^2$ comes from the stack height of $Q$, following~\cref{lem:bounded_stack_height}.
At this point, a natural question is whether a bound on the maximum stack height of reachability witnesses bounds the search space for a witness.
The following lemma shows that such a bound implies a bound on the length of the witness, and thus answers this question in positive.

\begin{restatable}{lemma}{lemboundedlengthpath}\label{lem:bounded_length_path}
Assume that there is a path $P\colon u\DKDOPath v$ with $\MaxStackHeight(P)\leq \delta$, for some $\delta\in \Nats$.
Then there is a path $Q\colon u\DKDOPath v$ with $|Q|\leq n^3\cdot k^{3\cdot \delta}$.
\end{restatable}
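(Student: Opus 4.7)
\textbf{Proof plan for \cref{lem:bounded_length_path}.}
The plan is to reduce the question to short-witness reachability in a one-counter net (one-dimensional VASS without zero tests), obtained by compiling the bounded $\Dyck_k$-stack into the state space, and then bound the counter by a quadratic-in-states quantity so that irreducibility forces the length to be cubic.

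\emph{Step 1: reduction to a one-counter net.} I build an auxiliary one-counter net $A$ whose states are pairs $(x,S)$ with $x\in V$ and $S\in\Open(\Alphabet_k)^{\le \delta}$ a $\Dyck_k$-stack of height at most $\delta$. The number of states is
\[
N \;\le\; n\cdot\sum_{i=0}^{\delta} k^{i}\;=\;O\!\left(n\cdot k^{\delta}\right).
\]
Each edge of $G$ induces a transition of $A$: edges labelled by $\Alphabet_k\cup\{\epsilon\}$ become state-changing transitions that push/pop $S$ (guarded by the $\delta$-height bound and leaving the counter untouched); edges labelled by $\Alphabet_1$ become counter-$\pm 1$ transitions that keep the state fixed. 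By construction, paths $P\colon u\DKDOPath v$ in $G$ with $\MaxStackHeight(P)\le \delta$ are in bijection with runs of $A$ from $((u,\epsilon),0)$ to $((v,\epsilon),0)$, so the hypothesized $P$ supplies such a run.

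\emph{Step 2: quadratic counter bound.} I claim there is always such a run whose counter stays in $\{0,1,\dots,cN^{2}\}$ for some absolute constant $c$. The argument mirrors the matching-pair/path-deflation surgery of \cref{lem:matching_pair_counter} and \cref{lem:path_deflation}, applied now to $A$: if the counter ever exceeds the quadratic threshold, then pigeonhole on $A$'s $N$ states forces some state to be visited twice on the ``up-swing'' to the global counter maximum and twice on the ``down-swing'' at well-separated counter levels, and swapping the inner cycles symmetrically as in \cref{lem:path_deflation} produces a new run whose global counter maximum either strictly decreases or is attained strictly fewer times. Iterating terminates with a run whose counter stays $\le cN^{2}$.

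\emph{Step 3: cubic length bound via irreducibility.} Wlog take $Q$ irreducible (in the sense of \cref{sec:prelim}). Irreducibility of $Q$ in $G$ means no triple (node, $\Dyck_k$-stack, counter) repeats, i.e., no pair (state of $A$, counter) repeats along the corresponding run. Since the number of such configurations with counter in $\{0,\dots,cN^{2}\}$ is $O(N^{3})$, we get $|Q|=O(N^{3})=O(n^{3}\cdot k^{3\delta})$, which (absorbing the constant into the precise statement, or choosing the surgery's constants appropriately) yields $|Q|\le n^{3}\cdot k^{3\delta}$.

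The main obstacle is Step~2: faithfully porting the deflation surgery of Section~3, which was tailored to the symmetric $\Dyck_1\odot \Dyck_1$ setting where the ``second counter'' was pigeonholed in conjunction with the node, to the asymmetric setting where the $\Dyck_k$-stack must first be compiled into the $A$-state. The key observation enabling this port is that after the compilation the only remaining dynamic quantity is the counter, so pigeonholing on $A$-states and executing the cycle swap is structurally identical to Section~3; alternatively, one can bypass Step~2 entirely by invoking the classical $O(N^{3})$ short-witness bound for reachability in one-counter nets, which gives the same conclusion.
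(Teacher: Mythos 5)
Your proof proposal is correct in outline and reaches the same final counting argument as the paper: both compile the bounded $\Dyck_k$-stack together with the node into a ``configuration'' space of size $N=O(n\cdot k^{\delta})$, bound the counter by $O(N^{2})$, and then invoke irreducibility (no triple node/stack/counter repeats) to conclude $|Q|=O(N^{3})$. Making the one-counter-net $A$ explicit is a nice packaging of what the paper does implicitly.

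The substantive difference is in your Step~2. The paper's proof does \emph{not} port the deflation surgery of \cref{lem:path_deflation}; it uses a strictly simpler cut-and-splice. Fixing the first index $i_{\max}$ where the counter peaks, it defines, for each counter level $c$, the last index $l_c\le i_{\max}$ and first index $r_c\ge i_{\max}$ at which the counter equals $c$, records the configuration quadruple $\bigl((x_c,p_c),(y_c,q_c)\bigr)$ at those two indices, and pigeonholes on the at most $n^{2}k^{2\delta}$ such quadruples to find $c<c'$ agreeing on both sides. It then simply discards the two ``rise'' and ``fall'' arcs: $P'=P[\colon l_c]\circ P[l_{c'}\colon r_{c'}]\circ P[r_c\colon]$. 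The spliced middle segment had counter $\ge c'$ throughout, so after the implicit downward shift by $c'-c$ it stays $\ge c\ge 0$; no rearrangement, no reversed traversal, and no well-separation condition of the form $c_2-c_1<\gamma/2$ is needed. This also keeps the constant exactly $1$, so the final count $n\cdot k^{\delta}\cdot n^{2}k^{2\delta}=n^{3}k^{3\delta}$ matches the stated bound with no slack to absorb. Your deflation-style rearrangement would also terminate and work, but it is heavier machinery than the situation calls for: the reason \cref{lem:path_deflation} needs rearrangement rather than deletion is that deletion would perturb the \emph{other} counter, and after your compilation there is no other counter left to protect. Your alternative route via a black-box short-witness theorem for one-counter nets is also reasonable, but the direct splicing argument is self-contained and what the paper actually uses.
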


We finally have all the ingredients to prove \cref{thm:dkd1_decidable}.

\begin{proof}[Proof of \cref{thm:dkd1_decidable}.]
We argue about the correctness of \cref{algo:dkd1}.
Clearly, if $u$ does not cover $(v,0)$ or $v$ does not cover $(u,0)$, due to \cref{lem:reach_implies_cover} the algorithm returns $\False$ correctly in \cref{line:algo_dkd1_return_false}.
On the other hand, if $u$ does not cover $(v,1)$, then the path that witnesses the coverability of $(v,0)$ from $u$ also witnesses reachability.
Similarly if $v$ does not cover $(u,1)$, and thus the algorithm returns $\True$ correctly in \cref{line:algo_dkd1_return_true}.

Now consider the case that $u$ covers $(v,1)$ and $v$ covers $(u,1)$, and let $P_u\colon u\Path v$ and $P_v\colon v\Path u$ be the corresponding witness paths.
Then we have paths $F_u=P_u\circ P_v$ and $F_v=P_v\circ P_u$ with $\Stack(F_u)=\Stack(F_v)=\epsilon$ and $\Counter(F_u), \Counter(F_v)\geq 0$.
By \cref{lem:bounded_stack_height_witness}, if $v$ is $\Dyck_k\odot\Dyck_1$-reachable from $u$ then there is a path $T\colon u\Path v$ such that
$\MaxStackHeight(T)\leq\delta= \max(\zeta, 2\cdot n^2)$.
Finally, \cref{lem:bounded_length_path} applies, and thus there is  a path $Q\colon u\DKDOPath v$ with $|Q|\leq n^3\cdot k^{3\cdot \delta}$.
\cref{algo:dkd1} iterates over all such paths $Q$ in \cref{line:algo_dkd1_loop}, and tests whether any of them witnesses the reachability of $v$ from $u$, returning $\True$ if such a path is found, and $\False$ otherwise.
The desired result follows.
\end{proof}

\subsection{Bounded-Counter $\Dyck_k\odot \Dyck_1$ Reachability}\label{subsec:dkd1_bounded}

In this section we focus on a bounded version of $\Dyck_k\odot\Dyck_1$ reachability.
The goal is to determine all nodes $u,v$ such that $v$ is $\DKDOPath$ reachable from $u$ via a path $P$ such that $\MaxCounter(P)=O(n)$.
We first describe the algorithm for the quadratic upper bound (\cref{thm:dkd1_quadratic_upper_bound}),
and then prove that the bound is optimal (\cref{thm:dkd1_quadratic_lower_bound}).

\Paragraph{Upper bound.}
The algorithm for the upper bound follows the counter-flattening idea of \cref{thm:d1d1_upper_bounds}.
In particular, we use an algorithm identical to \cref{algo:d1d1}, with the exception that the counter bound is $c=O(n)$.
Note that the algorithm of~\cite{Chatterjee18} solves bidirected $\Dyck_{k}$ reachability, i.e., when $k\geq 2$ in general, and thus applies to this setting as well.

\begin{proof}[Proof of \cref{thm:dkd1_quadratic_upper_bound}]
The correctness follows straightforwardly.
For the complexity, observe that the graph $G'$ has $O(n\cdot c)=O(n^2)$ nodes.
By \cref{rem:bounded_degree}, the graph $G$ is bounded-degree, which implies that $G'$ is also bounded-degree, and hence $G'$ has $O(n^2)$ edges.
By~\cite{Chatterjee18}, solving $\Dyck_{k}$-reachability on $G'$ requires $O(n^2\cdot \alpha (n))$ time.
The desired result follows.
\end{proof}

The simple algorithm behind \cref{thm:dkd1_quadratic_upper_bound} leads to the natural question of whether a more involved algorithm can achieve a better bound, i.e., below quadratic.
We next establish \cref{thm:dkd1_quadratic_lower_bound}, showing that, in fact, no algorithm can bring the complexity below quadratic, under standard complexity-theoretic hypotheses.
To show this, we establish a fine-grained reduction from the problem of orthogonal vectors.

\Paragraph{Orthogonal vectors ($\OV$).}
The input to $\OV$ is two sets $X,Y$, each containing $n$ vectors in $\{0,1\}^{D}$, for some dimension $D=\omega(\log n)$.
The task is to determine if there exists a pair $(x_i,y_j)\in X\times Y$ that is orthogonal, i.e., for each $\ell\in [D]$, we have $x_i[\ell]\cdot y_j[\ell] = 0$.
The respective hypothesis states that the problem cannot be solved in time $O(n^{2-\epsilon})$, for any fixed $\epsilon>0$~\cite{Williams19}.

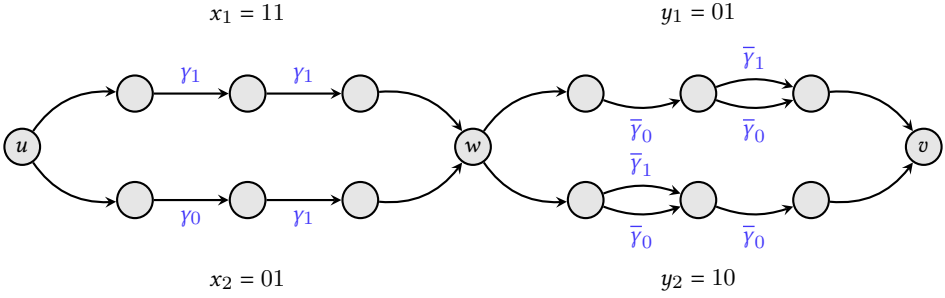
\begin{figure}
\newcommand{\xtstep}{0.9}
\newcommand{\ytstep}{0.5}
\newcommand{\xstep}{1.5}
\newcommand{\ystep}{0.7}

\centering
\begin{tikzpicture}[thick,
pre/.style={<-, thick, shorten >=0cm,shorten <=0cm},
post/.style={->, thick, shorten >=-0cm,shorten <=-0cm},
seqtrace/.style={->, line width=2},
postpath/.style={->, thick, decorate, decoration={zigzag,amplitude=1pt,segment length=2mm,pre=lineto,pre length=2pt, post=lineto, post length=4pt}},
node/.style={thick, draw=black, circle, inner sep = 0, minimum size=4.75mm, fill=gray!20},
font={\small},
]

\begin{scope}[shift={(0,0)}]

\node[node] (u) at (0*\xstep, 0*\ystep) {$u$};
\node[node] (x11) at (1*\xstep, 1*\ystep) {};
\node[node] (x12) at (2*\xstep, 1*\ystep) {};
\node[node] (x13) at (3*\xstep, 1*\ystep) {};
\node[node] (x21) at (1*\xstep, -1*\ystep) {};
\node[node] (x22) at (2*\xstep, -1*\ystep) {};
\node[node] (x23) at (3*\xstep, -1*\ystep) {};
\node[node] (q) at (4*\xstep, 0*\ystep) {$w$};
\node[node] (y11) at (5*\xstep, 1*\ystep) {};
\node[node] (y12) at (6*\xstep, 1*\ystep) {};
\node[node] (y13) at (7*\xstep, 1*\ystep) {};
\node[node] (y21) at (5*\xstep, -1*\ystep) {};
\node[node] (y22) at (6*\xstep, -1*\ystep) {};
\node[node] (y23) at (7*\xstep, -1*\ystep) {};

\node[] at (2*\xstep, 2.5*\ystep) {$x_1=11$};
\node[] at (2*\xstep, -2.5*\ystep) {$x_2=01$};

\node[node] (t) at (8*\xstep, 0*\ystep) {$v$};

\draw[post,bend left=30] (u) to (x11);
\draw[post,bend right=30] (u) to (x21);
\draw[post,bend left=30] (q) to (y11);
\draw[post,bend right=30] (q) to (y21);

\draw[post,bend left=30] (x13) to (q);
\draw[post, bend right=30] (x23) to (q);
\draw[post,bend left=30] (y13) to (t);
\draw[post, bend right=30] (y23) to (t);

\draw[post] (x11) to node[above]{$\MyColorOne{\gamma_1}$} (x12);
\draw[post] (x12) to node[above]{$\MyColorOne{\gamma_1}$} (x13);

\draw[post] (x21) to node[below]{$\MyColorOne{\gamma_0}$} (x22);
\draw[post] (x22) to node[below]{$\MyColorOne{\gamma_1}$} (x23);

\draw[post, bend right=20] (y11) to node[below]{$\MyColorOne{\ov{\gamma}_0}$} (y12);
\draw[post, bend right=20] (y12) to node[below]{$\MyColorOne{\ov{\gamma}_0}$} (y13);
\draw[post, bend left=20] (y12) to node[above]{$\MyColorOne{\ov{\gamma}_1}$} (y13);

\draw[post, bend right=20] (y21) to node[below]{$\MyColorOne{\ov{\gamma}_0}$} (y22);
\draw[post, bend left=20] (y21) to node[above]{$\MyColorOne{\ov{\gamma}_1}$} (y22);
\draw[post, bend right=20] (y22) to node[below]{$\MyColorOne{\ov{\gamma}_0}$} (y23);

\node[] at (6*\xstep, 2.5*\ystep) {$y_1=01$};
\node[] at (6*\xstep, -2.5*\ystep) {$y_2=10$};

\end{scope}

\end{tikzpicture}
\caption{
A simple reduction from $\OV$ to Dyck reachability on non-bidirected graphs.
}
\label{fig:ov_simple}
\end{figure}

\Paragraph{Reduction.}
Given an instance $(X,Y)$ of $\OV$,
we construct an interleaved and bidirected Dyck graph $G=(V,E,\Alphabet=\Alphabet_1\DUnion\{+1,-1 \})$ such that for two distinguished nodes $u,v\in V$ we have a path $P\colon u\DKDOPath v$ with $\MaxCounter(P)=O(n)$ iff there exists an orthogonal pair $(x_i,y_j)\in X\times Y$.

\SubParagraph{Intuition.}
Before presenting the construction we provide some intuition.
To develop some insight, we first consider a simple reduction of $\OV$ to plain Dyck reachability (i.e., we don't have a counter) for non-bidirected graphs (see \cref{fig:ov_simple}).
Starting from $u$, we have $n$ parallel paths to an intermediate node $q$.
Along the $i$-th such path, we push on the stack the encoding of the vector $x_i$ using the symbols $\gamma_0$ and $\gamma_1$ in the straightforward way.
A similar encoding for the vectors in $Y$ suffices, where we match the contents of the stack by traversing a path that corresponds to a vector $y_j$.
If $y_{j}[\ell]=1$, then we can only move forward if $x_i[\ell]=0$, thus we have a labeled transition that pops $\gamma_0$ from the stack.
If $y_{j}[\ell]=0$, then we expect either $x_i[\ell]=0$ or $x_i[\ell]=1$, thus we have two parallel labeled transitions, the first popping $\gamma_0$ and the second popping $\gamma_1$.

The above construction works for Dyck reachability on non-bidirected graphs, but fails for bidirected graphs.
The issue is that, if $y_{j}[\ell]=0$, then we can follow the edge that pops $\gamma_1$, and then follow the other edge (i.e., the one that pops $\gamma_0$) in inverse direction, which has the effect of pushing $\gamma_0$ on the stack.
But now the stack encodes a vector $x$ that is not part of $X$, which affects correctness.
Note, however, that the above construction reduces $\OV$ to Dyck reachability, as opposed to interleaved Dyck reachability
(i.e., it does not make use of the counter).

To alleviate the bidirectedness problem in interleaved Dyck reachability, we make use of the counter to force that certain edges are traversed only one way.
Similarly as before, to encode $y_j[\ell]$, we have an edge popping $\gamma_0$.
However, when $y_j[\ell]=0$, in order to pop $\gamma_1$ we have to traverse a counter gadget that increases the counter by $2^{\ell}$.
This implies that this path cannot be traversed backwards, as this would require to decrease the counter by $2^{\ell}$, which is a value that cannot have been accumulated so far (the current counter value can be at most $\sum_{l<\ell}2^l=2^{\ell}-1$).
In the end, we self-loop on $v$ to reduce the counter to $0$.
See \cref{fig:ov} for an illustration.

\begin{figure}
\newcommand{\xtstep}{0.9}
\newcommand{\ytstep}{0.5}
\newcommand{\xstep}{1.3}
\newcommand{\ystep}{0.65}
\newcommand{\xxstep}{0.5}

\centering
\begin{tikzpicture}[thick,
pre/.style={<-, thick, shorten >=0cm,shorten <=0cm},
post/.style={->, thick, shorten >=-0cm,shorten <=-0cm},
seqtrace/.style={->, line width=2},
postpath/.style={->, thick, decorate, decoration={zigzag,amplitude=1pt,segment length=2mm,pre=lineto,pre length=2pt, post=lineto, post length=4pt}},
node/.style={thick, draw=black, circle, inner sep = 0, minimum size=4.75mm, fill=gray!20},
counter/.style={rectangle, draw=black, inner sep=0, minimum width=25, minimum height=5mm, fill=gray!0, rounded corners},
font={\small},
]

\begin{scope}[shift={(0,0)}]

\node[node] (u) at (0*\xstep, 0*\ystep) {$u$};
\node[node] (x11) at (1*\xstep, 1*\ystep) {};
\node[node] (x12) at (2*\xstep, 1*\ystep) {};
\node[node] (x13) at (3*\xstep, 1*\ystep) {};
\node[node] (x21) at (1*\xstep, -1*\ystep) {};
\node[node] (x22) at (2*\xstep, -1*\ystep) {};
\node[node] (x23) at (3*\xstep, -1*\ystep) {};
\node[node] (q) at (4*\xstep, 0*\ystep) {$w$};
\node[node] (y11) at (5*\xstep, 1*\ystep) {};
\node[node] (y12) at (6*\xstep+\xxstep, 1*\ystep) {};
\node[node] (y13) at (7*\xstep+2*\xxstep, 1*\ystep) {};
\node[node] (y21) at (5*\xstep, -1*\ystep) {};
\node[node] (y22) at (6*\xstep+\xxstep, -1*\ystep) {};
\node[node] (y23) at (7*\xstep+2*\xxstep, -1*\ystep) {};
\node[counter] (counter11) at (5.5*\xstep+\xxstep/2, 2.5*\ystep) {\MyColorTwo{$H_{2^1}$}};
\node[counter] (counter12) at (6.5*\xstep+1.5*\xxstep, 2.5*\ystep) {\MyColorTwo{$H_{2^2}$}};
\node[counter] (counter21) at (5.5*\xstep+\xxstep/2, -2.5*\ystep) {\MyColorTwo{$H_{2^1}$}};
\node[counter] (counter22) at (6.5*\xstep+1.5*\xxstep, -2.5*\ystep) {\MyColorTwo{$H_{2^2}$}};


\node[] at (2*\xstep, 2.5*\ystep) {$x_1=11$};
\node[] at (2*\xstep, -2.5*\ystep) {$x_2=01$};

\node[node] (t) at (8*\xstep+\xxstep, 0*\ystep) {$v$};

\draw[post,bend left=30] (u) to (x11);
\draw[post,bend right=30] (u) to (x21);
\draw[post,bend left=30] (q) to (y11);
\draw[post,bend right=30] (q) to (y21);

\draw[post,bend left=30] (x13) to (q);
\draw[post, bend right=30] (x23) to (q);
\draw[post,bend left=30] (y13) to (t);
\draw[post, bend right=30] (y23) to (t);

\draw[post] (x11) to node[above]{$\MyColorOne{\gamma_1}$} (x12);
\draw[post] (x12) to node[above]{$\MyColorOne{\gamma_1}$} (x13);

\draw[post] (x21) to node[below]{$\MyColorOne{\gamma_0}$} (x22);
\draw[post] (x22) to node[below]{$\MyColorOne{\gamma_1}$} (x23);

\draw[post, bend right=0] (y12) to node[below]{$\MyColorOne{\ov{\gamma}_1}$} (y13);
\draw[post, bend right=0] (y11) to node[left]{$\MyColorOne{\ov{\gamma}_0}$} (counter11);
\draw[post, bend right=0] (counter11) to node[]{} (y12);
\draw[post, bend right=0] (y12) to node[left]{$\MyColorOne{\ov{\gamma}_0}$} (counter12);
\draw[post, bend right=0] (counter12) to node[]{} (y13);

\draw[post, bend right=0] (y21) to node[above]{$\MyColorOne{\ov{\gamma}_1}$} (y22);
\draw[post, bend left=0] (y21) to node[left]{$\MyColorOne{\ov{\gamma}_0}$} (counter21);
\draw[post, bend left=0] (counter21) to node[below]{} (y22);
\draw[post, bend left=0] (y22) to node[left]{$\MyColorOne{\ov{\gamma}_0}$} (counter22);
\draw[post, bend left=0] (counter22) to node[below]{} (y23);

%

\draw[post, loop right, out=25, looseness=10] (t) to node[right]{\MyColorTwo{$-1$}} (t);

\node[] at (6*\xstep+\xxstep, 3.5*\ystep) {$y_1=01$};
\node[] at (6*\xstep+\xxstep, -3.5*\ystep) {$y_2=10$};

\end{scope}

\end{tikzpicture}
\caption{
Reduction from $\OV$ to $\Dyck_k\odot \Dyck_1$-reachability with linearly bounded counter.
A path $P_x\colon u\Path w$ pushes on the stack an encoding of the vector $x\in X$ using symbols $\gamma_0, \gamma_1$.
A path $P_y\colon w\Path v$ pops from the stack any vector that is orthogonal with $y$.
When $P_y$ pops a $\gamma_0$ in the $\ell$-th coordinate, it increases the counter by $2^{\ell}$.
In the end, the counter is emptied by self-looping on $v$.
}
\label{fig:ov}
\end{figure}

\SubParagraph{Formal construction.}
We now present the formal construction.
\begin{compactenum}
\item We have special nodes $\{u,w,v\}\cup \bigcup_{i\in [n], j\in [D+1]} \{x_i^{\ell}, y_i^{\ell} \}$, as well as $n$ copies of the counter gadgets $H_{2^{\ell}}$, for every $\ell\in [D]$.
\item For every $i\in[n]$, we have edges $u\DTo{}x_i^1$, $x_i^{D+1}\DTo{}w$, $w\DTo{}y_i^{D+1}$ and $y_i^{1}\DTo{}v$.
\item For every $i\in[n]$ and $\ell\in [D]$, we have $x_i^\ell\DTo{\gamma_j}x_i^{\ell+1}$, where $j=x_i[\ell]$.
\item For every $i\in[n]$ and $\ell\in [D]$, we have $y_{i+1}^\ell\DTo{\ov{\gamma}_0}y_i^{\ell}$, where $j=y_i[\ell]$.
Moreover, if $y_i[\ell]=0$, we have an edge $y_i^{\ell}\DTo{\ov{\gamma}_1}p_i^{\ell}$ and an edge $q_i^{\ell}\DTo{}y_{i+1}$,
where $p_i^{\ell}$ and $q_i^{\ell}$ are the entry and exit nodes, respectively, of the $i$-th copy of the counter gadget $H_{2^{\ell}}$.
\item Finally, we have a self-loop $v\DTo{-1}v$.
\end{compactenum}

\Paragraph{Correctness.}
We now turn our attention to the correctness of the construction.
It is clear that if there are $x_i$ and $y_j$ that are orthogonal, we have a path $P\colon u\DKDOPath v$ by traversing the corresponding nodes $x_i^{\ell}$ and $y_{j}^{\ell}$, and end the path by looping on $v$.
Now assume that there exists a (irreducible) path $P\colon u\DKDOPath v$.
Observe that the path goes from $u$ to $w$ by pushing on the stack the contents of a vector $x_i$.
At this point the stack dictates how the path can continue from $y_j^{\ell}$ to $y_{j}^{\ell+1}$.
Moreover, the value of the counter while the path is at $y_j^{\ell}$ is bounded by $\sum_{l<\ell}2^{l}=2^{\ell}-1$.
Hence, although the path can transition back from $y_j^{\ell}$ to $y_{j}^{\ell-1}$, it can only do so by pushing $\gamma_0$ on the stack.
This has the effect that if the path ever returns to $w$, its stack will encode a vector (possibly not in $X$) that has at least $1$ in the coordinates $\ell$ for which $x_i[\ell]=1$.
It follows that when $P$ reaches $v$, it has traversed the nodes $x_i^{\ell}$ and $y_{j}^{\ell}$, for some $i,j\in[n]$, such that $x_i$ and $y_j$ are orthogonal.
We thus have the following lemma.

\begin{restatable}{lemma}{lemovcorrectness}\label{lem:ov_correctness}
There is a path $P\colon u\DKDOPath v$ iff there is an orthogonal pair $(x_i, y_j)\in X\times Y$.
\end{restatable}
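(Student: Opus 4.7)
The plan is to prove both directions separately, with the backward implication being the substantive one. For the ($\Leftarrow$) direction, I would construct the witness path explicitly. Given an orthogonal pair $(x_i,y_j)$, take the path $P$ that first travels $u\to x_i^1\to\dots\to x_i^{D+1}\to w$, pushing onto the $\Dyck_k$-stack the symbols $\gamma_{x_i[1]},\dots,\gamma_{x_i[D]}$, and then travels $w\to y_j^{D+1}\to\dots\to y_j^1\to v$. At coordinate $\ell$, if $y_j[\ell]=1$, then orthogonality forces $x_i[\ell]=0$, so the $\ov{\gamma}_0$-edge matches the top of the stack; if $y_j[\ell]=0$, then either the $\ov{\gamma}_0$-edge is available (when $x_i[\ell]=0$) or the gadget $H_{2^\ell}$ realizes a $\ov{\gamma}_1$-popping transition at the cost of adding $2^\ell$ to the counter (when $x_i[\ell]=1$). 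The stack is fully drained at $v$, and the counter, bounded above by $\sum_{\ell\in[D]}2^\ell=O(2^D)=O(\mathrm{poly}(n))$ (using $D=\omega(\log n)$ but still $D=O(\log n)$ in standard OV instances), is then zeroed using the $v\DTo{-1}v$ self-loop. Linear boundedness of the counter follows from choosing the natural parameterization for $H_{2^\ell}$.

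For the ($\Rightarrow$) direction, let $P\colon u\DKDOPath v$ be an irreducible witness. The argument hinges on a counter-bound invariant: for every prefix $P'$ of $P$ that ends at some $y_j^\ell$, one has $\Counter(P')<2^\ell$. I would prove this by induction on the order of visits, observing that the only way to increase the counter on the $y$-side of the graph is to traverse a copy of $H_{2^{\ell'}}$ in the forward direction while transitioning $y_j^{\ell'}\to y_j^{\ell'+1}$, and that reaching $y_j^\ell$ requires that each such forward traversal used some $\ell'<\ell$, so the accumulated counter is bounded by $\sum_{\ell'<\ell}2^{\ell'}=2^\ell-1$. The key consequence is that $P$ can never traverse a counter gadget $H_{2^\ell}$ \emph{backwards} out of $y_j^{\ell+1}$, because that would require decrementing the counter by $2^\ell$ from a strictly smaller value, violating nonnegativity. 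Therefore the only available reverse transitions on the $y$-side use the $\ov{\gamma}_0$-labeled edges (traversed backwards, hence pushing $\gamma_0$).

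Building on this, I would track what the stack contents must be whenever $P$ visits $w$. Since the $x$-branches are disjoint and each $x_i^\ell\to x_i^{\ell+1}$ edge carries the fixed label $\gamma_{x_i[\ell]}$, any walk from $u$ that returns to $w$ pushes/pops only symbols dictated by a single vector $x_i$ per branch; by irreducibility, the net effect of committing to the final pass cannot cancel with prior passes in a way that changes which indices $i,j$ carry the final push/pop sequence. Thus the last $u\to w$ subwalk fixes some vector $x_i$ encoded on the stack, and the subsequent $w\to v$ subwalk fixes some $y_j$; at every coordinate $\ell$ in which $y_j[\ell]=1$, the corresponding popping edge is $\ov{\gamma}_0$, which together with the invariant above forces $x_i[\ell]=0$, yielding orthogonality. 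The main obstacle I anticipate is the bookkeeping just described: ruling out ``mixed'' stack encodings arising from multiple back-and-forth passes between $u$ and $w$ (and analogously between $w$ and $v$), which is where irreducibility combined with the counter-bound invariant must be leveraged carefully in a short case analysis.
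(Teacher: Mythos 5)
Your proposal takes essentially the same route as the paper's own proof. The completeness direction is the same explicit construction, and your soundness argument hinges on the same counter-bound invariant (counter $<2^\ell$ when at a $y_j^\ell$-node) to rule out traversing a gadget $H_{2^\ell}$ backwards, combined with irreducibility and the fact that backward moves on the $y$-side can only push $\gamma_0$. The paper's proof is equally informal about the ``mixed-pass'' bookkeeping that you flag as the main obstacle; it simply observes that any return trip to $w$ leaves a stack that still certifies the coordinates forced by $x_i$, then asserts the conclusion, so your self-identified gap is not a deviation from the paper but the same loose step. One minor point worth fixing: your parenthetical ``using $D=\omega(\log n)$ but still $D=O(\log n)$'' is internally contradictory (OV requires $D=\omega(\log n)$; if $D=O(\log n)$ the problem is solvable in near-linear time and the hardness hypothesis is vacuous). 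This does not affect the iff statement of the lemma, which is purely about existence of a witness path, but it would matter if you were trying to justify the linear counter bound in the surrounding theorem.
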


We are now ready to conclude \cref{thm:dkd1_quadratic_lower_bound}.

\begin{proof}[Proof of \cref{thm:dkd1_quadratic_lower_bound}]
\cref{lem:ov_correctness} proves the correctness of the construction, so it remains to argue about the complexity.
The number of nodes in $G$ is $O(n\cdot D^2)$, as we have $O(n\cdot D)$ nodes $x_i^{\ell}$ and $y_i^{\ell}$, while each counter gadget $H_{2^{\ell}}$ uses $\ell$ nodes.
Since $\ell\leq D$, we have $O(n\cdot D^2)$ nodes in total.
Finally, observe that $G$ is sparse, and thus there are $O(n\cdot D^2)$ edges as well.
The desired result follows.
\end{proof}

\section{Undecidability of $D_k\odot D_k$ Reachability}\label{sec:dkdk}

Finally, in this section we prove the undecidability of general $D_k\odot D_k$ reachability on bidirected graphs (\cref{thm:dkdk_undecidable}).
Our reduction is from $D_k\odot D_k$ reachability on non-bidirected graphs, which is known to be undecidable~\cite{Reps00}.

\Paragraph{Reduction.}
Consider a non-bidirected graph $G=(V,E,\Alphabet)$ where $\Alphabet=\Alphabet_1\DUnion\Alphabet_2$,
and the problem of  $D_k\odot D_k$ reachability wrt the alphabets $\Alphabet_1$ and $\Alphabet_2$ on two nodes $u$ and $v$ of $G$.
We assume wlog that $G$ is not a multi-graph, i.e., every pair of nodes has a unique edge between them.
We construct a bidirected graph $G'=(V', E', \Alphabet')$ where $V\subset V'$ and $\Alphabet'=\Alphabet'_1\DUnion\Alphabet'_2$,
and such that a node $t\in V'$ is $D_k\odot D_k$-reachable from node $s\in V'$, wrt the alphabets $\Alphabet'_1$ and $\Alphabet'_2$, iff $v$ is reachable from $u$ in $G$.
To keep the exposition simple, the size of $\Alphabet'$ is proportional to the size of $G$.
Standard constructions can turn $\Alphabet'$ to constant, though this comes at the expense of increasing the graph size by a logarithmic factor (see, e.g.,~\cite{Chistikov2021}).
Since we show undecidability, this increase is not a concern.

\Paragraph{Intuitive description.}
We start with the intuition behind the reduction, while we refer to \cref{fig:undecidability} for illustrations on a small example.
Given $G$ and nodes $u$ and $v$, we construct $G'$ such that there is a path $P\colon u\Path v$ in $G$
iff there exists a path $P'\colon s\Path t$ in $G'$, as follows.
Initially, the path $P'$ ``guesses'' the path $P$ edge-by-edge in reverse order, while it
(i)~pushes every guess in the first stack, and
(ii)~uses the second stack to simulate the behavior of the second stack along $P$ in $G$ (in reverse), mirrored under the parenthesis complementation operator $\ov{\gamma}$
(i.e., if the edge in $G$ pushes $\gamma$, the simulating edge will pop $\gamma$).
Note that complementing the letters as found in the reverse order of $P$, in fact simulates the second stack in the forward order of $P$.

\begin{figure}
\newcommand{\xtstep}{0.9}
\newcommand{\ytstep}{0.5}
\newcommand{\xstep}{1.5}
\newcommand{\ystep}{1.8}

\centering
\begin{tikzpicture}[thick,
pre/.style={<-, thick, shorten >=0cm,shorten <=0cm},
post/.style={->, thick, shorten >=-0cm,shorten <=-0cm},
seqtrace/.style={->, line width=2},
postpath/.style={->, thick, decorate, decoration={zigzag,amplitude=1pt,segment length=2mm,pre=lineto,pre length=2pt, post=lineto, post length=4pt}},
node/.style={thick, draw=black, circle, inner sep = 0, minimum size=4.75mm, fill=gray!20},
counter/.style={rectangle, draw=black, inner sep=0, minimum width=27, minimum height=5mm, fill=gray!0, rounded corners},
]

\begin{scope}[shift={(0,0)}]

\node[node] (u) at (0*\xstep, 0*\ystep) {$u$};
\node[node] (x) at (1*\xstep, 0*\ystep) {$x$};
\node[node] (y) at (0*\xstep, -1*\ystep) {$y$};
\node[node] (v) at (1*\xstep, -1*\ystep) {$v$};

\draw[post, bend right=10] (u) to node[below]{\MyColorOne{$\alpha_1$}} (x);
\draw[post, bend right=10] (x) to node[above]{\MyColorTwo{$\beta_1$}} (u);
\draw[post, bend right=0] (u) to node[left]{\MyColorOne{$\alpha_2$}} (y);
\draw[post] (y) to[out=160, in=200, looseness=10] node[left]{ \MyColorTwo{$\ov{\beta}_1$}} (y);
\draw[post, bend right=0] (y) to node[below]{\MyColorOne{$\ov{\alpha}_2$}} (v);
\draw[post] (v) to[out=70, in=110, looseness=10] node[above]{ \MyColorOne{$\ov{\alpha}_1$}} (v);


\end{scope}

\begin{scope}[shift={(6.1*\xstep,0)}]

\renewcommand{\xstep}{2.1}
\renewcommand{\ystep}{2.2}

\node[node] (s) at (-1.6*\xstep, 0*\ystep) {$s$};
\node[node] (u) at (0*\xstep, 0*\ystep) {$u$};
\node[node] (x) at (1*\xstep, 0*\ystep) {$x$};
\node[node] (y) at (0*\xstep, -1*\ystep) {$y$};
\node[node] (v) at (1*\xstep, -1*\ystep) {$v$};
\node[node] (t) at (1.5*\xstep, -1*\ystep) {$t$};

\draw[post] (s) to node[above]{\MyColorTwo{$\nu$}} (u);
\draw[post] (v) to node[above]{\MyColorTwo{$\ov{\nu}$}} (t);

\draw[post, bend right=10] (u) to node[below]{\MyColorOne{$(\ov{u,x})$}, \MyColorTwo{$\alpha_1$}} (x);
\draw[post, bend right=10] (x) to node[above]{\MyColorOne{$(\ov{x,u})$}, \MyColorTwo{$\epsilon$}} (u);
\draw[post, bend right=0] (u) to node[left]{\MyColorOne{$(\ov{u,y})$}, \MyColorTwo{$\alpha_2$}} (y);
\draw[post] (y) to[out=160, in=200, looseness=10] node[left]{\MyColorOne{$(\ov{y,y})$}, \MyColorTwo{$\epsilon$}} (y);
\draw[post, bend right=0] (y) to node[above]{\MyColorOne{$(\ov{y,v})$}, \MyColorTwo{$\ov{\alpha}_2$}} (v);
\draw[post] (v) to[out=70, in=110, looseness=10] node[above]{\MyColorOne{$(\ov{v,v})$}, \MyColorTwo{$\ov{\alpha}_1$}} (v);


\def\inangle{35}
\def\outangle{20}
\def\firstangle{25}

\draw[post] (s) to[out=\firstangle+0*\outangle, in=\firstangle+0*\outangle+1*\inangle, looseness=25] node[above]{\MyColorOne{$(u,x)$}, \MyColorTwo{$\epsilon$}} (s);

\draw[post] (s) to[out=\firstangle+1*\inangle+1*\outangle, in=\firstangle+2*\inangle+1*\outangle, looseness=25] node[above]{\MyColorOne{$(x,u)$}, \MyColorTwo{$\ov{\beta}_1$}} (s);

\draw[post] (s) to[out=\firstangle+2*\inangle+2*\outangle, in=\firstangle+3*\inangle+2*\outangle, looseness=25] node[left]{\MyColorOne{$(u,y)$}, \MyColorTwo{$\epsilon$}} (s);

\draw[post] (s) to[out=\firstangle+3*\inangle+3*\outangle, in=\firstangle+4*\inangle+3*\outangle, looseness=25] node[left]{\MyColorOne{$(y,y)$}, \MyColorTwo{$\beta_1$}} (s);

\draw[post] (s) to[out=\firstangle+4*\inangle+4*\outangle, in=\firstangle+5*\inangle+4*\outangle, looseness=25] node[below]{\MyColorOne{$(y,v)$}, \MyColorTwo{$\epsilon$}} (s);

\draw[post] (s) to[out=\firstangle+5*\inangle+5*\outangle, in=\firstangle+6*\inangle+5*\outangle, looseness=25] node[below]{\MyColorOne{$(v,v)$}, \MyColorTwo{$\epsilon$}} (s);

\end{scope}

\end{tikzpicture}
\caption{
\emph{(Left):}~An interleaved Dyck graph $G$ with a reachability question on $u,v$.
\emph{(Right):}~The bidirected interleaved Dyck graph $G'$ constructed in our reduction, with a reachability question on $s,t$.
}
\label{fig:undecidability}
\end{figure}

After this part is done, the second stack is empty, while the first stack contains the sequence of edges of $P$ in top-down order (i.e., the last edge of $P$ is at the top of the stack).
At this point, $P'$ will traverse the remaining of the graph $G'$ and verify that the path $P$ stored in first stack is a valid path of $G$, while using the second stack to simulate the effect of $P$ on the first stack.
As a preliminary step, $P'$ is forced to push a special symbol  $\nu$ on the second stack as it takes a transition from $s$ to $u$.
This is a simple trick to deal with bidirectedness, in the sense that, if in the future $P'$ decides to come back to $s$ and alter the guessed path, it can only do so by popping $\nu$.
This will imply that the second stack is empty and the first stack contains a suffix of the guessed $P$, and thus $P'$ is reducible.

In order to verify the guess of $P'$, the graph $G'$ is constructed as follows.
For every edge $x\DTo{\gamma}y$ of $G$, we have a two-edged path $x\DTo{(\ov{x,y}), \delta}y $ in $G'$.
If $\gamma\in \Alphabet_1$, i.e., the edge in $G$ manipulates the first stack, $\delta$ is identical to $\gamma$ but affects the second stack of $G'$.
On the other hand, if $\gamma\in \Alphabet_2$, then $\delta=\epsilon$.
Intuitively, we can traverse the path $x\DTo{(\ov{x,y}), \delta}y $ in $G'$ iff the top-most edge guess in the first stack matches the edge $x\DTo{\gamma}y$ of $G$.
By traversing this path, if $\gamma$ manipulates the first stack in $G$, we use $\delta=\gamma$ to manipulate the second stack in $G'$.
On the other hand, if $\gamma$ manipulates the second stack in $G$ we ignore its effect ($\delta=\epsilon$), as the second stack has already been simulated in the first phase where we guessed the path $P$ while self-looping on $s$.

The key idea in the above construction is that while self-looping over $s$, the path $P'$ can only guess the edges of $P$ in the direction they appear in $G$.
Although $P'$ can traverse edges backwards in $G'$ (due to bidirectedness),
this will generally result in pushing more edge symbols $e_1,\dots, e_{\ell}$ on the first stack.
Because of the letter $\nu$, the only way to pop these symbols from the stack is for $P'$ to reverse the sub-path that added these symbols, which makes $P'$ reducible.

\Paragraph{Construction.}
We now proceed with the detailed construction.
For ease of presentation, we specify some parts of $G'$ as paths, meaning unique sequences of labeled edges, without explicit reference to all the nodes and edges in the path.
We begin with the alphabets
\[
\Alphabet'_1 = \bigcup_{x\DTo{\gamma}y\in E}\{ (x,y), (\ov{x,y}) \} \qquad\qquad
\Alphabet'_2 = \Alphabet_1\cup \Alphabet_2\cup \{\nu, \ov{\nu} \}
\]
Let $\Alphabet' = \Alphabet'_1\DUnion \Alphabet'_2$.
We have $V\subset V'$, while $G'$ contains some additional nodes and edges, defined by the following paths.
For every edge $x\DTo{\gamma}y\in E$, we construct a cyclic path $s\DTo{(x,y), \delta} s$,
where $\delta=\ov{\gamma}$ if $\gamma\in \Alphabet_2$ and $\delta=\epsilon$ otherwise.
We also have an edge $s\DTo{\nu}u$.
For every edge $x\DTo{\gamma}y$ of $G$, we have a path $x\DTo{(\ov{x,y}), \delta}y $ in $G'$,
where $\delta=\gamma$ if $\gamma\in \Alphabet_1$ and $\delta=\epsilon$ otherwise.
Finally, we have an edge $v\DTo{\ov{\nu}}t$.

\Paragraph{Correctness.}
It remains to argue about the correctness of the above construction,
i.e., we have $u\DKDKPath v$ in $G$ iff $s\DKDKPath t$ in $G'$.

\begin{proof}[Proof of \cref{thm:dkdk_undecidable}]
We argue separately about completeness and soundness.

\SubParagraph{Completeness.}
Assume that there is a path $P\colon u\DKDKPath v$ in $G$.
We construct a path $P'\colon s\DKDKPath t$ in $G'$ as follows.
Let $P=e_1,\dots, e_{m}$ be the sequence of edges traversed by $P$, without their labels.
For each $i\in [m]$, we take the self-loop path $s\DTo{e_{m-i+1}, \delta}s$.
Afterwards, we traverse the edge $s\DTo{\nu}u$, and then we repeatedly traverse the path $x\DTo{(\ov{x,y}), \delta}y$, where $(x,y)$ is the current top symbol on the first stack.
Finally, we traverse the edge $v\DTo{\ov{\nu}} t$.

Let $P'_1$ be the prefix of $P'$ right before we traverse the edge $s\DTo{\nu}u$.
Observe that 
\[
\Label(P'_1)\Project \Alphabet'_2= \ov{\Label(\ov{P})\Project \Alphabet_2} = \Label(P)\Project \Alphabet_2
\]
and hence $P'_1$ is valid on the second stack.
From this point, It is straightforward to verify by induction that $P'$ is a valid path with $\Stack_1(P')=\Stack_2(P')=\epsilon$, and hence $P'\colon s\DKDKPath t$.

\SubParagraph{Soundness.}
Assume that there is a path $P'\colon s\DKDKPath t$ in $G'$.
Let $P'_1$ be the prefix of $P'$ right before the last time that $P'$ traverses the edge $s\DTo{\nu} u$.
Observe that $\Stack_2(P'_1)=\epsilon$.
Indeed, at that point, the second stack cannot contain any symbol from $\Alphabet_2$, as these symbols are not popped in the suffix of $P'$ that succeeds $P'_1$.
Moreover, the second stack cannot contain any symbol from $\Alphabet_1$, as these symbols can only be pushed in the stack on top of $\nu$, which prevents $P'_1$ to traverse the edge $s\DTo{\nu}v$ in reverse.

We now argue that at the end of $P'_1$, the first stack encodes a path $P\colon u\DKDKPath v$ in $G$.
Indeed, let $P'_2$ be the suffix of $P'$ such that $P'=P'_1\circ P'_2$.
Since $P'$ is an irreducible path, $P'_2$ is also irreducible.
This implies that $P'_2$ never pushes an element on the first stack.
Hence, the whole of $P'_2$ except the last edge $v\DTo{\ov{\nu}} t$ matches the content of the first stack at the end of $P'_1$.
This implies that $\Stack_1(P'_1)$ encodes a path $P\colon u\Path v$ in $G$ such that $\Label(P)\Project \Alphabet_1\in \Dyck(\Alphabet_1)$, i.e., the label of $P$ produces a valid Dyck string wrt the first alphabet.
Finally, since $\Stack_2(P'_1)=\epsilon$, it follows that  $\Label(P)\Project \Alphabet_2\in \Dyck(\Alphabet_2)$, and thus the label of $P$ produces a valid Dyck string wrt the second alphabet as well.
Hence $P$ witnesses the $\Dyck_k\odot\Dyck_k$ reachability of $v$ from $u$.
\end{proof}

\section{Experiments}\label{sec:experiments}

In this section we report on the experimental evaluation of our algorithms for $\Dyck_1\odot \Dyck_1$ reachability (\cref{thm:d1d1_upper_bounds}) and $\Dyck_k\odot \Dyck_1$  reachability with linearly-bounded counters (\cref{thm:dkd1_quadratic_upper_bound}).
We first describe some straightforward optimizations to the baseline algorithms (\cref{subsec:experimental_algos}) and then present the experimental results (\cref{subsec:experimental_results}).

\subsection{Experimental Algorithms}\label{subsec:experimental_algos}

We describe three straightforward optimizations.

\Paragraph{1.~Under-approximating with $\Dyck_k$.}
Our first optimization is based on the simple observation that $\Dyck_k\odot \Dyck_k$ reachability wrt two alphabets $\Alphabet_1$ and $\Alphabet_2$ can be under-approximated by performing $\Dyck_k$ reachability on the union alphabet $\Alphabet=\Alphabet_1\cup \Alphabet_2$.
Thus, as a first step, we perform bidirected $\Dyck_k$ reachability on the input graph, and reduce the graph by merging pairs that are $\Dyck_k$ reachable.

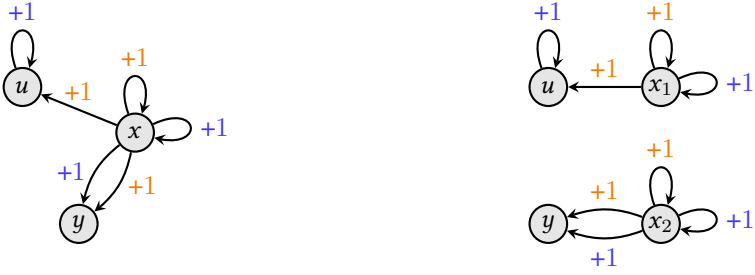
\begin{figure}
\newcommand{\xtstep}{0.9}
\newcommand{\ytstep}{0.5}
\newcommand{\xstep}{1.5}
\newcommand{\ystep}{1.0}

\centering
\begin{tikzpicture}[thick,
pre/.style={<-, thick, shorten >=0cm,shorten <=0cm},
post/.style={->, thick, shorten >=-0cm,shorten <=-0cm},
seqtrace/.style={->, line width=2},
postpath/.style={->, thick, decorate, decoration={zigzag,amplitude=1pt,segment length=2mm,pre=lineto,pre length=2pt, post=lineto, post length=4pt}},
node/.style={thick, draw=black, circle, inner sep = 0, minimum size=5mm, fill=gray!20},
]

\begin{scope}[shift={(0,0)}]

\node[node] (u) at (0*\xstep, 0.6*\ystep) {$u$};
\node[node] (x) at (1*\xstep, 0*\ystep) {$x$};
\node[node] (y) at (0.5*\xstep, -1.2*\ystep) {$y$};

\draw[post, loop above, out=110, in=70, looseness=10] (u) to node[above]{\MyColorOne{$+1$}} (u);
\draw[post] (x) to node[above]{\MyColorTwo{$+1$}} (u);
\draw[post, loop above, out=110, in=70, looseness=10] (x) to node[above]{\MyColorTwo{$+1$}} (x);
\draw[post, loop right, out=25, looseness=10] (x) to node[right]{\MyColorOne{$+1$}} (x);

\draw[post, bend left=20] (x) to node[right]{\MyColorTwo{$+1$}} (y);
\draw[post, bend right=20] (x) to node[left]{\MyColorOne{$+1$}} (y);

\end{scope}

\begin{scope}[shift={(7,0)}]

\node[node] (u) at (0*\xstep, 0.6*\ystep) {$u$};
\node[node] (y) at (0*\xstep, -1.2*\ystep) {$y$};
\node[node] (x1) at (1*\xstep, 0.6*\ystep) {$x_1$};
\node[node] (x2) at (1*\xstep, -1.2*\ystep) {$x_2$};

\draw[post, loop above, out=110, in=70, looseness=10] (u) to node[above]{\MyColorOne{$+1$}} (u);
\draw[post] (x1) to node[above]{\MyColorTwo{$+1$}} (u);
\draw[post, loop above, out=110, in=70, looseness=10] (x1) to node[above]{\MyColorTwo{$+1$}} (x1);
\draw[post, loop right, out=25, looseness=10] (x1) to node[right]{\MyColorOne{$+1$}} (x1);
\draw[post, loop above, out=110, in=70, looseness=10] (x2) to node[above]{\MyColorTwo{$+1$}} (x2);
\draw[post, loop right, out=25, looseness=10] (x2) to node[right]{\MyColorOne{$+1$}} (x2);

\draw[post, bend right=20] (x2) to node[above]{\MyColorTwo{$+1$}} (y);
\draw[post, bend left=20] (x2) to node[below]{\MyColorOne{$+1$}} (y);

\end{scope}

\end{tikzpicture}
\caption{
\emph{(Left):~} A  bidirected, interleaved Dyck graph $G$ with a doubly-self-looped node $x$.
\emph{(Right):}~ The two connected components on which we perform $\Dyck_1\odot \Dyck_1$ reachability separately in order to infer whether $x$ is reachable from any other node.
}
\label{fig:double_self_loop}
\end{figure}

\Paragraph{2.~Removing doubly-self-looped nodes.}
Our final optimization concerns only $\Dyck_1\odot\Dyck_1$ reachability, and is an effective procedure for removing doubly-self-looped nodes. 
Indeed, for any node $x$ that has a self loop on each counter, we can make the following observations.
\begin{compactenum}
\item Any witness path $P\colon u\DODOPath v$ that goes through $x$ implies also the existence of reachability paths $u\DODOPath x$ and $x\DODOPath v$, by self-looping on $x$ a sufficient number of times.
Hence, we can focus on whether $x$ reaches any other node, and if not, remove $x$.
\item For any witness path $Q\colon u\DODOPath x$, wlog $Q$ is a path that never exits $x$, i.e., $Q$ has the form $Q=u\dots x\dots x$.
Hence, in order to decide whether $x$ is reachable from any other node, it suffices to compute $\Dyck_1\odot \Dyck_1$ reachability locally on each connected component that is connected to the rest of the graph only via $x$.
In practice, we have found that when doubly self-looped nodes are present, they are articulation points that separate many small connected components.
\end{compactenum}
Thus, performing the above process on $x$ allows us to either
(i)~merge $x$ with another node that reaches $x$ and repeat, or 
(ii)~remove $x$ from the graph.
In the first case we have reduced the size of the graph, while in the second case we proceed recursively on the small connected components that are created.
\cref{fig:double_self_loop} illustrates this process on a small example. 

\begin{figure}
\newcommand{\xtstep}{0.9}
\newcommand{\ytstep}{0.5}
\newcommand{\xstep}{1.5}
\newcommand{\ystep}{1}

\centering
\begin{tikzpicture}[thick,
pre/.style={<-, thick, shorten >=0cm,shorten <=0cm},
post/.style={->, thick, shorten >=-0cm,shorten <=-0cm},
seqtrace/.style={->, line width=2},
postpath/.style={->, thick, decorate, decoration={zigzag,amplitude=1pt,segment length=2mm,pre=lineto,pre length=2pt, post=lineto, post length=4pt}},
node/.style={thick, draw=black, circle, inner sep = 0, minimum size=5mm, fill=gray!20},
node2/.style={thick, draw=black, circle, inner sep = 0, minimum size=5mm, fill=gray!50},
]

\begin{scope}[shift={(0,0)}]

\node[node] (u) at (0*\xstep, 1*\ystep) {$u$};
\node[node] (v) at (0*\xstep, -1*\ystep) {$v$};
\node[node] (x) at (1*\xstep, -1.2*\ystep) {$x$};
\node[node] (y) at (1*\xstep, 0.4*\ystep) {$y$};
\node[node2] (z) at (2*\xstep, -1.2*\ystep) {$z$};

\node (a) at (-0.5*\xstep, 0.5*\ystep) {$ $};
\node (b) at (0.3*\xstep, 1.7*\ystep) {$ $};
\node (c) at (1.1*\xstep, 1.2*\ystep) {$ $};
\node (d) at (-0.5*\xstep, -0.7*\ystep) {$ $};
\node (e) at (-0.5*\xstep, 1.5*\ystep) {$ $};
\draw (u) to (a);
\draw (v) to (d);
\draw (u) to (e);

\draw[post] (y) to node[right]{\MyColorTwo{$\beta$}} (x);
\draw[post] (z) to node[above]{\MyColorOne{$\alpha$}} (x);
\draw[post] (u) to node[above]{\MyColorOne{$\alpha$}} (y);
\draw[post] (y) to node[above]{\MyColorOne{$\alpha$}} (v);
\draw[post] (v) to node[left]{\MyColorTwo{$\beta$}} (u);

\end{scope}

\begin{scope}[shift={(5,0)}]

\node[node] (u) at (0*\xstep, 1*\ystep) {$u$};
\node[node] (v) at (0*\xstep, -1*\ystep) {$v$};
\node[node2] (x) at (1*\xstep, -1.2*\ystep) {$x$};
\node[node] (y) at (1*\xstep, 0.4*\ystep) {$y$};

\node (a) at (-0.5*\xstep, 0.5*\ystep) {$ $};
\node (b) at (0.3*\xstep, 1.7*\ystep) {$ $};
\node (c) at (1.1*\xstep, 1.2*\ystep) {$ $};
\node (d) at (-0.5*\xstep, -0.7*\ystep) {$ $};
\node (e) at (-0.5*\xstep, 1.5*\ystep) {$ $};
\draw (u) to (a);
\draw (v) to (d);
\draw (u) to (e);

\draw[post] (y) to node[right]{\MyColorTwo{$\beta$}} (x);
\draw[post] (u) to node[above]{\MyColorOne{$\alpha$}} (y);
\draw[post] (y) to node[above]{\MyColorOne{$\alpha$}} (v);
\draw[post] (v) to node[left]{\MyColorTwo{$\beta$}} (u);

\end{scope}

\begin{scope}[shift={(10,0)}]

\node[node] (u) at (0*\xstep, 1*\ystep) {$u$};
\node[node] (v) at (0*\xstep, -1*\ystep) {$v$};
\node[node] (y) at (1*\xstep, 0.4*\ystep) {$y$};

\node (a) at (-0.5*\xstep, 0.5*\ystep) {$ $};
\node (b) at (0.3*\xstep, 1.7*\ystep) {$ $};
\node (c) at (1.1*\xstep, 1.2*\ystep) {$ $};
\node (d) at (-0.5*\xstep, -0.7*\ystep) {$ $};
\node (e) at (-0.5*\xstep, 1.5*\ystep) {$ $};
\draw (u) to (a);
\draw (v) to (d);
\draw (u) to (e);

\draw[post] (u) to node[above]{\MyColorOne{$\alpha$}} (y);
\draw[post] (y) to node[above]{\MyColorOne{$\alpha$}} (v);
\draw[post] (v) to node[left]{\MyColorTwo{$\beta$}} (u);

\end{scope}

\end{tikzpicture}
\caption{
\emph{(Left):~} A sub-graph of a bidirected, interleaved Dyck graph $G$. 
By construction, any witness path that goes through $z$  must contain the sub-path $xzx$, which can be substituted by $x$. 
Moreover, no path starting in $z$ can go beyond $x$, as going to $y$ must pop a $\beta$. Thus $z$ can safely be removed.
\emph{(Middle):~} By construction, any witness path that goes through $x$ must contain the sub-path $yxy$, which can be substituted for $y$. 
Moreover, any path leading to $x$ will have a stack word of at least one $\beta$, meaning it is unreachable and can be safely removed.
\emph{(Right):~} The trimmed sub-graph.
}
\label{fig:trimming}
\end{figure}
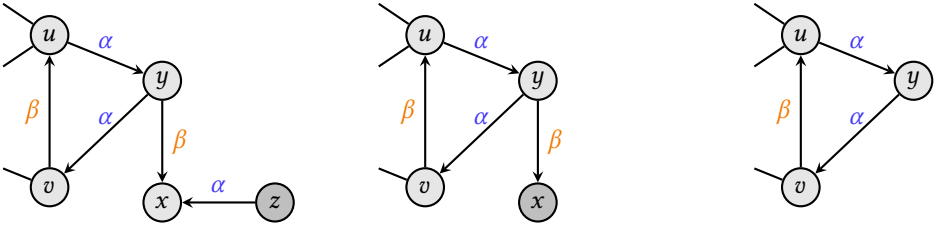

\Paragraph{3.~Node trimming.}
Our second optimization is based on identifying simple motifs in the graph which guarantee that a node $x$
(i)~is not reachable from any other node, and
(ii)~if there is a path $P$ that witnesses reachability and goes through $x$, there is a path $Q$ that witnesses the same reachability without going through $x$.
Applying this process repeatedly can prune away many such ``isolated'' nodes and thus reduce the effective size of the graph. 
\cref{fig:trimming} illustrates two simple cases that allow successive trimming on nodes $z$ and $x$.

In our experiments, we found that all heuristics were applicable in all benchmarks.
In particular, steps (1) and (2) were important for significantly reducing the input graph to small, while step (3) helped with running time and also with reducing the graph further in some cases.

\subsection{Experimental Results}\label{subsec:experimental_results}

We are now ready to report on our experimental results.

\Paragraph{Experimental setup.}
We have used the DaCapo benchmarks~\cite{Blackburn06} as in earlier works~\cite{Zhang2017,Li2021}, for performing context-sensitive and field-sensitive alias analysis.
Each benchmark provided one interleaved bidirected Dyck graph that models context and field sensitivity, and creates an instance of $\Dyck_k\odot \Dyck_k$ reachability.
In all cases needed, we projected a language $\Dyck_k$ to $\Dyck_1$ by projecting the parenthesis alphabet to a unique parenthesis symbol.
For $\Dyck_k\odot \Dyck_1$ reachability we used a counter bound of $n$.

Each of the above graphs was given as input to our algorithm.
This is in small difference to the procedure of~\cite{Li2021}, where the graph was first reduced using a recent fast simplification technique~\cite{Li2020}.
Our algorithms were implemented in C++ 11 without any compiler optimizations, and were executed on a conventional laptop with a Core i7 CPU with 8 GB of memory running Ubuntu 20.10.
The results are reported in \cref{tab:results_merged}.

\begin{table}[]
\label{tab:results_merged}
\caption{
Experimental results on $\Dyck_1\odot \Dyck_1$ and $\Dyck_k\odot \Dyck_1$ reachability.
In each case, ID-CCs denotes the number of connected components wrt the interleaved Dyck language,
while D-CCs denotes the number of connected components wrt the under-approximating Dyck language on the union alphabet.
}
\begin{tabular}{| c | c || c | c | c || c | c | c |}
\hline
\textbf{Benchmark}  & \textbf{$n$} & \multicolumn{3}{c||}{$\Dyck_1 \odot \Dyck_1$} & \multicolumn{3}{c|}{$\Dyck_k \odot \Dyck_1$} \\ \hline
& & \textbf{ID-CCs} & \textbf{D-CCs} & \textbf{Time (s)} & \textbf{ID-CCs} & \textbf{D-CCs} & \textbf{Time (s)} \\ \hline \hline
antlr & 29831 & 26793 & 26825 & 40.5 & 27014 & 27015 & 9.1\\ \hline
bloat & 36181 & 32693 & 32725 & 16.2 & 32946 & 32946 & 12.8\\ \hline
chart & 67535 & 60787 & 60844 & 32.8 & 61158 & 61158 & 65.1\\ \hline
eclipse & 30981 & 27812 & 27840 & 16.0 & 27999 & 28018 & 12.3\\ \hline
fop & 61016 & 54671 & 54723 & 29.8 & 55012 & 55012 & 58.4\\ \hline
hsqldb & 27494 & 24584 & 24610 & 15.5 & 24775 & 24776 & 8.8\\ \hline
jython & 36162 & 31811 & 31845 & 26.3 & 32060 & 32062 & 21.5\\ \hline
luindex & 28595 & 25610 & 25636 & 15.6 & 25809 & 25810 & 7.9\\ \hline
lusearch & 29530 & 26417 & 26447 & 17.8 & 26655 & 26655 & 9.1\\ \hline
pmd & 31333 & 28064 & 28093 & 18.2 & 28296 & 28297 & 9.9\\ \hline
xalan & 27358 & 24498 & 24523 & 15.2 & 24689 & 24690 & 7.4\\ \hline
\end{tabular}
\end{table}

\Paragraph{Results on $\Dyck_1\odot\Dyck_1$.}
We first discuss $\Dyck_1\odot\Dyck_1$ reachability.
Recall that $\Dyck_k$ reachability (on the union alphabet) serves as an under-approximation of $\Dyck_1\odot \Dyck_1$ reachability.
The table shows that $\Dyck_k$ reachability already discovers most of the connected components, with $\Dyck_1\odot \Dyck_1$ refining this results with a few remaining components that are hard to detect.
This observation is in alignment to the study of~\cite{Li2021}, where it is reported that the number of new pairs discovered by $\Dyck_1\odot \Dyck_1$ reachability is only about 1\% more than those discovered by $\Dyck_k$ reachability.

Regarding running time, we see that our algorithm handles each benchmark within seconds, while the total benchmark set is processed in less than 5 minutes.
As a comparison, the recent method of~\cite{Li2021} reported more than 3 days ($\sim$ 74 hours)  for the same benchmark set when run on a relatively big machine, and it does so when the input graphs are simplified by a preprocessing step~\cite{Li2020}.
As a sanity check, we have verified that our algorithm also handles the simplified graphs within seconds,
while our own implementation of the technique of~\cite{Li2021} does not finish on our machine after 5 hours even for small graphs.

\Paragraph{Results on $\Dyck_k\odot\Dyck_1$.}
We now turn our attention to $\Dyck_k\odot\Dyck_1$ reachability.
Interestingly, we see that staying faithful to one Dyck language (not projecting $\Dyck_k$ to $\Dyck_1$) increases the number of connected components (i.e., there are fewer reachable pairs).
In principle, this could be due to the theoretical incompleteness arising from the bound on the counter.
However, we believe that our reports do not miss any reachable pairs, and thus the reduced reachability relationships (compared to $\Dyck_1\odot \Dyck_1$) are true negatives and thus increase the analysis precision.
This is further supported by our experience that all reported reachability relationships under $\Dyck_k\odot \Dyck_1$ were already formed with counter values much smaller than $n$.
Moreover, we find that performing $\Dyck_k$ reachability (on the union alphabet) provides an under-approximation that is often very close to (and sometimes matches) the final result.

Finally, regarding time, we again see that our algorithm handles each benchmark within seconds, and the whole benchmark set is again processed in less than 5 minutes.
These times indicate that our algorithms are suitable for static analysis tools.

\section{Related Work}\label{sec:related_work}

\Paragraph{Complexity of Dyck reachability.}
The immense importance of Dyck reachability in static analyses has lead to a systematic study of its complexity in various settings.
General Dyck reachability can  be solved in $O(n^3)$ time~\cite{Yannakakis90} using an extension of the CYK algorithm to graphs.
The bound is believed to be tight as the problem is 2NPDA-hard~\cite{Heintze97}, while the combinatorial cubic hardness persists even on constant-treewidth graphs~\cite{Chatterjee18}.
Sub-cubic algorithms do exist, but they only offer logarithmic speedups~\cite{Chaudhuri2008}.
When the underlying graph is a Recursive State Machine (RSM) with constant entries and exits,
treewidth has been shown to lead to fast on-demand reachability queries~\cite{Chatterjee2015,Chatterjee2020}.
Despite the cubic hardness of the general problem, it is known to have sub-cubic certificates for both positive and negative instances~\cite{Chistikov2021}.
Dyck reachability over a single parenthesis symbol (aka one-counter systems)
has been shown to admit a sub-cubic bound~\cite{Bradford2018}.
The best current bound for this setting is $O(n^{\omega}\cdot \log^2n)$~\cite{Mathiasen2021}, where $\omega$ is the matrix-multiplication exponent,
while it is also known that the problem has a (conditional) $\Omega(n^{\omega})$ lower bound even for the single-pair question~\cite{Hansen2021}.

The algorithmic benefit of bidirectedness was highlighted in~\cite{Yuan09}, 
where an $O(n\cdot \log n)$ algorithm was presented when the underlying graph is a bidirected tree.
Later this bound was improved to $O(n)$ for trees, 
while the problem was shown to take $O(n^2)$ time (and $O(n\cdot \log n)$ expected time) on general bidirected graphs, thereby breaking below the cubic bound~\cite{Zhang13}.
This sequence of improvements ended with an $O(n\cdot \alpha(n))$ algorithm on general bidirected graphs (and $O(n)$ expected time), where $\alpha(n)$ is the inverse Ackermann function, and the bound was also shown to be tight~\cite{Chatterjee18}.

\Paragraph{Interleaved Dyck reachability.}
The modeling power of interleaved Dyck reachability, as well as its undecidability, were illustrated in~\cite{Reps00}.
As the problem is of vital importance to static analyses, various approximations have been developed.
The most basic, but also widely used approach is to approximate one Dyck language with a regular language, e.g., by bounding the recursion depth~\cite{Sridharan2006,Lerch2015}.
Other approximate techniques involve linear conjunctive language reachability~\cite{Zhang2017}, synchronized pushdown systems~\cite{Spath2019}, and CEGAR-style techniques~\cite{Ferles2021}.
The hardness of the problem stems from the fact that the underlying pushdown automata is operating on multiple stacks.
The problem also arises in distributed models of pushdown systems~\cite{Madhusudan2011}, and has also been addressed with approximations based on parameterization~\cite{Kahlon2008} and bounded-context switching~\cite{Qadeer2005}.

\Paragraph{Vector addition systems.}
When one or both Dyck languages are over one parenthesis symbol,
interleaved Dyck reachability falls in the class of vector addition systems.
In particular, $\Dyck_1\odot \Dyck_1$ yields a unary vector addition system with states in two dimensions,
while $\Dyck_k\odot \Dyck_1$ yields a unary pushdown vector addition system in one dimension.
Reachability for the first case is $\NL$-complete~\cite{Englert2016},
while the decidability of reachability in the second case is open as for now (to our knowledge, see, e.g.,~\cite[Table~1]{Schmitz2019}).
Nevertheless, the latter model enjoys some nice properties, such as decidability of coverability~\cite{Leroux2015}
and decidability of the boundedness of the reachable set~\cite{Leroux2015b}.
Independently of our work, bidirectedness was recently studied in vector additions systems in~\cite{Ganardi2021}.

\section{Conclusion}\label{sec:conclusion}
In this work we have addressed the decidability and complexity of interleaved Dyck reachability on bidirected graphs, inspired by the recent work of~\cite{Li2021}.
We have developed an efficient, nearly cubic-time algorithm for the $\Dyck_1\odot \Dyck_1$ case, while we have shown that the $\Dyck_k\odot \Dyck_1$ is decidable.
As a means to more tractable solutions, we have shown a nearly quadratic bound for $\Dyck_k\odot \Dyck_1$ when restricting witnesses to linearly-bounded counters, and we have shown that this quadratic bound is tight.
Finally, we have shown that general $\Dyck_k\odot\Dyck_k$ reachability remains undecidable on bidirected graphs.
Our results cover an important missing piece in the decidability and complexity landscape of static analyses modeled as Dyck/CFL reachability.
Moreover, our experiments show that the new algorithms can handle big benchmarks within seconds, making them suitable for static analysis tools.

\begin{acks}
We are grateful to Georg Zetzsche for comments on an earlier draft of the paper, and to the authors of \cite{Li2021} for sharing their dataset and assisting us on its use.
We also thank anonymous reviewers for their constructive feedback.
\end{acks}

\clearpage

\bibliography{bibliography}


\begin{thebibliography}{55}


\ifx \showCODEN    \undefined \def \showCODEN     #1{\unskip}     \fi
\ifx \showDOI      \undefined \def \showDOI       #1{#1}\fi
\ifx \showISBNx    \undefined \def \showISBNx     #1{\unskip}     \fi
\ifx \showISBNxiii \undefined \def \showISBNxiii  #1{\unskip}     \fi
\ifx \showISSN     \undefined \def \showISSN      #1{\unskip}     \fi
\ifx \showLCCN     \undefined \def \showLCCN      #1{\unskip}     \fi
\ifx \shownote     \undefined \def \shownote      #1{#1}          \fi
\ifx \showarticletitle \undefined \def \showarticletitle #1{#1}   \fi
\ifx \showURL      \undefined \def \showURL       {\relax}        \fi
\providecommand\bibfield[2]{#2}
\providecommand\bibinfo[2]{#2}
\providecommand\natexlab[1]{#1}
\providecommand\showeprint[2][]{arXiv:#2}

\bibitem[\protect\citeauthoryear{??}{Wal}{2003}]%
        {Wala}
 \bibinfo{year}{2003}\natexlab{}.
\newblock \bibinfo{title}{T. J. Watson Libraries for Analysis (WALA)}.
\newblock \bibinfo{howpublished}{https://github.com}.   (\bibinfo{year}{2003}).
\newblock


\bibitem[\protect\citeauthoryear{Alur and Madhusudan}{Alur and
  Madhusudan}{2004}]%
        {Alur2004}
\bibfield{author}{\bibinfo{person}{Rajeev Alur} {and} \bibinfo{person}{P.
  Madhusudan}.} \bibinfo{year}{2004}\natexlab{}.
\newblock \showarticletitle{Visibly Pushdown Languages}. In
  \bibinfo{booktitle}{{\em Proceedings of the Thirty-Sixth Annual ACM Symposium
  on Theory of Computing}} {\em (\bibinfo{series}{STOC '04})}.
  \bibinfo{publisher}{Association for Computing Machinery},
  \bibinfo{address}{New York, NY, USA}, \bibinfo{pages}{202–211}.
\newblock
\showISBNx{1581138520}
\showDOI{%
\url{https://doi.org/10.1145/1007352.1007390}}


\bibitem[\protect\citeauthoryear{Arnold}{Arnold}{1996}]%
        {Arnold96}
\bibfield{author}{\bibinfo{person}{Robert~S. Arnold}.}
  \bibinfo{year}{1996}\natexlab{}.
\newblock \bibinfo{booktitle}{{\em Software Change Impact Analysis}}.
\newblock \bibinfo{publisher}{IEEE Computer Society Press},
  \bibinfo{address}{Los Alamitos, CA, USA}.
\newblock
\showISBNx{0818673842}


\bibitem[\protect\citeauthoryear{Blackburn}{Blackburn}{2006}]%
        {Blackburn06}
\bibfield{author}{\bibinfo{person}{Stephen M. et~al. Blackburn}.}
  \bibinfo{year}{2006}\natexlab{}.
\newblock \showarticletitle{The DaCapo Benchmarks: Java Benchmarking
  Development and Analysis}. In \bibinfo{booktitle}{{\em OOPSLA}}.
\newblock


\bibitem[\protect\citeauthoryear{Bodden}{Bodden}{2012}]%
        {Bodden12}
\bibfield{author}{\bibinfo{person}{Eric Bodden}.}
  \bibinfo{year}{2012}\natexlab{}.
\newblock \showarticletitle{Inter-procedural Data-flow Analysis with IFDS/IDE
  and Soot}. In \bibinfo{booktitle}{{\em SOAP}}. \bibinfo{publisher}{ACM},
  \bibinfo{address}{New York, NY, USA}.
\newblock


\bibitem[\protect\citeauthoryear{Bradford}{Bradford}{2018}]%
        {Bradford2018}
\bibfield{author}{\bibinfo{person}{Phillip~G. Bradford}.}
  \bibinfo{year}{2018}\natexlab{}.
\newblock \bibinfo{title}{Efficient Exact Paths For Dyck and semi-Dyck Labeled
  Path Reachability}.
\newblock   (\bibinfo{year}{2018}).
\newblock
\showeprint[arxiv]{cs.DS/1802.05239}


\bibitem[\protect\citeauthoryear{{Cetti Hansen}, {Husted Kjelstrøm}, and
  Pavlogiannis}{{Cetti Hansen} et~al\mbox{.}}{2021}]%
        {Hansen2021}
\bibfield{author}{\bibinfo{person}{Jakob {Cetti Hansen}}, \bibinfo{person}{Adam
  {Husted Kjelstrøm}}, {and} \bibinfo{person}{Andreas Pavlogiannis}.}
  \bibinfo{year}{2021}\natexlab{}.
\newblock \showarticletitle{Tight bounds for reachability problems on
  one-counter and pushdown systems}.
\newblock \bibinfo{journal}{{\it Inform. Process. Lett.}}
  \bibinfo{volume}{171} (\bibinfo{year}{2021}), \bibinfo{pages}{106135}.
\newblock
\showISSN{0020-0190}
\showDOI{%
\url{https://doi.org/10.1016/j.ipl.2021.106135}}


\bibitem[\protect\citeauthoryear{Chatterjee, Choudhary, and
  Pavlogiannis}{Chatterjee et~al\mbox{.}}{2018}]%
        {Chatterjee18}
\bibfield{author}{\bibinfo{person}{Krishnendu Chatterjee},
  \bibinfo{person}{Bhavya Choudhary}, {and} \bibinfo{person}{Andreas
  Pavlogiannis}.} \bibinfo{year}{2018}\natexlab{}.
\newblock \showarticletitle{Optimal Dyck Reachability for Data-Dependence and
  Alias Analysis}.
\newblock \bibinfo{journal}{{\em Proc. ACM Program. Lang.\/}}
  \bibinfo{volume}{2}, \bibinfo{number}{POPL}, Article
  \bibinfo{articleno}{Article 30} (\bibinfo{date}{Dec.} \bibinfo{year}{2018}),
  \bibinfo{numpages}{30}~pages.
\newblock


\bibitem[\protect\citeauthoryear{Chatterjee, Goharshady, Ibsen-Jensen, and
  Pavlogiannis}{Chatterjee et~al\mbox{.}}{2020}]%
        {Chatterjee2020}
\bibfield{author}{\bibinfo{person}{Krishnendu Chatterjee},
  \bibinfo{person}{Amir~Kafshdar Goharshady}, \bibinfo{person}{Rasmus
  Ibsen-Jensen}, {and} \bibinfo{person}{Andreas Pavlogiannis}.}
  \bibinfo{year}{2020}\natexlab{}.
\newblock \showarticletitle{Optimal and Perfectly Parallel Algorithms for
  On-demand Data-Flow Analysis}. In \bibinfo{booktitle}{{\em Programming
  Languages and Systems}}, \bibfield{editor}{\bibinfo{person}{Peter
  M{\"u}ller}} (Ed.). \bibinfo{publisher}{Springer International Publishing},
  \bibinfo{address}{Cham}, \bibinfo{pages}{112--140}.
\newblock
\showISBNx{978-3-030-44914-8}


\bibitem[\protect\citeauthoryear{Chatterjee, Ibsen-Jensen, Pavlogiannis, and
  Goyal}{Chatterjee et~al\mbox{.}}{2015}]%
        {Chatterjee2015}
\bibfield{author}{\bibinfo{person}{Krishnendu Chatterjee},
  \bibinfo{person}{Rasmus Ibsen-Jensen}, \bibinfo{person}{Andreas
  Pavlogiannis}, {and} \bibinfo{person}{Prateesh Goyal}.}
  \bibinfo{year}{2015}\natexlab{}.
\newblock \showarticletitle{Faster Algorithms for Algebraic Path Properties in
  Recursive State Machines with Constant Treewidth}. In
  \bibinfo{booktitle}{{\em Proceedings of the 42nd Annual ACM SIGPLAN-SIGACT
  Symposium on Principles of Programming Languages}} {\em
  (\bibinfo{series}{POPL '15})}. \bibinfo{publisher}{Association for Computing
  Machinery}, \bibinfo{address}{New York, NY, USA}, \bibinfo{pages}{97–109}.
\newblock
\showISBNx{9781450333009}
\showDOI{%
\url{https://doi.org/10.1145/2676726.2676979}}


\bibitem[\protect\citeauthoryear{Chatterjee, Kragl, Mishra, and
  Pavlogiannis}{Chatterjee et~al\mbox{.}}{2017}]%
        {ESOP17}
\bibfield{author}{\bibinfo{person}{Krishnendu Chatterjee},
  \bibinfo{person}{Bernhard Kragl}, \bibinfo{person}{Samarth Mishra}, {and}
  \bibinfo{person}{Andreas Pavlogiannis}.} \bibinfo{year}{2017}\natexlab{}.
\newblock \bibinfo{booktitle}{{\em Faster Algorithms for Weighted Recursive
  State Machines}}.
\newblock \bibinfo{publisher}{Springer Berlin Heidelberg},
  \bibinfo{address}{Berlin, Heidelberg}, \bibinfo{pages}{287--313}.
\newblock
\showISBNx{978-3-662-54434-1}
\showDOI{%
\url{https://doi.org/10.1007/978-3-662-54434-1_11}}


\bibitem[\protect\citeauthoryear{Chaudhuri}{Chaudhuri}{2008}]%
        {Chaudhuri2008}
\bibfield{author}{\bibinfo{person}{Swarat Chaudhuri}.}
  \bibinfo{year}{2008}\natexlab{}.
\newblock \showarticletitle{Subcubic Algorithms for Recursive State Machines}.
\newblock \bibinfo{journal}{{\em SIGPLAN Not.\/}} \bibinfo{volume}{43},
  \bibinfo{number}{1} (\bibinfo{date}{Jan.} \bibinfo{year}{2008}),
  \bibinfo{pages}{159–169}.
\newblock
\showISSN{0362-1340}
\showDOI{%
\url{https://doi.org/10.1145/1328897.1328460}}


\bibitem[\protect\citeauthoryear{Chistikov, Majumdar, and Schepper}{Chistikov
  et~al\mbox{.}}{2021}]%
        {Chistikov2021}
\bibfield{author}{\bibinfo{person}{Dmitry Chistikov}, \bibinfo{person}{Rupak
  Majumdar}, {and} \bibinfo{person}{Philipp Schepper}.}
  \bibinfo{year}{2021}\natexlab{}.
\newblock \bibinfo{title}{Subcubic Certificates for CFL Reachability}.
\newblock   (\bibinfo{year}{2021}).
\newblock
\showeprint[arxiv]{cs.FL/2102.13095}


\bibitem[\protect\citeauthoryear{Deutsch}{Deutsch}{1994}]%
        {Deutsch1994}
\bibfield{author}{\bibinfo{person}{Alain Deutsch}.}
  \bibinfo{year}{1994}\natexlab{}.
\newblock \showarticletitle{Interprocedural May-Alias Analysis for Pointers:
  Beyond <i>k</i>-Limiting}.
\newblock \bibinfo{journal}{{\em SIGPLAN Not.\/}} \bibinfo{volume}{29},
  \bibinfo{number}{6} (\bibinfo{date}{June} \bibinfo{year}{1994}),
  \bibinfo{pages}{230–241}.
\newblock
\showISSN{0362-1340}
\showDOI{%
\url{https://doi.org/10.1145/773473.178263}}


\bibitem[\protect\citeauthoryear{Englert, Lazi\'{c}, and Totzke}{Englert
  et~al\mbox{.}}{2016}]%
        {Englert2016}
\bibfield{author}{\bibinfo{person}{Matthias Englert}, \bibinfo{person}{Ranko
  Lazi\'{c}}, {and} \bibinfo{person}{Patrick Totzke}.}
  \bibinfo{year}{2016}\natexlab{}.
\newblock \showarticletitle{Reachability in Two-Dimensional Unary Vector
  Addition Systems with States is NL-Complete}. In \bibinfo{booktitle}{{\em
  Proceedings of the 31st Annual ACM/IEEE Symposium on Logic in Computer
  Science}} {\em (\bibinfo{series}{LICS '16})}. \bibinfo{publisher}{Association
  for Computing Machinery}, \bibinfo{address}{New York, NY, USA},
  \bibinfo{pages}{477–484}.
\newblock
\showISBNx{9781450343916}
\showDOI{%
\url{https://doi.org/10.1145/2933575.2933577}}


\bibitem[\protect\citeauthoryear{Ferles, Stephens, and Dillig}{Ferles
  et~al\mbox{.}}{2021}]%
        {Ferles2021}
\bibfield{author}{\bibinfo{person}{Kostas Ferles}, \bibinfo{person}{Jon
  Stephens}, {and} \bibinfo{person}{Isil Dillig}.}
  \bibinfo{year}{2021}\natexlab{}.
\newblock \showarticletitle{Verifying Correct Usage of Context-Free API
  Protocols}.
\newblock \bibinfo{journal}{{\em Proc. ACM Program. Lang.\/}}
  \bibinfo{volume}{5}, \bibinfo{number}{POPL}, Article \bibinfo{articleno}{17}
  (\bibinfo{date}{Jan.} \bibinfo{year}{2021}), \bibinfo{numpages}{30}~pages.
\newblock
\showDOI{%
\url{https://doi.org/10.1145/3434298}}


\bibitem[\protect\citeauthoryear{Ganardi, Majumdar, and Zetzsche}{Ganardi
  et~al\mbox{.}}{2021}]%
        {Ganardi2021}
\bibfield{author}{\bibinfo{person}{Moses Ganardi}, \bibinfo{person}{Rupak
  Majumdar}, {and} \bibinfo{person}{Georg Zetzsche}.}
  \bibinfo{year}{2021}\natexlab{}.
\newblock \bibinfo{title}{The complexity of bidirected reachability in valence
  systems}.
\newblock   (\bibinfo{year}{2021}).
\newblock
\showeprint[arxiv]{cs.FL/2110.03654}


\bibitem[\protect\citeauthoryear{Heintze and McAllester}{Heintze and
  McAllester}{1997}]%
        {Heintze97}
\bibfield{author}{\bibinfo{person}{Nevin Heintze} {and} \bibinfo{person}{David
  McAllester}.} \bibinfo{year}{1997}\natexlab{}.
\newblock \showarticletitle{On the Cubic Bottleneck in Subtyping and Flow
  Analysis}. In \bibinfo{booktitle}{{\em Proceedings of the 12th Annual IEEE
  Symposium on Logic in Computer Science}} {\em (\bibinfo{series}{LICS '97})}.
  \bibinfo{publisher}{IEEE Computer Society}, \bibinfo{address}{Washington, DC,
  USA}, \bibinfo{pages}{342--}.
\newblock
\showISBNx{0-8186-7925-5}
\showURL{%
\url{http://dl.acm.org/citation.cfm?id=788019.788876}}


\bibitem[\protect\citeauthoryear{Huang, Dong, Milanova, and Dolby}{Huang
  et~al\mbox{.}}{2015}]%
        {Huang2015}
\bibfield{author}{\bibinfo{person}{Wei Huang}, \bibinfo{person}{Yao Dong},
  \bibinfo{person}{Ana Milanova}, {and} \bibinfo{person}{Julian Dolby}.}
  \bibinfo{year}{2015}\natexlab{}.
\newblock \showarticletitle{Scalable and Precise Taint Analysis for Android}.
  In \bibinfo{booktitle}{{\em Proceedings of the 2015 International Symposium
  on Software Testing and Analysis}} {\em (\bibinfo{series}{ISSTA 2015})}.
  \bibinfo{publisher}{Association for Computing Machinery},
  \bibinfo{address}{New York, NY, USA}, \bibinfo{pages}{106–117}.
\newblock
\showISBNx{9781450336208}
\showDOI{%
\url{https://doi.org/10.1145/2771783.2771803}}


\bibitem[\protect\citeauthoryear{Kahlon}{Kahlon}{2008}]%
        {Kahlon2008}
\bibfield{author}{\bibinfo{person}{Vineet Kahlon}.}
  \bibinfo{year}{2008}\natexlab{}.
\newblock \showarticletitle{Parameterization as Abstraction: A Tractable
  Approach to the Dataflow Analysis of Concurrent Programs}. In
  \bibinfo{booktitle}{{\em Proceedings of the Twenty-Third Annual IEEE
  Symposium on Logic in Computer Science (LICS 2008)}}.
  \bibinfo{publisher}{IEEE Computer Society Press}, \bibinfo{pages}{181--192}.
\newblock


\bibitem[\protect\citeauthoryear{Lerch, Sp\"{a}th, Bodden, and Mezini}{Lerch
  et~al\mbox{.}}{2015}]%
        {Lerch2015}
\bibfield{author}{\bibinfo{person}{Johannes Lerch}, \bibinfo{person}{Johannes
  Sp\"{a}th}, \bibinfo{person}{Eric Bodden}, {and} \bibinfo{person}{Mira
  Mezini}.} \bibinfo{year}{2015}\natexlab{}.
\newblock \showarticletitle{Access-Path Abstraction: Scaling Field-Sensitive
  Data-Flow Analysis with Unbounded Access Paths}. In \bibinfo{booktitle}{{\em
  Proceedings of the 30th IEEE/ACM International Conference on Automated
  Software Engineering}} {\em (\bibinfo{series}{ASE '15})}.
  \bibinfo{publisher}{IEEE Press}, \bibinfo{pages}{619–629}.
\newblock
\showISBNx{9781509000241}
\showDOI{%
\url{https://doi.org/10.1109/ASE.2015.9}}


\bibitem[\protect\citeauthoryear{Leroux, Sutre, and Totzke}{Leroux
  et~al\mbox{.}}{2015a}]%
        {Leroux2015b}
\bibfield{author}{\bibinfo{person}{J{\'e}r{\^o}me Leroux},
  \bibinfo{person}{Gr{\'e}goire Sutre}, {and} \bibinfo{person}{Patrick
  Totzke}.} \bibinfo{year}{2015}\natexlab{a}.
\newblock \showarticletitle{On Boundedness Problems for Pushdown Vector
  Addition Systems}. In \bibinfo{booktitle}{{\em Reachability Problems}},
  \bibfield{editor}{\bibinfo{person}{Mikolai Bojanczyk},
  \bibinfo{person}{Slawomir Lasota}, {and} \bibinfo{person}{Igor Potapov}}
  (Eds.). \bibinfo{publisher}{Springer International Publishing},
  \bibinfo{address}{Cham}, \bibinfo{pages}{101--113}.
\newblock
\showISBNx{978-3-319-24537-9}


\bibitem[\protect\citeauthoryear{Leroux, Sutre, and Totzke}{Leroux
  et~al\mbox{.}}{2015b}]%
        {Leroux2015}
\bibfield{author}{\bibinfo{person}{J{\'e}r{\^o}me Leroux},
  \bibinfo{person}{Gr{\'e}goire Sutre}, {and} \bibinfo{person}{Patrick
  Totzke}.} \bibinfo{year}{2015}\natexlab{b}.
\newblock \showarticletitle{On the Coverability Problem for Pushdown Vector
  Addition Systems in One Dimension}. In \bibinfo{booktitle}{{\em Automata,
  Languages, and Programming}},
  \bibfield{editor}{\bibinfo{person}{Magn{\'u}s~M. Halld{\'o}rsson},
  \bibinfo{person}{Kazuo Iwama}, \bibinfo{person}{Naoki Kobayashi}, {and}
  \bibinfo{person}{Bettina Speckmann}} (Eds.). \bibinfo{publisher}{Springer
  Berlin Heidelberg}, \bibinfo{address}{Berlin, Heidelberg},
  \bibinfo{pages}{324--336}.
\newblock
\showISBNx{978-3-662-47666-6}


\bibitem[\protect\citeauthoryear{Lhot\'{a}k and Hendren}{Lhot\'{a}k and
  Hendren}{2006}]%
        {Lhotak06}
\bibfield{author}{\bibinfo{person}{Ond\v{r}ej Lhot\'{a}k} {and}
  \bibinfo{person}{Laurie Hendren}.} \bibinfo{year}{2006}\natexlab{}.
\newblock \showarticletitle{Context-Sensitive Points-to Analysis: Is It Worth
  It?}. In \bibinfo{booktitle}{{\em Proceedings of the 15th International
  Conference on Compiler Construction}} {\em (\bibinfo{series}{CC})}.
  \bibinfo{pages}{47--64}.
\newblock


\bibitem[\protect\citeauthoryear{Li, Zhang, and Reps}{Li et~al\mbox{.}}{2020}]%
        {Li2020}
\bibfield{author}{\bibinfo{person}{Yuanbo Li}, \bibinfo{person}{Qirun Zhang},
  {and} \bibinfo{person}{Thomas Reps}.} \bibinfo{year}{2020}\natexlab{}.
\newblock \showarticletitle{Fast Graph Simplification for Interleaved
  Dyck-Reachability}. In \bibinfo{booktitle}{{\em Proceedings of the 41st ACM
  SIGPLAN Conference on Programming Language Design and Implementation}} {\em
  (\bibinfo{series}{PLDI 2020})}. \bibinfo{publisher}{Association for Computing
  Machinery}, \bibinfo{address}{New York, NY, USA}, \bibinfo{pages}{780–793}.
\newblock
\showISBNx{9781450376136}
\showDOI{%
\url{https://doi.org/10.1145/3385412.3386021}}


\bibitem[\protect\citeauthoryear{Li, Zhang, and Reps}{Li et~al\mbox{.}}{2021}]%
        {Li2021}
\bibfield{author}{\bibinfo{person}{Yuanbo Li}, \bibinfo{person}{Qirun Zhang},
  {and} \bibinfo{person}{Thomas Reps}.} \bibinfo{year}{2021}\natexlab{}.
\newblock \showarticletitle{On the Complexity of Bidirected Interleaved
  Dyck-Reachability}.
\newblock \bibinfo{journal}{{\em Proc. ACM Program. Lang.\/}}
  \bibinfo{volume}{5}, \bibinfo{number}{POPL}, Article \bibinfo{articleno}{59}
  (\bibinfo{date}{Jan.} \bibinfo{year}{2021}), \bibinfo{numpages}{28}~pages.
\newblock
\showDOI{%
\url{https://doi.org/10.1145/3434340}}


\bibitem[\protect\citeauthoryear{Lu and Xue}{Lu and Xue}{2019}]%
        {Lu2019}
\bibfield{author}{\bibinfo{person}{Jingbo Lu} {and} \bibinfo{person}{Jingling
  Xue}.} \bibinfo{year}{2019}\natexlab{}.
\newblock \showarticletitle{Precision-Preserving yet Fast Object-Sensitive
  Pointer Analysis with Partial Context Sensitivity}.
\newblock \bibinfo{journal}{{\em Proc. ACM Program. Lang.\/}}
  \bibinfo{volume}{3}, \bibinfo{number}{OOPSLA}, Article
  \bibinfo{articleno}{148} (\bibinfo{date}{Oct.} \bibinfo{year}{2019}),
  \bibinfo{numpages}{29}~pages.
\newblock
\showDOI{%
\url{https://doi.org/10.1145/3360574}}


\bibitem[\protect\citeauthoryear{Madhusudan and Parlato}{Madhusudan and
  Parlato}{2011}]%
        {Madhusudan2011}
\bibfield{author}{\bibinfo{person}{P. Madhusudan} {and}
  \bibinfo{person}{Gennaro Parlato}.} \bibinfo{year}{2011}\natexlab{}.
\newblock \showarticletitle{The Tree Width of Auxiliary Storage}. In
  \bibinfo{booktitle}{{\em Proceedings of the 38th Annual ACM SIGPLAN-SIGACT
  Symposium on Principles of Programming Languages}} {\em
  (\bibinfo{series}{POPL '11})}. \bibinfo{publisher}{Association for Computing
  Machinery}, \bibinfo{address}{New York, NY, USA}, \bibinfo{pages}{283–294}.
\newblock
\showISBNx{9781450304900}
\showDOI{%
\url{https://doi.org/10.1145/1926385.1926419}}


\bibitem[\protect\citeauthoryear{Mathiasen and Pavlogiannis}{Mathiasen and
  Pavlogiannis}{2021}]%
        {Mathiasen2021}
\bibfield{author}{\bibinfo{person}{Anders~Alnor Mathiasen} {and}
  \bibinfo{person}{Andreas Pavlogiannis}.} \bibinfo{year}{2021}\natexlab{}.
\newblock \showarticletitle{The Fine-Grained and Parallel Complexity of
  Andersen’s Pointer Analysis}.
\newblock \bibinfo{journal}{{\em Proc. ACM Program. Lang.\/}}
  \bibinfo{volume}{5}, \bibinfo{number}{POPL}, Article \bibinfo{articleno}{34}
  (\bibinfo{date}{Jan.} \bibinfo{year}{2021}), \bibinfo{numpages}{29}~pages.
\newblock
\showDOI{%
\url{https://doi.org/10.1145/3434315}}


\bibitem[\protect\citeauthoryear{Milanova}{Milanova}{2020}]%
        {Milanova2020}
\bibfield{author}{\bibinfo{person}{Ana Milanova}.}
  \bibinfo{year}{2020}\natexlab{}.
\newblock \showarticletitle{FlowCFL: Generalized Type-Based Reachability
  Analysis: Graph Reduction and Equivalence of CFL-Based and Type-Based
  Reachability}.
\newblock \bibinfo{journal}{{\em Proc. ACM Program. Lang.\/}}
  \bibinfo{volume}{4}, \bibinfo{number}{OOPSLA}, Article
  \bibinfo{articleno}{178} (\bibinfo{date}{Nov.} \bibinfo{year}{2020}),
  \bibinfo{numpages}{29}~pages.
\newblock
\showDOI{%
\url{https://doi.org/10.1145/3428246}}


\bibitem[\protect\citeauthoryear{Pierre}{Pierre}{1992}]%
        {Pierre1992}
\bibfield{author}{\bibinfo{person}{Laurent Pierre}.}
  \bibinfo{year}{1992}\natexlab{}.
\newblock \showarticletitle{Rational indexes of generators of the cone of
  context-free languages}.
\newblock \bibinfo{journal}{{\em Theoretical Computer Science\/}}
  \bibinfo{volume}{95}, \bibinfo{number}{2} (\bibinfo{year}{1992}),
  \bibinfo{pages}{279 -- 305}.
\newblock
\showISSN{0304-3975}
\showDOI{%
\url{https://doi.org/10.1016/0304-3975(92)90269-L}}


\bibitem[\protect\citeauthoryear{Qadeer and Rehof}{Qadeer and Rehof}{2005}]%
        {Qadeer2005}
\bibfield{author}{\bibinfo{person}{Shaz Qadeer} {and} \bibinfo{person}{Jakob
  Rehof}.} \bibinfo{year}{2005}\natexlab{}.
\newblock \showarticletitle{Context-Bounded Model Checking of Concurrent
  Software}. In \bibinfo{booktitle}{{\em Proceedings of the 11th International
  Conference on Tools and Algorithms for the Construction and Analysis of
  Systems}} {\em (\bibinfo{series}{TACAS'05})}.
  \bibinfo{publisher}{Springer-Verlag}, \bibinfo{address}{Berlin, Heidelberg},
  \bibinfo{pages}{93–107}.
\newblock
\showISBNx{3540253335}
\showDOI{%
\url{https://doi.org/10.1007/978-3-540-31980-1_7}}


\bibitem[\protect\citeauthoryear{Rehof and F\"{a}hndrich}{Rehof and
  F\"{a}hndrich}{2001}]%
        {Rehof01}
\bibfield{author}{\bibinfo{person}{Jakob Rehof} {and} \bibinfo{person}{Manuel
  F\"{a}hndrich}.} \bibinfo{year}{2001}\natexlab{}.
\newblock \showarticletitle{Type-base Flow Analysis: From Polymorphic Subtyping
  to CFL-reachability}. In \bibinfo{booktitle}{{\em Proceedings of the 28th ACM
  SIGPLAN-SIGACT Symposium on Principles of Programming Languages}} {\em
  (\bibinfo{series}{POPL})}. \bibinfo{pages}{54--66}.
\newblock


\bibitem[\protect\citeauthoryear{Reps}{Reps}{1995}]%
        {Reps1995b}
\bibfield{author}{\bibinfo{person}{Thomas Reps}.}
  \bibinfo{year}{1995}\natexlab{}.
\newblock \showarticletitle{Shape Analysis As a Generalized Path Problem}. In
  \bibinfo{booktitle}{{\em Proceedings of the 1995 ACM SIGPLAN Symposium on
  Partial Evaluation and Semantics-based Program Manipulation}} {\em
  (\bibinfo{series}{PEPM '95})}. \bibinfo{publisher}{ACM},
  \bibinfo{pages}{1--11}.
\newblock


\bibitem[\protect\citeauthoryear{Reps}{Reps}{1997}]%
        {Reps97}
\bibfield{author}{\bibinfo{person}{Thomas Reps}.}
  \bibinfo{year}{1997}\natexlab{}.
\newblock \showarticletitle{Program Analysis via Graph Reachability}. In
  \bibinfo{booktitle}{{\em Proceedings of the 1997 International Symposium on
  Logic Programming}} {\em (\bibinfo{series}{ILPS})}. \bibinfo{pages}{5--19}.
\newblock


\bibitem[\protect\citeauthoryear{Reps}{Reps}{2000}]%
        {Reps00}
\bibfield{author}{\bibinfo{person}{Thomas Reps}.}
  \bibinfo{year}{2000}\natexlab{}.
\newblock \showarticletitle{Undecidability of Context-sensitive Data-dependence
  Analysis}.
\newblock \bibinfo{journal}{{\em ACM Trans. Program. Lang. Syst.\/}}
  \bibinfo{volume}{22}, \bibinfo{number}{1} (\bibinfo{year}{2000}),
  \bibinfo{pages}{162--186}.
\newblock


\bibitem[\protect\citeauthoryear{Reps, Horwitz, and Sagiv}{Reps
  et~al\mbox{.}}{1995}]%
        {Reps95}
\bibfield{author}{\bibinfo{person}{Thomas Reps}, \bibinfo{person}{Susan
  Horwitz}, {and} \bibinfo{person}{Mooly Sagiv}.}
  \bibinfo{year}{1995}\natexlab{}.
\newblock \showarticletitle{Precise Interprocedural Dataflow Analysis via Graph
  Reachability}. In \bibinfo{booktitle}{{\em POPL}}. \bibinfo{publisher}{ACM},
  \bibinfo{address}{New York, NY, USA}.
\newblock


\bibitem[\protect\citeauthoryear{Reps, Horwitz, Sagiv, and Rosay}{Reps
  et~al\mbox{.}}{1994}]%
        {Reps94}
\bibfield{author}{\bibinfo{person}{Thomas Reps}, \bibinfo{person}{Susan
  Horwitz}, \bibinfo{person}{Mooly Sagiv}, {and} \bibinfo{person}{Genevieve
  Rosay}.} \bibinfo{year}{1994}\natexlab{}.
\newblock \showarticletitle{Speeding Up Slicing}.
\newblock \bibinfo{journal}{{\em SIGSOFT Softw. Eng. Notes\/}}
  \bibinfo{volume}{19}, \bibinfo{number}{5} (\bibinfo{year}{1994}),
  \bibinfo{pages}{11--20}.
\newblock


\bibitem[\protect\citeauthoryear{Schmitz and Zetzsche}{Schmitz and
  Zetzsche}{2019}]%
        {Schmitz2019}
\bibfield{author}{\bibinfo{person}{Sylvain Schmitz} {and}
  \bibinfo{person}{Georg Zetzsche}.} \bibinfo{year}{2019}\natexlab{}.
\newblock \showarticletitle{Coverability Is Undecidable in One-Dimensional
  Pushdown Vector Addition Systems with Resets}. In \bibinfo{booktitle}{{\em
  Reachability Problems}}, \bibfield{editor}{\bibinfo{person}{Emmanuel Filiot},
  \bibinfo{person}{Rapha{\"e}l Jungers}, {and} \bibinfo{person}{Igor Potapov}}
  (Eds.). \bibinfo{publisher}{Springer International Publishing},
  \bibinfo{address}{Cham}, \bibinfo{pages}{193--201}.
\newblock
\showISBNx{978-3-030-30806-3}


\bibitem[\protect\citeauthoryear{Shang, Xie, and Xue}{Shang
  et~al\mbox{.}}{2012}]%
        {Shang2012}
\bibfield{author}{\bibinfo{person}{Lei Shang}, \bibinfo{person}{Xinwei Xie},
  {and} \bibinfo{person}{Jingling Xue}.} \bibinfo{year}{2012}\natexlab{}.
\newblock \showarticletitle{On-demand Dynamic Summary-based Points-to
  Analysis}. In \bibinfo{booktitle}{{\em Proceedings of the Tenth International
  Symposium on Code Generation and Optimization}} {\em (\bibinfo{series}{CGO
  '12})}. \bibinfo{publisher}{ACM}, \bibinfo{pages}{264--274}.
\newblock


\bibitem[\protect\citeauthoryear{Shivers}{Shivers}{1991}]%
        {Shivers1991}
\bibfield{author}{\bibinfo{person}{Olin~Grigsby Shivers}.}
  \bibinfo{year}{1991}\natexlab{}.
\newblock {\em \bibinfo{title}{Control-Flow Analysis of Higher-Order Languages
  of Taming Lambda}}.
\newblock \bibinfo{thesistype}{Ph.D. Dissertation}. \bibinfo{address}{USA}.
\newblock
\newblock
\shownote{UMI Order No. GAX91-26964.}


\bibitem[\protect\citeauthoryear{Sp\"{a}th, Ali, and Bodden}{Sp\"{a}th
  et~al\mbox{.}}{2019}]%
        {Spath2019}
\bibfield{author}{\bibinfo{person}{Johannes Sp\"{a}th}, \bibinfo{person}{Karim
  Ali}, {and} \bibinfo{person}{Eric Bodden}.} \bibinfo{year}{2019}\natexlab{}.
\newblock \showarticletitle{Context-, Flow-, and Field-Sensitive Data-Flow
  Analysis Using Synchronized Pushdown Systems}.
\newblock \bibinfo{journal}{{\em Proc. ACM Program. Lang.\/}}
  \bibinfo{volume}{3}, \bibinfo{number}{POPL}, Article \bibinfo{articleno}{48}
  (\bibinfo{date}{Jan.} \bibinfo{year}{2019}), \bibinfo{numpages}{29}~pages.
\newblock
\showDOI{%
\url{https://doi.org/10.1145/3290361}}


\bibitem[\protect\citeauthoryear{Sridharan and Bod\'{\i}k}{Sridharan and
  Bod\'{\i}k}{2006}]%
        {Sridharan2006}
\bibfield{author}{\bibinfo{person}{Manu Sridharan} {and}
  \bibinfo{person}{Rastislav Bod\'{\i}k}.} \bibinfo{year}{2006}\natexlab{}.
\newblock \showarticletitle{Refinement-based Context-sensitive Points-to
  Analysis for Java}.
\newblock \bibinfo{journal}{{\em SIGPLAN Not.\/}} \bibinfo{volume}{41},
  \bibinfo{number}{6} (\bibinfo{year}{2006}), \bibinfo{pages}{387--400}.
\newblock


\bibitem[\protect\citeauthoryear{Sridharan, Gopan, Shan, and
  Bod\'{\i}k}{Sridharan et~al\mbox{.}}{2005}]%
        {Sridharan2005}
\bibfield{author}{\bibinfo{person}{Manu Sridharan}, \bibinfo{person}{Denis
  Gopan}, \bibinfo{person}{Lexin Shan}, {and} \bibinfo{person}{Rastislav
  Bod\'{\i}k}.} \bibinfo{year}{2005}\natexlab{}.
\newblock \showarticletitle{Demand-driven Points-to Analysis for Java}. In
  \bibinfo{booktitle}{{\em OOPSLA}}.
\newblock


\bibitem[\protect\citeauthoryear{Tang, Wang, Xiong, Zhang, Wang, and
  Zhang}{Tang et~al\mbox{.}}{2017}]%
        {Tang2017}
\bibfield{author}{\bibinfo{person}{Hao Tang}, \bibinfo{person}{Di Wang},
  \bibinfo{person}{Yingfei Xiong}, \bibinfo{person}{Lingming Zhang},
  \bibinfo{person}{Xiaoyin Wang}, {and} \bibinfo{person}{Lu Zhang}.}
  \bibinfo{year}{2017}\natexlab{}.
\newblock \showarticletitle{Conditional Dyck-CFL Reachability Analysis for
  Complete and Efficient Library Summarization}. In \bibinfo{booktitle}{{\em
  Programming Languages and Systems}},
  \bibfield{editor}{\bibinfo{person}{Hongseok Yang}} (Ed.).
  \bibinfo{publisher}{Springer Berlin Heidelberg}, \bibinfo{address}{Berlin,
  Heidelberg}, \bibinfo{pages}{880--908}.
\newblock
\showISBNx{978-3-662-54434-1}


\bibitem[\protect\citeauthoryear{Vedurada and Nandivada}{Vedurada and
  Nandivada}{2019}]%
        {Vedurada2019}
\bibfield{author}{\bibinfo{person}{Jyothi Vedurada} {and}
  \bibinfo{person}{V.~Krishna Nandivada}.} \bibinfo{year}{2019}\natexlab{}.
\newblock \showarticletitle{Batch Alias Analysis}. In \bibinfo{booktitle}{{\em
  Proceedings of the 34th IEEE/ACM International Conference on Automated
  Software Engineering}} {\em (\bibinfo{series}{ASE '19})}.
  \bibinfo{publisher}{IEEE Press}, \bibinfo{pages}{936–948}.
\newblock
\showISBNx{9781728125084}
\showDOI{%
\url{https://doi.org/10.1109/ASE.2019.00091}}


\bibitem[\protect\citeauthoryear{Williams}{Williams}{2005}]%
        {Williams05}
\bibfield{author}{\bibinfo{person}{Ryan Williams}.}
  \bibinfo{year}{2005}\natexlab{}.
\newblock \showarticletitle{A New Algorithm for Optimal 2-Constraint
  Satisfaction and Its Implications}.
\newblock \bibinfo{journal}{{\em Theor. Comput. Sci.\/}} \bibinfo{volume}{348},
  \bibinfo{number}{2} (\bibinfo{date}{Dec.} \bibinfo{year}{2005}),
  \bibinfo{pages}{357–365}.
\newblock
\showISSN{0304-3975}
\showDOI{%
\url{https://doi.org/10.1016/j.tcs.2005.09.023}}


\bibitem[\protect\citeauthoryear{Williams}{Williams}{2019}]%
        {Williams19}
\bibfield{author}{\bibinfo{person}{Virginia~Vassilevska Williams}.}
  \bibinfo{year}{2019}\natexlab{}.
\newblock \bibinfo{booktitle}{{\em On some fine-grained questions in algorithms
  and complexity}}.
\newblock \bibinfo{type}{{T}echnical {R}eport}.
\newblock


\bibitem[\protect\citeauthoryear{Xu, Rountev, and Sridharan}{Xu
  et~al\mbox{.}}{2009}]%
        {Xu09}
\bibfield{author}{\bibinfo{person}{Guoqing Xu}, \bibinfo{person}{Atanas
  Rountev}, {and} \bibinfo{person}{Manu Sridharan}.}
  \bibinfo{year}{2009}\natexlab{}.
\newblock \showarticletitle{Scaling CFL-Reachability-Based Points-To Analysis
  Using Context-Sensitive Must-Not-Alias Analysis}. In \bibinfo{booktitle}{{\em
  Proceedings of the 23rd European Conference on ECOOP 2009 --- Object-Oriented
  Programming}} {\em (\bibinfo{series}{Genoa})}. \bibinfo{pages}{98--122}.
\newblock


\bibitem[\protect\citeauthoryear{Yan, Xu, and Rountev}{Yan
  et~al\mbox{.}}{2011}]%
        {Yan11}
\bibfield{author}{\bibinfo{person}{Dacong Yan}, \bibinfo{person}{Guoqing Xu},
  {and} \bibinfo{person}{Atanas Rountev}.} \bibinfo{year}{2011}\natexlab{}.
\newblock \showarticletitle{Demand-driven Context-sensitive Alias Analysis for
  Java}. In \bibinfo{booktitle}{{\em Proceedings of the 2011 International
  Symposium on Software Testing and Analysis}} {\em (\bibinfo{series}{ISSTA})}.
  \bibinfo{pages}{155--165}.
\newblock


\bibitem[\protect\citeauthoryear{Yannakakis}{Yannakakis}{1990}]%
        {Yannakakis90}
\bibfield{author}{\bibinfo{person}{Mihalis Yannakakis}.}
  \bibinfo{year}{1990}\natexlab{}.
\newblock \showarticletitle{Graph-theoretic Methods in Database Theory}. In
  \bibinfo{booktitle}{{\em Proceedings of the Ninth ACM SIGACT-SIGMOD-SIGART
  Symposium on Principles of Database Systems}} {\em (\bibinfo{series}{PODS})}.
  \bibinfo{pages}{230--242}.
\newblock


\bibitem[\protect\citeauthoryear{Yuan and Eugster}{Yuan and Eugster}{2009}]%
        {Yuan09}
\bibfield{author}{\bibinfo{person}{Hao Yuan} {and} \bibinfo{person}{Patrick
  Eugster}.} \bibinfo{year}{2009}\natexlab{}.
\newblock \showarticletitle{An Efficient Algorithm for Solving the Dyck-CFL
  Reachability Problem on Trees}. In \bibinfo{booktitle}{{\em Proceedings of
  the 18th European Symposium on Programming Languages and Systems: Held As
  Part of the Joint European Conferences on Theory and Practice of Software,
  ETAPS 2009}} {\em (\bibinfo{series}{ESOP})}. \bibinfo{pages}{175--189}.
\newblock


\bibitem[\protect\citeauthoryear{Zhang, Lyu, Yuan, and Su}{Zhang
  et~al\mbox{.}}{2013}]%
        {Zhang13}
\bibfield{author}{\bibinfo{person}{Qirun Zhang}, \bibinfo{person}{Michael~R.
  Lyu}, \bibinfo{person}{Hao Yuan}, {and} \bibinfo{person}{Zhendong Su}.}
  \bibinfo{year}{2013}\natexlab{}.
\newblock \showarticletitle{Fast Algorithms for Dyck-CFL-reachability with
  Applications to Alias Analysis} {\em (\bibinfo{series}{PLDI})}.
  \bibinfo{publisher}{ACM}.
\newblock


\bibitem[\protect\citeauthoryear{Zhang and Su}{Zhang and Su}{2017}]%
        {Zhang2017}
\bibfield{author}{\bibinfo{person}{Qirun Zhang} {and} \bibinfo{person}{Zhendong
  Su}.} \bibinfo{year}{2017}\natexlab{}.
\newblock \showarticletitle{Context-Sensitive Data-Dependence Analysis via
  Linear Conjunctive Language Reachability}.
\newblock \bibinfo{journal}{{\em SIGPLAN Not.\/}} \bibinfo{volume}{52},
  \bibinfo{number}{1} (\bibinfo{date}{Jan.} \bibinfo{year}{2017}),
  \bibinfo{pages}{344–358}.
\newblock
\showISSN{0362-1340}
\showDOI{%
\url{https://doi.org/10.1145/3093333.3009848}}


\bibitem[\protect\citeauthoryear{Zheng and Rugina}{Zheng and Rugina}{2008}]%
        {Zheng2008}
\bibfield{author}{\bibinfo{person}{Xin Zheng} {and} \bibinfo{person}{Radu
  Rugina}.} \bibinfo{year}{2008}\natexlab{}.
\newblock \showarticletitle{Demand-driven Alias Analysis for C}. In
  \bibinfo{booktitle}{{\em Proceedings of the 35th Annual ACM SIGPLAN-SIGACT
  Symposium on Principles of Programming Languages}} {\em
  (\bibinfo{series}{POPL '08})}. \bibinfo{publisher}{ACM},
  \bibinfo{pages}{197--208}.
\newblock


\end{thebibliography}

\clearpage
\appendix
\section{Proofs}
\subsection{Proofs of \cref{sec:d1d1}}\label{sec:proofs_d1d1}

\lemmatchingpaircounter*
\begin{proof}
Since $P$ is a shallow path, for all values $c$ of the first counter such that $\gamma\leq c \leq \delta$, 
we have $\Counter_2(P[:l_c])\leq 6\cdot n$.
Observe that there exist $6\cdot n^2+1$ distinct such values for $c$.
Thus we have at least $6\cdot n^2 +1$ matching pairs $(x_c, y_c)$ with $c$ taking values in the above interval.
It follows that there exist two distinct matching pairs $(x_{c_1}, y_{c_1})$  and $(x_{c_2}, y_{c_2})$ such that $\gamma \leq c_1 < c_2\leq \delta$ which agree both on the first node and the value of the second counter that $P$ has when reaching that node.
That is, $x_{c_1}=x_{c_2}$ and $\Counter_{2}(P[\colon l_{c_1}])=\Counter_{2}(P[\colon l_{c_2}])$.

The case for $(x_{c_3}, y_{c_3})$ and $(x_{c_4}, y_{c_4})$ follows by repeating the same argument for $y_c$.
\end{proof}

\lempathdeflation*
\begin{proof}
We first apply \cref{lem:matching_pair_counter} to obtain the matching pairs $(x_{c_j}, y_{c_j})_{1\leq j \leq 4}$ with the properties stated in the lemma, where $\gamma=12\cdot n^2 + 6\cdot n$.
Consider the following paths.
\[
P_1=P[\colon l_{c_1}]\qquad 
P_2=P[l_{c_1}\colon l_{c_2}]\qquad
P_3=P[l_{c_2}\colon r_{c_3}]\qquad
P_4=P[r_{c_3}\colon r_{c_4}]\qquad
P_5=P[r_{c_4}\colon]
\]
We construct the path $Q$ as
\[
Q=P_1\circ P_3 \circ P_4 \circ \ov{P}_3 \circ P_2 \circ P_3 \circ P_5
\]
Note that $Q$ respects the edges of $G$ due to \cref{item:mathcing_pair_counter2} of \cref{lem:matching_pair_counter}, while $\Counter_i(Q)=0$ for each $i\in[2]$.

We first argue that $Q$ is a valid path.
Due to \cref{item:mathcing_pair_counter3} of \cref{lem:matching_pair_counter}, the value of the second counter is identical in $P$ and $Q$ as we traverse the corresponding segments of $P$ and $Q$, and thus stays non-negative along $Q$.
This also proves \cref{item:path_deflation1}  of the lemma.
We now show that the the first counter also stays non-negative along $Q$.
First note that
\[
c_2-c_1\leq \delta-\gamma =  18\cdot n^2 + 6\cdot n + 1 - (12\cdot n^2 + 6\cdot n) = 6\cdot n^2 + 1 < \gamma/2
\numberthis\label{eq:path_counters}
\]
while $c_3-c_4 < \gamma/2$ by a similar analysis.
We now consider the value of the first counter along the various sub-paths of $Q$.
\begin{compactenum}
\item The value of the counter is identical in the $P_1$ and $P_5$ sub-paths of $Q$ and $P$, and thus non-negative along these sub-paths $Q$.
\item While traversing $P_3$ for the first time, the value of the counter is at least $\min(c_2, c_3) - (c_2-c_1)\geq 0$,
using \cref{eq:path_counters} and the fact that $c_2, c_3\geq \gamma$.
\item While traversing $P_4$, the value of the counter is at least $c_4-(c_2-c_1)\geq 0$, using \cref{eq:path_counters} and the fact that $c_4\geq \gamma$.
\item While traversing $P_3$ for the second time (that is, when $Q$ traverses $\ov{P}_3$),
the value of the counter is at least $\min(c_2, c_3) - (c_2-c_1) - (c_3-c_4)\geq 0$,
using \cref{eq:path_counters} and the fact that $c_2, c_3\geq \gamma$.
\item While traversing $P_2$, the value of the counter is at least $c_1-(c_3-c_4)\geq 0$,
using \cref{eq:path_counters} and the fact that $c_1\geq \gamma$.
\item Finally, while traversing $P_3$ for the third time, the value of the counter is at least $\min(c_2, c_3) - (c_3-c_4)\geq 0$, using \cref{eq:path_counters} and the fact that $c_2, c_3\geq \gamma$.
\end{compactenum}
Thus both counters stay non-negative along $Q$, and hence $Q$ is a valid path.

We now argue that $Q$ is a shallow path.
As we argued in the previous paragraph, the value of the second counter is identical in $P$ and $Q$ as we traverse the corresponding segments of $P$ and $Q$.
Moreover, it is straightforward that the value of the first counter is no larger in $Q$ than in $P$ as we traverse the corresponding sub-paths.
Hence $Q$ is a shallow path.

Finally, we argue that $|\CounterIndex_1^c(Q)|<|\CounterIndex_1^c(P)|$.
It is straightforward to verify that any point in which the counter reaches the value $c$ while traversing the sub-path $P_5$ of $Q$ also corresponds to a unique point where the counter attains the same value along the sub-path $P_5$ of $P$.
We argue that the first counter does not reach the value $c$ in any earlier sub-path of $Q$.
Since the counter reaches the value $c$ in the sub-path $P_3$ of $P$, this concludes that $|\CounterIndex_1^c(Q)|<|\CounterIndex_1^c(P)|$.
\begin{compactenum}
\item Since the counter does not reach $c$ while traversing $P_1$ in $P$, the same holds while traversing $P_1$ in $Q$.
\item In the first traversal of $P_3$ in $Q$, the first counter has decreased by $c_2-c_1>0$.
Hence it cannot reach the value $c$ in this sub-path.
\item The same holds for the sub-path $P_4$.
\item In the second traversal of $P_3$ in $Q$ (in particular, when $Q$ traverses $\ov{P}_3$), the first counter has decreased by $c_2-c_1 + c_3-c_4 >0$.
Hence it cannot reach the value $c$ in this sub-path.
\item Since the counter does not reach $c$ while traversing $P_2$ in $P$, the same holds while traversing $P_2$ in $Q$.
\item In the third traversal of $P_3$, the first counter has decreased by $c_3-c_4>0$.
Hence it cannot reach the value $c$ in this sub-path.
\end{compactenum}

The desired result follows.
\end{proof}

\subsection{Proofs of \cref{sec:dkd1}}\label{sec:proofs_dkd1}

\lemmatchingpair*
\begin{proof}
There are at most $n^2$ distinct matching pairs.
Since $\MaxStackHeight(P)\geq n^2$, we have at least $(n+1)^2$ matching pairs.
By the pigeonhole principle, there is a matching pair that appears in two different stack heights $h$ and $h'$.
The desired result follows.
\end{proof}

\lemboundedstackheight*
\begin{proof}
Assume that $\MaxStackHeight(P)\geq 2\cdot n^2+1$, otherwise we can take $Q=P$ and we are done.
Let $i_{\max}$ be the first point where $P$ attains its maximum stack height, and consider the matching node pairs of $P$.
By \cref{lem:matching_pair}, $P$ has two matching node pairs $(x_h, y_h)$ and $x_{h'}, y_{h'}$ such that
(i)~$h< h'\leq n^2$, (ii)~$x_h=x_{h'}$, and 
(iii)~$y_h=y_{h'}$.
Let $x=x_h=x_{h'}$ and $y=y_{h}=y_{h'}$.
Let $K\colon x\DKPath y$ be a shortest path witnessing the $\Dyck_k$-reachability of $y$ from $x$ in $G\Project \Alphabet_1$.
It is known that $\MaxStackHeight(K)\leq n^2$~\cite{Pierre1992}.
Let $P_{x}=P[l_h\colon l_{h'}]$ and $P_{y}=P[r_{h'}\colon r_h]$, i.e. these are the sub-paths of $P$ from $x_h$ to $x_{h'}$, and from $y_{h'}$ to $y_h$, respectively.
We construct the path
\[
R=P[\colon l_{h'}] \circ K \circ P[r_{h'}\colon r_{h}] \circ \ov{K} \circ P[l_{h'}\colon r_{h'}] \circ P[r_{h}\colon]
\]
In words, once we reach $x_{h'}$, we take the path $K$ to reach $y_{h'}$. At that point we traverse the sub-path $P[r_{h'}\colon l_{h}]$ in order to pop the stack that $P$ has accumulated between $x_h$ and $x_{h'}$.
Then we take $\ov{K}$ back to $x_{h'}$ and continue normally, but skip the previously traversed sub-path $P[r_{h'}\colon l_{h}]$.

Observe that either $\MaxStackHeight(R)<\MaxStackHeight(P)$, or $R$ has one less point in which it attains its maximum stack height compared to $P$.
This follows from the following two observations.
\begin{compactenum}
\item Since $\StackHeight(K)\leq n^2$, we have $\MaxStackHeight(P[\colon l_{h'}] \circ K)\leq 2\cdot n^2$, while the stack height of $P[\colon l_{h'}] \circ K \circ P[r_{h'}\colon r_{h}] \circ \ov{K}$ during the traversal of $\ov{K}$ is strictly below $2\cdot n^2$.
\item The number of times $R$ attains the stack height $\MaxStackHeight(P)$ along the \ColorOne $P[r_{h'}\colon r_{h}]$ is equal to the number of times that $P$ attains its maximum stack height along the prefix $P[\colon r_h]$ minus one,  as while traversing the sub-path $P[l_{h'}\colon r_{h'}]$, the stack height of $R$ is strictly smaller as compared to the stack height of $P$ during $P[l_{h'}\colon r_{h'}]$.
\end{compactenum}
Moreover, observe that $\Counter(R)=\Counter(P)$, as the only additional edges that $R$ has occur along the paths $K$ and $\ov{K}$, and thus cancel each-other out.

Finally, we can repeat the above process on $R$ instead of $P$ recursively, until we arrive at a path $Q$ with the stated properties.

The desired result follows.
\end{proof}

\lemboundedstackheightwitness*
\begin{proof}
Let $P\colon u\DKDOPath v$ be an arbitrary path that witnesses the reachability of $v$ from $u$.
Hence we have $P\colon u\DKPath v$ in $G\Project \Alphabet_1$.
By \cref{lem:bounded_stack_height}, there is a path $Q\colon u\DKPath v$ in $G\Project \Alphabet_1$ such that
(i)~$\MaxStackHeight(Q)\leq 2\cdot n^2$, and 
(ii)~$\Counter(Q)=\Counter(P)$.
Note that $Q$ might not be a valid path because the counter becomes negative along $Q$.
Let $\ell$ be the smallest integer such that $\ell + \Counter(Q')\geq 0$ for every $Q'\in \Prefixes(Q)$.
Let $\ell_u$ (resp., $\ell_v$) be the smallest integer such that $\ell_u\cdot \Counter(F_u) \geq \ell$
(resp., $\ell_v\cdot \Counter(F_v) \geq \ell$).
We construct $T$ as the path
\[
T=F_u^{\ell_u}\circ Q \circ F_v^{\ell_v}\circ \ov{Q} \circ \ov{F}^{\ell_u}_u \circ Q\circ \ov{F}^{\ell_v}_v
\]
where $F_u^{\ell_u}$ (resp., $F_v^{\ell_v}$) denotes $\ell_u$ (resp., $\ell_v$) iterations of $F_u$ (resp., $F_v$).
It is easy to verify that $\Counter(T)=0$ and $\Stack(T)=\epsilon$, while the counter remains non-negative along $T$.
Finally, at the end of each of the distinct sub-paths above, the stack of $T$ is empty.
Hence $\MaxStackHeight(T)\leq \max(\zeta, \MaxStackHeight(Q))= \max(\zeta, 2\cdot n^2)$.

The desired result follows.
\end{proof}

\lemboundedlengthpath*
\begin{proof}
We first establish that there is a path $Q$ with $\MaxCounter(Q)\leq n^2\cdot k^{2\cdot \delta}$ and $\MaxStackHeight(Q)\leq \MaxStackHeight(P)$.
Afterwards we argue that such a $Q$ satisfies the properties of the lemma.

Assume that $\MaxCounter(P)> n^2\cdot k^{2\cdot \delta}$, otherwise we are done.
Let $i_{\max}$ be the first index of $P$ in which the counter attains its maximum value.
For every $c\in [\MaxCounter(P)]$, we let
\[
l_{c}=\max(\{i\colon i\leq i_{\max} \text{ and } \Counter(P[\colon i])=c \}) \quad \text{and}\quad
r_{c}=\min(\{i\colon i\geq i_{\max} \text{ and }  \Counter(P[\colon i])=c \})
\]
Let $x_c$ (resp., $y_c$) be the last node of $P[\colon l_c]$ (resp., $P[\colon r_c]$), and $p_c=\Stack(P[\colon l_c])$ and $q_c=\Stack(P[\colon r_c])$ the corresponding stacks.
Note that, since there are $n$ nodes and $\MaxStackHeight(P)\leq \delta$, there are at most $n^2\cdot k^{2\cdot \delta}$ distinct pairs $(x,p)$ where $x$ is a node and $p$ is a stack that $P$ attains at some point.
Since $\MaxCounter(P)>n^2\cdot k^{2\cdot \delta}$, there exist $c, c'$ such that 
(i)~$c< c'$, (ii)~$x_c=x_{c'}$, and 
(iii)~$y_c=y_{c'}$.
Then we can construct another path $P'\colon u\DKDOPath v$ as
\[
P'=P[\colon l_{c}] \circ P[l_{c'}\colon r_{c'}] \circ P[r_{c}\colon]
\]
Note that $|P'|<|P|$, while $\MaxCounter(P')\leq \MaxCounter(P)$ and $\MaxStackHeight(P')\leq \MaxStackHeight(P)$.
Applying the above process for $P'$ recursively, we arrive at a path $Q$ with $\MaxCounter(Q)\leq n^2\cdot k^{2\cdot \delta}$ and $\MaxStackHeight(Q)\leq \MaxStackHeight(P)\leq \delta$.

Finally, it is clear that any irreducible path with stack height bounded by $\delta$ and counter bounded by $\gamma= n^2\cdot k^{2\cdot \delta}$ has length at most
\[
n\cdot \gamma\cdot \delta= n\cdot k^{\delta} \cdot n^2\cdot k^{2\cdot \delta} = n^3\cdot k^{3\cdot \delta}
\] 

The desired result follows
\end{proof}

\lemovcorrectness*
\begin{proof}
We separately argue about completeness and soundness.

\SubParagraph{Completeness.}
Assume that there is an orthogonal pair $(x_i, y_j)\in X\times Y$.
We construct the path $P\colon u\DKDOPath v$ as follows.
We first take the unique irreducible path $u\Path w$ that traverses the nodes $x_i^{\ell}$.
Afterwards, we take the unique irreducible pat $w\Path v$ that traverses the unique edge labeled with $y_j[\ell]$  when on node nodes $y_j^{\ell}$.
At this point, the stack of the path is empty but the counter equals $\sum_{\ell}(1-y_j[\ell])\cdot 2^{\ell}$.
Finally, we loop on $v$ reduce the counter to $0$, thereby reaching leading us to $v$.

\SubParagraph{Soundness.}
Assume that there is a (irreducible) path $P\colon u\DKDKPath v$.
Observe that $P$ does not traverse the same edge in both directions.
Indeed, on the way from $u$ to $w$ this is the case as every node $x_i^{\ell}$ has degree $2$.
Thus, when $P$ reaches $w$ for the first time, its stack encodes the bits of some vector $x_i$.
In turn, once the path reaches some node $y_j^{1}$, the stack completely dictates how to traverse to node $y_j^{\ell+1}$ from node $y_j^{\ell}$.
Although the path can transition from some node $y_j^{\ell}$ to $y_j^{\ell-1}$, it cannot do so by traversing the counter gadget $H_{2^{\ell}}$ backwards, as the counter value is bounded by $2^{\ell}-1$.
Hence, if $P$ returns to $w$, its stack will encode a vector (possibly not in $X$) that does not have a $0$ in a coordinate where $x_i$ has a $1$.
It is straightforward to verify that once the path $P$ has reached $v$, it has traversed the nodes $x_i^{\ell}$ and $y_{j}^{\ell}$, for some $i,j\in[n]$, such that $x_i$ and $y_j$ are orthogonal.

The desired result follows.
\end{proof}

\end{document}